\newtheorem{lemma}{\textbf{Lemma}}
\newtheorem{assumpt}{\textbf{Assumption}}
\newtheorem{propo}{\textbf{Proposition}}
\begin{document}
%
\title{QoE-Driven Video Transmission: Energy-Efficient Multi-UAV Network Optimization}
%
%
%
%

\author{Kesong Wu, Xianbin Cao,~\IEEEmembership{Senior Member,~IEEE}, Peng Yang,~\IEEEmembership{Member,~IEEE}, Zongyang Yu,

Dapeng Oliver Wu,~\IEEEmembership{Fellow,~IEEE}, and
Tony Q. S. Quek,~\IEEEmembership{Fellow,~IEEE}

\thanks{Manuscript received 10 January 2023; revised 4 June 2023; accepted 21 July 2023. 
This work was supported in part by the National Natural Science Foundation of China under Grant 61827901 and Grant 62201024, and in part by Application Research of UAV Inspection System in Facilities and Bridges of Shuohuang Railway under Grant GJNY-19-90. This article was presented in part at the 2023 IEEE/CIC International Conference on Communications in China \cite{Wu2023ICCCChina}. (Corresponding author: Peng Yang and Xianbin Cao)

K. Wu, X. Cao, P. Yang, and Z. Yu are with the School of Electronic and Information Engineering, Beihang University, Beijing 100191, China.

D. O. Wu is with the Department of Computer Science, City University of Hong Kong, Kowloon, Hong Kong.

T. Q. S. Quek is with the Information Systems Technology and Design Pillar, Singapore University of Technology and Design, Singapore 487372, Singapore.
}
}

%
%

\markboth{Journal of \LaTeX\ Class Files, 2023}
{Shell \MakeLowercase{\textit{et al.}}: Bare Demo of IEEEtran.cls for Computer Society Journals}
%



\IEEEtitleabstractindextext{%
\justifying\let\raggedright\justifying
\begin{abstract}
This paper is concerned with the issue of improving video subscribers' quality of experience (QoE) by deploying a multi-unmanned aerial vehicle (UAV) network. Different from existing works, we characterize subscribers' QoE by video bitrates, latency, and frame freezing and propose to improve their QoE by energy-efficiently and dynamically optimizing the multi-UAV network in terms of serving UAV selection, UAV trajectory, and UAV transmit power. The dynamic multi-UAV network optimization problem is formulated as a challenging sequential-decision problem with the goal of maximizing subscribers' QoE while minimizing the total network power consumption, subject to some physical resource constraints. We propose a novel network optimization algorithm to solve this challenging problem, in which a Lyapunov technique is first explored to decompose the sequential-decision problem into several repeatedly optimized sub-problems to avoid the curse of dimensionality. To solve the sub-problems, iterative and approximate optimization mechanisms with provable performance guarantees are then developed. Finally, we design extensive simulations to verify the effectiveness of the proposed algorithm. Simulation results show that the proposed algorithm can effectively improve the QoE of subscribers and is 66.75\% more energy-efficient than benchmarks. 
\end{abstract}

\begin{IEEEkeywords}
Video transmission, QoE, multi-UAV network, network optimization, energy efficiency
\end{IEEEkeywords}}

\maketitle

\IEEEdisplaynontitleabstractindextext

%
\IEEEpeerreviewmaketitle

\IEEEraisesectionheading{\section{Introduction}\label{sec:introduction}}
%
%
%
%
\IEEEPARstart{C}{ompared} to traditional terrestrial wireless networks, unmanned aerial vehicle (UAV) networks can provide a more rapid and flexible networking capability and have stronger resistance to natural disasters. In the case that terrestrial base stations (BSs) in hot spots are insufficient to handle burst flows or terrestrial communication infrastructure in remote areas cannot provide effective coverage, UAV networks with UAVs acting as aerial access points or relay nodes have been considered as a significant complement to terrestrial wireless networks
\cite{DBLP:journals/comsur/GeraciGALMCCRR22, DBLP:journals/comsur/BaltaciDOACS21, DBLP:journals/comsur/HassijaCAGLNYG21,Cao2018Airborne}. 
UAVs can actively establish line-of-sight (LoS) propagation links among them and terrestrial subscribers such that the network transmission performance can be significantly enhanced. 
Overall, UAV networks are widely employed in emergency communications and some dynamic traffic demand scenarios due to the UAV endurance improvement, the cost reduction, the advantages of flexible deployment, and the rapid recovery of communication services \cite{DBLP:journals/tnse/ChangT22,DBLP:journals/tnse/ShehzadAHJ21,DBLP:journals/tnse/XuOD20}. 
Besides, UAV networks have been considered as significant components of the air-space-ground integrated information network, which is recognized as a crucial development direction for 6G \cite{Tang2020Future}.

Consequently, UAV networks have recently received extensive attention from both academia and industry, and a large number of studies on UAV-assisted communications in terrestrial networks have been proposed during the past several years \cite{DBLP:journals/tnse/JiangMLHCH22,DBLP:journals/tnse/CaiYCCZYD22,DBLP:journals/tnse/WangJTXS21,DBLP:journals/tnse/ShimadaKK21,DBLP:journals/tnse/HanyuKK20,DBLP:journals/tnse/LiuZQGM22,DBLP:journals/tnse/NikoorooB22}.
UAV networks can provide various types of services, among which video streaming media has been the preferred and dominant way of presenting information, particularly in the field of emergency rescue \cite{Burhanuddin2022QoE}. Additionally, with the development of broadband mobile networks and the proliferation of smart mobile devices, video streaming accounts for an increasing share of network traffic. Predictably, the share will rise further in the coming 6G era.
Therefore, the research on video transmission in UAV networks is not only important for the rapid popularization of UAV networks in emergency communications, but also has theoretical guidance and practical reference significance for promoting the application and development of 6G.

Video transmission is characterized by the transmission of a large amount of information and the sequential playback, which leads to high requirements on the network for high throughput and low latency, especially in light of the trend toward high definition. The quality of video streaming transmission is closely related to the performance of networks. However, the uncertainty of the link quality of UAV networks caused by dynamic network topology, complex signal interference, and time-varying channel fading pose a great challenge
to video transmission research in UAV networks
\cite{Zhan2021Joint}. 

\subsection{State of the Art}
Many recent studies on UAV network optimization for effective video transmission have been conducted.
For example, in \cite{Ros2017Relay}, the authors introduced a relay placement mechanism to find the ideal locations for UAVs acting as relay nodes, and thus avoided the effects of UAV movements on the video dissemination.
The authors in \cite{Zhu2018Transmission} investigated
the joint optimization of UAV trajectory and transmission rate allocation for reliable video streaming delivery in UAV networks.
In \cite{Jiang2019Multimedia}, the authors considered the joint optimization of the UAV deployment and the content placement of a cache-enabled UAV for the maximization of link throughput when delivering multimedia data.
In conclusion, the above works of transmitting video streams using UAV networks mainly focused on improving the link throughput by optimizing the UAVs’ locations, trajectories, and network resources without considering the proprietary characteristics of video streaming. 

QoE, which can effectively capture the unique characteristics of video streaming, is an important measure of the performance of video transmission. Recently, many researchers have paid attention to the QoE-driven UAV network optimization. 
For instance, in \cite{Hu2019Optimization}, the authors created a QoE utility model and studied an average QoE maximization problem for video transmission in a UAV relay system by optimizing the system bandwidth and power allocation. 
However, the work assumed that QoE depended only on the video transmission rate.
In \cite{Tang2021QoE-Driven}, the authors proposed a QoE-driven dynamic pseudo analog wireless video broadcasting scheme, and the goal was to maximize the peak signal-to-noise ratio (PSNR) of a subscriber’s reconstructed videos by jointly optimizing the UAV's transmit power and trajectory. 
Nevertheless, the evaluation metric (exactly, PSNR), which does not consider the overall video sequences, is typically used to quantify the distortion degree of video images. 
For video streaming, the key performance indicators include video bitrates, frame freezing, and latency.

To this end, the authors in \cite{Burhanuddin2022QoE} studied a scenario of the UAV-assisted live video streaming transmission and modeled the QoE using latency, video resolution, and smoothness. The authors in \cite{Bera2020QoE} presented a QoE-guided content delivery framework for providing subscribers with on-demand content in a multi-UAV network and aimed to maximize the QoE by improving the average end-to-end delay for each subscriber and the average throughput of each UAV. 
The work in \cite{Chen2020Optimal} proposed to model QoE by the bitrate together with the frozen time of videos and formulated a problem of optimizing the system bandwidth along with UAVs' transmit power to maximize the total long-term QoE.
Compared with the previous works \cite{Hu2019Optimization,Tang2021QoE-Driven}, the studies in \cite{Burhanuddin2022QoE,Bera2020QoE, Chen2020Optimal} exploited more indicators in constructing a QoE model. However, the energy efficiency of the UAV networks were not investigated in \cite{Burhanuddin2022QoE,Hu2019Optimization,Tang2021QoE-Driven,Bera2020QoE,Chen2020Optimal}. Compared to traditional BSs, UAVs are sharply energy-sensitive, and then it is essential to optimize the energy consumption of the UAV networks. 

Therefore, the work in \cite{Zhang2021Energy-Efficient} jointly optimized video level selection and power allocation to maximize the energy efficiency of a UAV network, which was defined as the ratio of video bitrate to UAV power consumption. 
The authors in \cite{Zeng2020Resource} jointly optimize subscriber communication scheduling, UAV trajectory, transmit
power, and bandwidth allocation to maximize the energy efficiency of the UAV network and
satisfy subscribers' QoE requirements.
Besides, the issue of energy-efficient trajectory optimization for aerial video surveillance under QoS constraints was investigated in \cite{Zhan2022Energy-Efficient}. 
Although the issue of energy-efficient UAV video streaming was investigated in \cite{Zhang2021Energy-Efficient,Zeng2020Resource,Zhan2022Energy-Efficient}, they just discussed the case of deploying a single UAV, which had restricted coverage, limited communication capacity, and stringent energy limitation. The availability of a single UAV network is also not guaranteed during the entire mission.
In contrast, multiple or a swarm of UAVs can serve subscribers collaboratively to achieve communications of higher throughput and lower latency. The availability and efficiency of communications can also be improved by deploying a multi-UAV network.
Nevertheless, the scheduling of communication resources of a multi-UAV network is much more challenging than that of a single UAV network \cite{Yang2022Fresh}.
The problem of deploying a multi-UAV network to provide efficient services for subscribers by jointly optimizing the association among UAVs and subscribers, UAVs' trajectories, and UAVs' transmit power was investigated in \cite{DBLP:journals/tcom/XuZLYXT22}. However, this work did not look into the particularities of video transmission.
The authors in \cite{He2019QoE-Oriented} designed an intelligent and distributed allocation mechanism to allocate uplink bandwidth for multi-UAV video streaming. 
Nevertheless, they just aimed to resolve the insufficiency of wireless channel resources when a cluster of UAVs executed the video shooting and uploading mission, nor did they consider the interference between UAVs.
\subsection{Motivations and Contributions}
From the above works \cite{Burhanuddin2022QoE,Zhan2021Joint,Ros2017Relay,Zhu2018Transmission,Jiang2019Multimedia,Hu2019Optimization,Tang2021QoE-Driven,Bera2020QoE,Chen2020Optimal,Zhang2021Energy-Efficient,Zeng2020Resource,Zhan2022Energy-Efficient}, we observe that optimizing the resource allocation to improve the video transmission performance of UAV networks has become an important and popular research topic. 
However, the issue of improving subscribers' QoE, including video bitrates, latency, and frame freezing by controlling a multi-UAV network is not investigated. 
This paper proposes to completely control multiple UAVs in terms of their trajectories, transmit power, and serving UAV selection to proactively alter downlink video transmission channels. In this way, it is desired to achieve a significant improvement in the video transmission performance of the multi-UAV network, including higher video bitrates, lower latency, and alleviated frame freezing.
The main contributions of this paper are summarized as follows:

    1) We mathematically model the QoE of subscribers based on the comprehensive analysis of the characteristics of the multi-UAV network and video transmission. In particular, a novel video streaming utility model adaptively matching the video bitrate and the playback bitrate requirements of subscribers is designed. Besides, the time-varying startup and rebuffering latency is investigated and modeled, and a constraint on the minimum time-averaged achievable bitrate is enforced to alleviate frame freezing. 
    
    2) A dynamic multi-UAV network is desired to be deployed to deliver video streams to subscribers, considering the restricted UAV communication coverage. To this end, we formulate the problem of dynamically deploying a multi-UAV network to deliver video streams to subscribers as a sequential-decision problem. The goal of this problem is to maximize the QoE of subscribers and minimize the total power consumption of the multi-UAV network through the joint optimization and control of UAV transmit power, serving UAV selection, and UAV trajectory. The formulated problem is confirmed to be NP-hard or non-tractable, and solving the sequential-decision problem as a whole encounters the curse of dimensionality. 
    
    3) Considering the advantages of the Lyapunov technique in tackling sequential-decision optimization problems, we develop a Lyapunov-based network optimization algorithm with provable performance guarantees to solve the optimization problem. First, we decompose the optimization problem into several repeatedly optimized sub-problems to avoid the curse of dimensionality. Nevertheless, the sub-problems are still difficult to solve
    as they are mixed-integer and non-convex. We then introduce the key idea of iterative optimization to tackle the mixed-integer issue and explore Taylor expansions to identify the convex approximation of the non-convex feasible regions of the sub-problems. 
    
    4) Finally, we compare the proposed algorithm with various benchmarks to verify its effectiveness and design extensive simulations to discuss the impact of diverse parameters on the performance of the algorithm. Simulation results show that the proposed algorithm can effectively improve the QoE of subscribers and is 66.75\% more energy-efficient than benchmarks.
\begin{table}[!t]
\renewcommand{\arraystretch}{1.2}
\caption{{SUMMARY OF IMPORTANT ACRONYMS}}
\label{table_1}
\newcommand{\tabincell}[2]{\begin{tabular}{@{}#1@{}}#2\end{tabular}}
\centering
\begin{tabular}{ll}
\toprule
{\textbf{Acronyms}} & {\textbf{Meaning}} \\
\midrule
{2D} & {Two-dimensional} \\
{6G} &    {Sixth-Generation mobile networks} \\
{AtG} &   {Air-to-ground} \\
{BSs} &   {Base stations} \\
{CUMTP} & {\tabincell{l}{Circular UAV trajectory with the maximum transmit \\ power}} \\
{CUTR} &  {Circular UAV TRajectory} \\
{LoS}  &  {Line-of-sight} \\
{NNAS} &  {Nearest Neighbor ASsociation} \\
{NLoS} &  {Non-line-of-sight} \\
{NP-hard} & {Non-deterministic Polynomial hard} \\
{PSNR} &  {Peak signal-to-noise ratio} \\
{QoE}  & {Quality of experience} \\
{RHS}  &  {Right-hand-side} \\
{SUMTP} & {Stationary UAV with maximum transmit power} \\
{SUDE} &  {Stationary UAV DEployment} \\
{UAV}  &  {Unmanned aerial vehicle} \\
{VSS} &   {Video source station} \\
\bottomrule
\end{tabular}
\end{table}

\begin{table*}[!t]
	\renewcommand{\arraystretch}{1.2}
	\caption{{LIST OF SOME IMPORTANT NOTATIONS}}
	\label{table_2}
	\newcommand{\tabincell}[2]{\begin{tabular}{@{}#1@{}}#2\end{tabular}}
	\centering
	\begin{tabular}{|l||l|l||l|}
		\hline
		{Notation} & {Description} & {Notation} & {Description} \\\hline
		{${\mathcal K}$, ${\mathcal I}$} & {Set of UAVs and set of subscribers} & {$\bm{q_k}\left(t\right)$} & {Horizontal location of UAV $k$ at time slot $t$} \\ \cline{1-4}
		{$[{x_k^{(u)}}\left(t\right),{y_k^{(u)}}\left(t\right)]$} & {\tabincell{l}{2D horizontal coordinate of \\ UAV $k$ at  time slot $t$}} & {${D_{ik}}\left( t \right)$} & {\tabincell{l}{Distance between UAV $k$ and subscriber $i$ at time slot $t$}} \\ \cline{1-4}
		{$\bm{s_i}\left( t \right)$} & {Location of subscriber $i$ at time slot $t$}
		& {${h_{ik}}\left( t \right)$} & {Channel gain from UAV $k$ to subscriber $i$ at time slot $t$} \\ \cline{1-4}
		{$sin{r_{ik}}\left( t \right)$} & {\tabincell{l}{Signal-to-interference-plus-noise ratio \\ experienced by subscriber $i$ at time slot $t$}} & {${I_{ik}}\left( t \right)$} & {\tabincell{l}{Interference signal received by subscriber $i$ from other \\ UAVs except for serving UAV $k$ at time slot $t$}} \\ \cline{1-4}
		{${c_{ik}}\left( t \right)$} & {\tabincell{l}{Indicator of whether subscriber $i$ can select UAV \\ $k$ as its serving UAV}} & {${r_i}\left( t \right)$} & {Achievable bitrate of subscriber $i$ at time slot $t$} \\ \cline{1-4}
		{${{\bar r}_i}\left( t \right)$} & {\tabincell{l}{Time-averaged achievable bitrate of subscriber $i$ \\ during the first $t$ time slots
		}} & {${p_k}\left( t \right)$} & {Instantaneous transmit power of UAV $k$ at time slot $t$} \\ \cline{1-4}
		{${{\bar p}_k}\left( t \right)$} & {\tabincell{l}{Time-averaged transmit power of UAV $k$ during \\ the first $t$ time slots}} & {$d_{\min }$} & {Minimum safety distance between any two UAVs} \\ \cline{1-4}
        {$p_k^{tot}\left( t \right)$} & {\tabincell{l}{Total communication power consumption of \\ UAV $k$ at time slot $t$}} & {${s_{\max}}$} & {UAV’s maximum flight distance in a time slot} \\ \cline{1-4} 
		{${R_i}$} & {Required playback bitrate of subscriber $i$} & {$r_i^{th}$} & {\tabincell{l}{Threshold of time-averaged achievable bitrate of \\  subscriber $i$}} \\ \cline{1-4}
		{${\delta t}$} & {Duration of a time slot} & {$d\left( t \right)$} & {Total latency for all subscribers ${i \in {\mathcal I}}$ at time slot $t$} \\ \cline{1-4}
	\end{tabular}
\end{table*}

\begin{figure}[!t]
\centering
\includegraphics[width=3.2 in, height = 1.8 in]{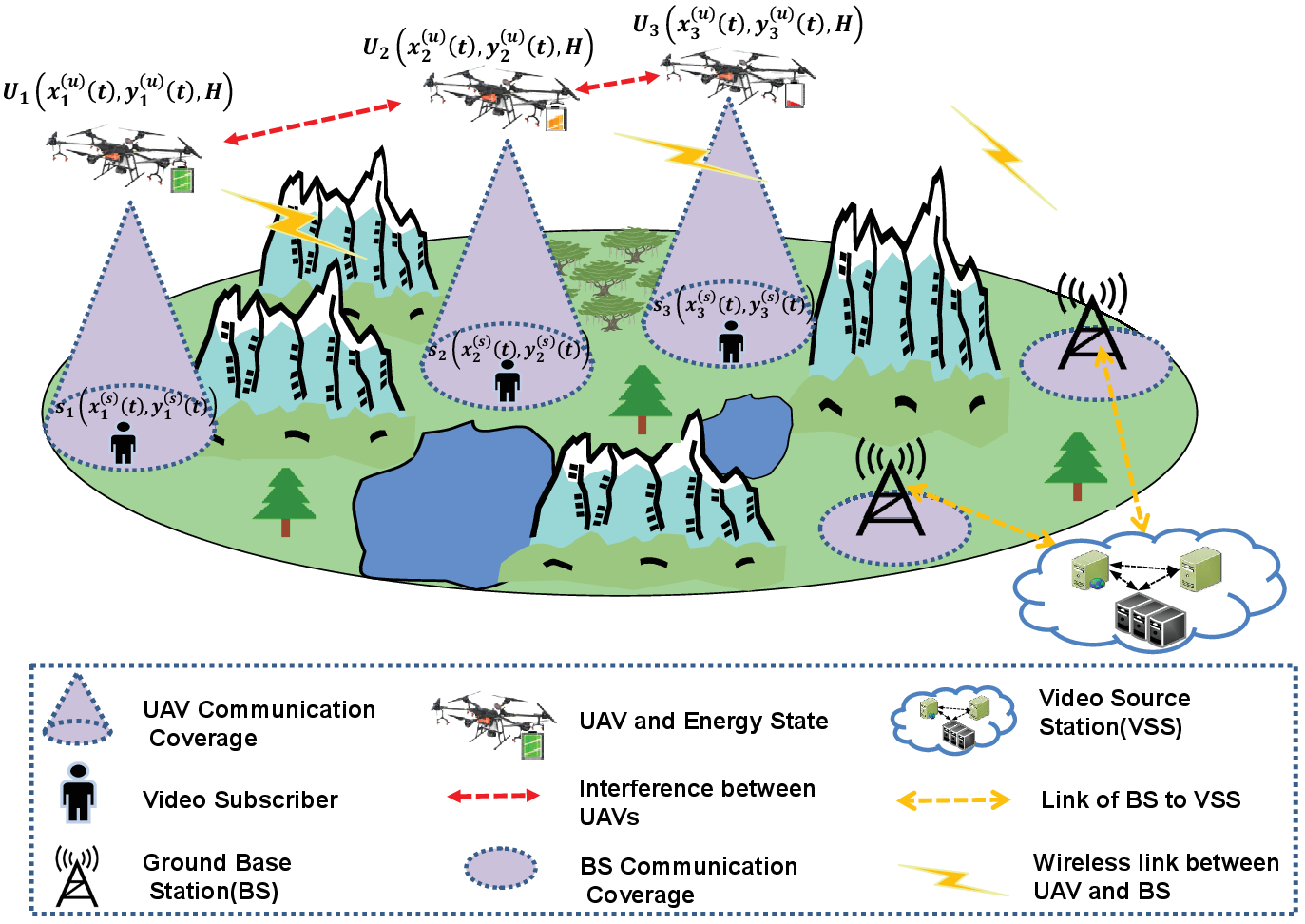}
\caption{Video transmission with a multi-UAV network.}
\label{fig_uav_video_scenario}
\end{figure}
\section{system model and Problem formulation}
\subsection{System Model}
This paper considers an application scenario for providing high-quality video streaming services for a number of ground subscribers by deploying a multi-UAV network, as shown in Fig. \ref{fig_uav_video_scenario}. 
In the considered scenario, there is a ground content provider acting as a video source station (VSS), several ground BSs, a multi-UAV network consisting of $N$ flying UAVs, the set of which is denoted by ${\mathcal K} =\{1, 2, 3, \ldots, N\}$, and $M$ subscribers, the set of which is ${\mathcal I}=\{1, 2, 3, \ldots, M\}$. The content provider has the capabilities of video collection, video transcoding, as well as video streaming push and can provide adaptive streaming services for subscribers. 
In this scenario, we investigate the issue that subscribers cannot directly access the VSS through BSs, mainly because of severe signal blocking. Considering that UAVs can establish LoS links with BSs, we propose to dynamically deploy a multi-UAV network with UAVs acting as flying relays to forward video streams from BSs to subscribers. 

To facilitate the mathematical model of the video transmission task, the time domain is discretized into a sequence of time slots, denoted by $t = \{1, 2, \ldots\}$.
As the communication coverage of UAVs is limited, the trajectories of UAVs will be continuously optimized to provide transmission services for subscribers fairly. We denote the time-varying two-dimensional (2D) horizontal coordinate of UAV $k$ at time slot $t$ by $\bm{q_k}\left(t\right) = [{x_k^{(u)}}\left(t\right),{y_k^{(u)}}\left(t\right)]^{\rm T},1 \le k \le N$. Denote the location of subscriber $i$ at time slot $t$ by $\bm{s_i}\left( t \right) = {\left[ {{x_i^{(s)}}\left( t \right),{y_i^{(s)}}\left( t \right)} \right]^{\rm T}},1 \le i \le M$. All UAVs are deployed at the same altitude, $H$, and the altitudes of all subscribers are negligible compared to $H$. Thus, the distance between UAV $k$ and subscriber $i$ at time slot $t$ can be given by ${D_{ik}}\left( t \right) = \sqrt {{H^2} + {{\left\| {\bm{q_k}\left( t \right) - \bm{s_i}\left( t \right)} \right\|}^2}}$.
Considering the stringent QoE requirements of subscribers, it is critical to provide video streaming services via air-to-ground (AtG) LoS links. The reliability and the throughput of non-line-of-sight (NLoS) propagation links cannot be guaranteed for video streaming services. High energy consumption will also be required to compensate for attenuation of NLoS propagation \cite{Ji2020Joint,Ji2021Joint}. 
Besides, according to the research results in \cite{Hourani2014Optimal}, the probability of LoS propagation can exceed $90\%$ when the elevation angle between a UAV and a ground subscriber is greater than a threshold $\theta$, which is related to the propagation environment, such as rural area, urban area, and dense urban. 
Therefore, the following condition can be held in order to approximate the establishment of a LoS link between a UAV and a ground subscriber.

\begin{equation}\label{eq:tan_theta}
\left\| {\bm{q_k}\left( t \right) - \bm{s_i}\left( t \right)} \right\| \le H{\tan ^{ - 1}}\theta ,\forall i,k,t
\end{equation}

The channel gain of an AtG LoS link can be calculated by the free-space path loss model{\cite{Zhan2021Joint}}, ${h_{ik}}(t) = \frac{{g_{ik}^Tg_{ik}^R{c^2}}}{{{{\left( {4\pi f_c{D_{ik}}\left( t \right)/{D_0}} \right)}^2}}}$
where ${h_{ik}}(t)$ denotes the channel gain from UAV $k$ to subscriber $i$ at time slot $t$, $g_{ik}^T$ and $g_{ik}^R$ respectively represent transmitting and receiving antenna gains, $c$ is the speed of light, $f_c$ is the carrier frequency, and ${D_0}$ is a far-field reference distance.
Let ${\omega _{ik}} = \frac{{g_{ik}^Tg_{ik}^R{c^2}D_0^2}}{{{{\left( {4\pi f_c} \right)}^2}}}$, ${h_{ik}}(t)$ can be rewritten as 
\begin{equation}\label{eq:h_ik_t_new}
{h_{ik}}(t) = \frac{{{\omega _{ik}}}}{{{H^2} + {{\left\| {\bm{q_k}\left( t \right) - \bm{s_i}\left( t \right)} \right\|}^2}}}
\end{equation}

Owing to the movement of UAVs and their limited coverage, a subscriber's serving UAV will be time-varying. We denote the serving UAV selection set at time slot $t$ as ${\mathcal C}(t)$. For any ${c_{ik}}\left( t \right) \in {\mathcal C}(t)$, ${c_{ik}}\left( t \right) = 1$ indicates that subscriber $i$ can select UAV $k$ as its serving UAV at time slot $t$; otherwise, ${c_{ik}}\left( t \right) = 0$. 
We assume that at time slot $t$, a subscriber can select at most one UAV as its serving UAV, and a UAV is allowed to deliver video streams to at most one subscriber. Mathematically, we have
\begin{equation}\label{eq:association_value}
0 \le \sum\nolimits_{k \in {\mathcal K}} {{c_{ik}}\left( t \right)}  \le 1,0 \le \sum\nolimits_{i \in {\mathcal I}} {{c_{ik}}\left( t \right)}  \le 1
\end{equation}

As all UAVs share the frequency resource in the considered scenario, a subscriber $i$ will receive its intended signal from UAV $k$ and interference caused by other UAVs at each time slot. For subscriber $i$, the strength of its received interference can be calculated by ${I_{ik}}\left( t \right) = \sum\limits_{j \in {\mathcal K}\backslash \left\{ k \right\}} {{p_j}\left( t \right){h_{ij}}\left( t \right)} $ where ${p_j}\left( t \right)$ denotes the instantaneous transmit power of UAV $j$ at time slot $t$. 
Then, the signal-to-interference-plus-noise ratio (SINR) experienced by subscriber $i$ at time slot $t$ can be given by $sin{r_{ik}}\left( t \right) = \frac{{{p_k}\left( t \right){h_{ik}}\left( t \right)}}{{{\sigma ^2} + {I_{ik}}\left( t \right)}}$, where ${\sigma ^2}$ is the noise power.
According to Shannon’s channel capacity formula, we can calculate the achievable bitrate of subscriber $i$ at time slot $t$ by
\begin{equation}\label{eq:achievable_rate_subscriber_i}
{r_i}\left( t \right) = \sum\nolimits_{k \in {\mathcal K}} {{c_{ik}}\left( t \right){{\log }_2}\left( {1 + sin{r_{ik}}\left( t \right)} \right)}
\end{equation}

For subscriber $i$, its time-averaged achievable bitrate during the first $t$ time slots can be given by ${{\bar r}_i}\left( t \right) = \frac{1}{t}\sum\nolimits_{\tau  = 1}^t {{r_i}\left( \tau  \right)}$.

It is worth noting that the onboard energy of UAVs is limited and needs to be efficiently utilized. In this paper, we mainly consider the communication power consumption of UAVs (exactly, for forwarding video streams) and model the total communication power consumption of UAV $k$ at time slot $t$ as {\cite{Ji2020Joint}}
\begin{equation}\label{eq:total_power_at_t}
p_k^{tot}\left( t \right) = {p_k}\left( t \right) + p_k^c
\end{equation}
where $p_k^c$ is the onboard circuit power consumption of UAV $k$. The time-averaged transmit power of UAV $k$ in the first $t$ time slots can be computed by ${{\bar p}_k}\left( t \right) = \frac{1}{t}\sum\limits_{\tau  = 1}^t {{p_k}\left( \tau  \right)} $.
Accordingly, the time-averaged total communication power consumption of UAV $k$ in the first $t$ time slots can be written as $\bar p_k^{tot}\left( t \right) = {{\bar p}_k}\left( t \right) + p_k^c$.
The upper-bounded constraints of $p_k^{tot}\left( t \right)$ and $\bar p_k^{tot}\left( t \right)$ can be written as $p_k^{tot}\left( t \right) \le {{\hat p}_k}$ and $\bar p_k^{tot}\left( t \right) \le {{\tilde p}_k},\forall k,t$, where ${{\hat p}_k}$ and ${{\tilde p}_k}$ denote the maximum instantaneous total power consumption and the maximum time-averaged total power consumption of UAV $k$, respectively\cite{Yang2021Proactive}.

At any time slot $t$, to avoid the UAV collision, the distance between any two UAVs must be greater than a value {\cite{Yang2022Fresh},\cite{Wu2018Joint}}, i.e., 
\begin{equation}\label{eq:minimum_safety_distance}
{\left\| {\bm{q_k} \left( t \right) - \bm{q_j}\left( t \right)} \right\|^2} \ge d_{\min }^2,\forall k,k \ne j,t
\end{equation}
where ${d_{\min }}$ denotes the minimum safety distance.

In addition, limited by the flight speed, the moving distance of a UAV in a time slot is constrained{\cite{Yang2022Fresh},\cite{Wu2018Joint}}, i.e., 
\begin{equation}\label{eq:maximum_flight_speed}
{\left\| {\bm{q_k}\left( t \right) - \bm{q_k}\left( {t - 1} \right)} \right\|^2} \le s_{\max }^2,\forall k,t
\end{equation}
where ${s_{\max }}$ denotes the UAV’s maximum flight distance in a time slot.

\subsection{QoE Model}
QoE has become an acknowledged performance evaluation standard in video streaming services. The accurate analysis of the key factors influencing QoE will be beneficial for enhancing subscriber satisfaction and improving the utilization of network resources.
In the research field of video streaming, high resolution, fluency, low latency, and smoothness of video streaming are four universal factors affecting QoE. These influencing factors, however, are challenging to be optimized simultaneously, especially in a time-varying multi-UAV network. 
Meanwhile, the dominant influencing factors are diverse for various network services. For example, some applications are sensitive to latency, while others are greatly affected by packet loss ratio. The smoothness of video streaming is closely related to subscribers' switching strategies towards video streaming of diverse quality and cannot be directly optimized by controlling the multi-UAV network. Considering the time-varying characteristics of the multi-UAV network and the continuous playback requirements of video streaming, we create a novel QoE model incorporating the video bitrate, the latency, and the frame freezing in this paper. 

The decoding capabilities and screen sizes of subscriber terminals may be different, and their playback bitrate requirements for the same video streaming may be diverse. In this case, implementing adaptive transmission and playback of video streaming is an effective way of improving QoE. Based on the throughput of the multi-UAV network and the receiving capabilities of subscribers, we design an adaptive video streaming utility model that implements the adaptive matching between subscribers' required playback bitrates and video bitrates. For a subscriber, its available video bitrate is upper-bounded by its achievable bitrate; otherwise, packet loss or frame freezing will occur. 
Therefore, the video streaming utility model $\phi \left({\bm{\bar r}\left( t \right)} \right)$ for all subscribers $i \in {\mathcal I}$, which also represents the profit of the multi-UAV network, can be written as
\begin{equation}\label{eq:utility_model}
\phi \left( {\bm{\bar r}\left( t \right)} \right) = \alpha \sum\limits_{i = 1}^M {{{\log }_2}\beta (1 + \frac{{B{{\bar r}_i}\left( t \right)}}{{{R_i}}})}
\end{equation}
where $ {\bm{\bar r}\left( t \right)} = \left( {{{\bar r}_1}\left( t \right), \ldots ,{{\bar r}_M}\left( t \right)} \right)$, ${{R_i}}$ represents the required playback bitrate of subscriber $i$, {$B$ represents the total bandwidth,} and $\alpha $, $\beta$ are both positive values that are different for various types of applications. 

Latency, especially the startup latency and the rebuffering latency in video streaming, is a crucial factor affecting QoE. The startup latency and the rebuffering latency are closely related to subscribers' achievable bitrates. The greater the achievable bitrate, the shorter the latency to receive a sufficient amount of video data. Latency is also greatly affected by a subscriber's selection of serving UAV in the considered application scenario.
Hence, the startup and rebuffering latency (briefly, latency) model ${d_i}\left( t \right)$ for any subscriber $i \in {\mathcal I}$ at time slot $t$ is given by
\begin{equation}\label{eq:latency_subscriber_i}
 {{d_i}\left( t \right) = \sum\limits_{k \in {\mathcal K}} {\frac{{L{c_{ik}}\left( t \right)}}{{B{{\log }_2}\left( {1 + sin{r_{ik}}\left( t \right)} \right)}}}  + (1 - \sum\limits_{k \in {\mathcal K}} {{c_{ik}}\left( t \right)} )\delta t}
\end{equation}
where ${\delta t}$ denotes the duration of a time slot, and $L$ is the length of transmitted video data in a time slot. {According to (\ref{eq:latency_subscriber_i}), we will obtain the latency ${d_i}\left( t \right) = \frac{L}{{B{{\log }_2}\left( {1 + sin{r_{ik}}\left( t \right)} \right)}}$, when subscriber $i$ can select a certain UAV $k$ as its serving UAV at time slot $t$; otherwise, the latency ${d_i}\left( t \right)$ is $\delta t$, namely the interval of one time slot.}
$d\left( t \right) = \sum\nolimits_i {{d_i}\left( t \right)}$ denotes the total latency of all subscribers $i$ at time slot $t$. 

The exhaustion of resources in a subscriber's playback buffer will lead to annoying frame freezing. When the playback bitrate remains unchanged, either the decrease of link throughput or the link interruption can result in the exhaustion of buffer resources. To tackle this issue, it is essential to constrain the time-averaged achievable bitrate ${{\bar r}_i}\left( t \right)$ by
\begin{equation}\label{eq:throughput_constrain}
{{\bar r}_i}\left( t \right) \ge {r_i^{th}},\forall i
\end{equation}
where, $r_i^{th}$ denotes the threshold of time-averaged achievable bitrate of subscriber $i$. In order to meet this constraint, the optimization of UAV trajectories and network resources is desired and should be investigated.

\subsection{Problem Formulation}
Considering that UAVs are energy-sensitive, our goal is to maximize subscribers' QoE and minimize the total power consumption of the multi-UAV network. To achieve this goal, we propose to jointly optimize the serving UAV selection ${\mathcal C}(t)$, the UAVs' trajectories ${\mathcal Q}(t)$, and the UAVs' transmit power ${\mathcal P}(t)$.
According to the above analysis, the optimization problem can be mathematically formulated as follows
\begin{subequations}\label{eq:original_problem}
\begin{alignat}{2}
&{\mathop {{\rm{Maximize}}}\limits_{{\mathcal C}\left( t \right),{\mathcal P}\left( t \right),{\mathcal Q}\left( t \right)} \mathop {\lim \inf }\limits_{t \to \infty } \phi \left( {\bm{\bar r}\left( t \right)} \right) - {\rho _2}\sum\limits_{k \in {\mathcal K}} {\bar p_k^{tot}\left( t \right)}  - }\notag\\
&{{\rho _1}\sum\limits_i ( \sum\limits_{k \in {\mathcal K}} {\frac{{L{c_{ik}}\left( t \right)}}{{B{{\log }_2}\left( {1 + sin{r_{ik}}\left( t \right)} \right)}}}  + (1 - \sum\limits_{k \in {\mathcal K}} {{c_{ik}}\left( t \right)} )\delta t)}\\
&{\rm s.t.}\mathop {\lim \inf}\limits_{t \to \infty } {{\bar r}_i}\left( t \right) \ge {r_i^{th}},\forall i\\
&\mathop {\lim \sup}\limits_{t \to \infty }  \bar p_k^{tot}\left( t \right) \le {{\tilde p}_k},\forall k \allowdisplaybreaks[4] \\
&p_k^{tot}\left( t \right) \le {{\hat p}_k},\forall k,t\\
&{p_k}\left( t \right) \ge p_k^{\min },\forall k,t\\
&{c_{ik}}\left( t \right) \in \{ 0,1\} ,\forall i,k,t\\
&\rm{constraints\text{ }(\ref{eq:tan_theta}),\text{ } (\ref{eq:association_value}),\text{ } (\ref{eq:minimum_safety_distance}),\text{ } (\ref{eq:maximum_flight_speed})\text{ } are \text{ }satisfied}
\end{alignat}
\end{subequations}
where ${{\rho _1}}$ and ${{\rho _2}}$ are both positive values that reflect the trade-off among the network profit, the total latency, and the total power consumption. One can choose ${{\rho _1}}$ and ${{\rho _2}}$ based on preferences of network operators and subscribers and the Pareto frontier of the formulated multi-objective optimization problem, as depicted in Fig. \ref{fig_Pareto_frontier}.
\begin{figure}[!t]
\centering
\includegraphics[width=3.2 in, height = 1.6 in]{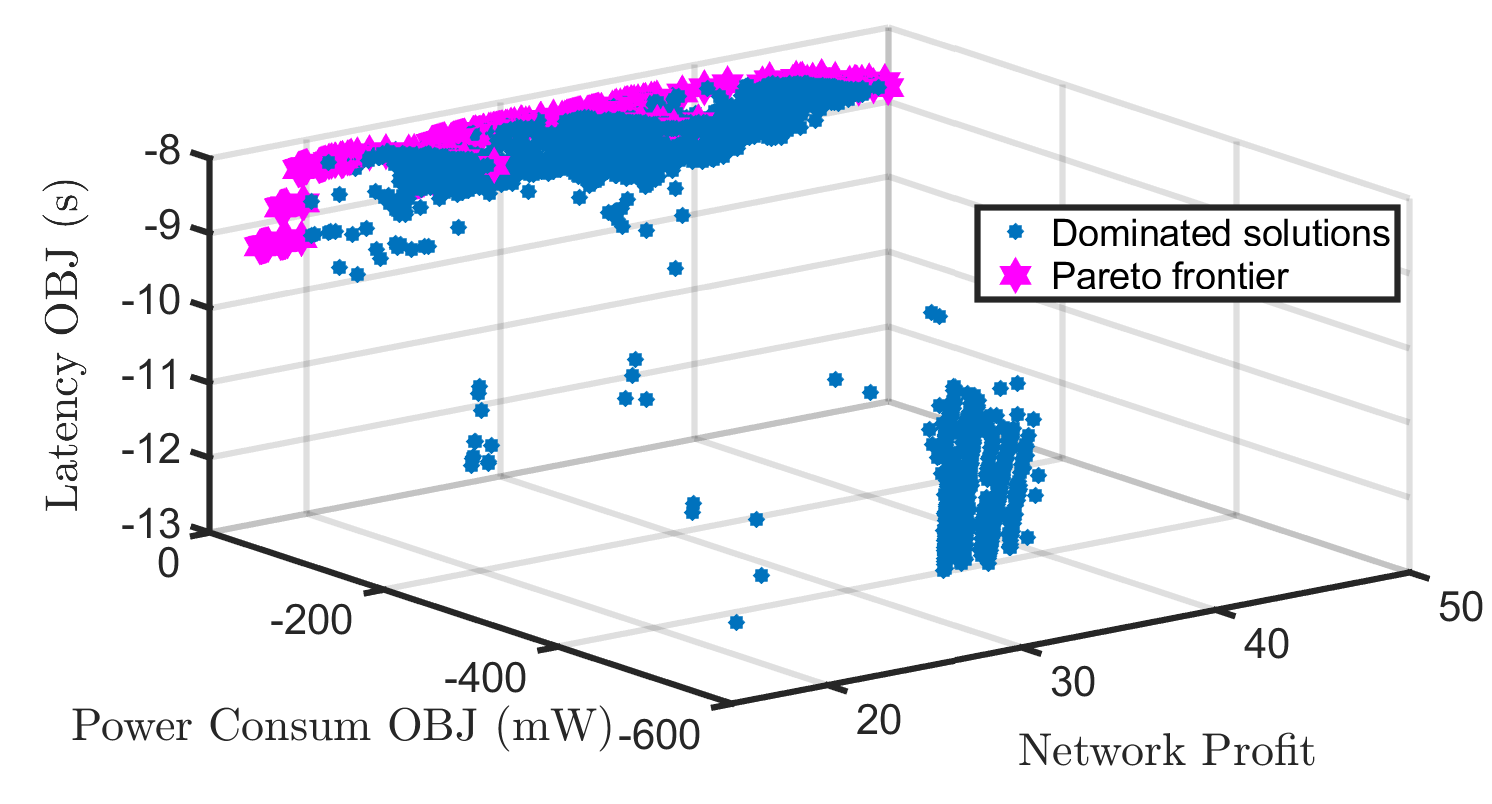}
\caption{Pareto frontier and dominated solutions of (\ref{eq:original_problem}).}
\label{fig_Pareto_frontier}
\end{figure}

The formulated problem is confirmed to be a sequential-decision optimization problem, the solution of which is highly challenging. 
The video streaming utility model in the objective function is a logarithm function of time-averaged achievable bitrates of subscribers during the first $t$ time slots, which not only makes the problem highly time-coupled but also greatly increases the computational complexity of solving the problem. For example, the number of time-varying decision variables of the problem will grow exponentially with the increase of time slot $t$. As a result, to solve this problem directly by exploring some conventional optimization algorithms is unacceptable in terms of computational complexity.
Besides, this optimization problem includes the summation term of 2-norm, complex fractional terms, continuous and integer variables, logarithmic-quadratic functions, and non-convex constraints. Thus, the formulated problem is also a mixed-integer non-convex optimization problem that may be NP-hard or even undecidable \cite{Belotti2013Mixed-integer}.

To alleviate this highly challenging problem, we first employ the Jensen's inequality to decouple the time-coupled objective function. Next, considering the advantages of Lyapunov optimization approach in tackling time-averaged problems, we decompose the sequential-decision problem into multiple repeatedly optimization sub-problems at different time slots using the Lyapunov approach. In this way, the curse of dimensionality can be effectively tackled, and the computational complexity of solving the original problem can be greatly reduced. Next, we transform the complex fractional terms into rotated quadratic cones by introducing some slack variables. Finally, alternative and approximate optimization mechanisms with provable performance guarantees are explored to handle the mixed-integer and the non-convex properties of the sub-problems, respectively.

\section{Lyapunov-Based Optimization Framework}
Observe that there is a logarithmic term of the time-averaged achievable bitrates in the objective function, which significantly hinders the theoretical analysis of the formulated problem. To tackle this issue, we introduce a set of auxiliary variables to transform the logarithmic term.
In particular, define an auxiliary vector $\bm{\lambda} \left( t \right) = \left( {{\lambda _1}\left( t \right), \ldots ,{\lambda _M}\left( t \right)} \right)$, let $0 \le {\lambda _i}\left( t \right) \le r_i^{\max },\forall i,t$, and $\mathop {\lim \inf}\limits_{t \to \infty } \left[ {{{\bar r}_i}\left( t \right) - {{\bar \lambda }_i}\left( t \right)} \right] = 0,\forall i$, where ${\bar \lambda _i}\left( t \right) = \frac{1}{t}\sum\limits_{\tau  = 1}^t {{\lambda _i}\left( \tau  \right)} $, we then have $\mathop {\lim \inf}\limits_{t \to \infty }  \phi \left( {{{\bar \lambda }_i}\left( t \right)} \right) = \mathop {\lim \inf}\limits_{t \to \infty }  \phi \left( {{{\bar r}_i}\left( t \right)} \right)$. In addition, we define an auxiliary function $a\left( t \right) = \phi \left( \bm{\lambda} \left( t \right) \right) = \alpha \sum\limits_{i = 1}^M {{{\log }_2}\beta (1 + \frac{B{{\lambda _i}\left( t \right)}}{{{R_i}}} )}$.
By the Jensen’s inequality, the following expression holds, $\bar{a}\left( t \right) = \frac{1}{t}\sum\limits_{\tau  = 1}^t {a\left( \tau  \right)}  \le \phi \left( \bm{\bar \lambda} \left( t \right) \right)$. Based on this inequality, the original problem (\ref{eq:original_problem}) can be transformed into the following problem.
\begin{subequations}\label{eq:transformed_problem}
\begin{alignat}{2}
&\mathop {\rm Maximize}\limits_{{\mathcal{C}}\left( t \right),{\mathcal P}\left( t \right),{\mathcal Q}\left( t \right),\bm{\lambda} \left( t \right)} \mathop {\lim \inf}\limits_{t \to \infty } \bar a\left( t \right) - {\rho _1}\sum\limits_i {{d_i}\left( t \right)}  - {\rho _2}\sum\limits_{k \in {\mathcal K}} {\bar p_k^{tot}\left( t \right)}\\
&{\rm s.t.} \text{ } \mathop {\lim \inf}\limits_{t \to \infty } \left[ {{{\bar r}_i}\left( t \right) - {{\bar \lambda }_i}\left( t \right)} \right] = 0,\forall i\\
&\mathop {\lim \inf}\limits_{t \to \infty }  \left[ {{{\bar r}_i}\left( t \right) - {r_i^{th}}} \right] \ge 0,\forall i\\
&\mathop {\lim \sup}\limits_{t \to \infty }  \left[ {{{\tilde p}_k} - \bar p_k^{tot}\left( t \right)} \right] \ge 0,\forall k\\
&\rm{constraints\text{ }(\ref{eq:original_problem}d),\text{ }(\ref{eq:original_problem}e),\text{ }(\ref{eq:original_problem}f),\text{ }(\ref{eq:original_problem}g)\text{ }are\text{ }satisfied.}
\end{alignat}
\end{subequations}

\textbf{Remark 1:} On one hand, owing to the utilization of Jensen's inequality, the maximum value of the objective function of (\ref{eq:transformed_problem}) is not greater than that of (\ref{eq:original_problem}). On the other hand, (\ref{eq:transformed_problem}a) can obtain the maximum value of (\ref{eq:original_problem}a) by letting ${\bar \lambda _i}\left( t \right) = {\bar r_i}^ * \left( t \right),\forall i\in {\mathcal I},t$, with $\left\{ {{{\bar r}_1}^ * \left( t \right),{{\bar r}_2}^ * \left( t \right), \ldots {{\bar r}_M}^ * \left( t \right)} \right\}$ being a collection of optimal time-averaged achievable bitrates in (\ref{eq:original_problem}). Therefore, (\ref{eq:transformed_problem}) and (\ref{eq:original_problem}) are equivalent.

The transformed problem (\ref{eq:transformed_problem}), however, is still difficult to be solved due to the existence of time-averaged terms. To this end, the Lyapunov drift-plus-penalty technique is explored to handle the time-averaged terms.
We introduce three sets of virtual queues $\left\{ {{X_i}\left( t \right)} \right\}$, $\left\{ {{Z_i}\left( t \right)} \right\}$, $\left\{ {{Y_k}\left( t \right)} \right\}$, and define
\begin{equation}\label{eq:virtual_queue_X}
{X_i}\left( t \right) = {X_i}\left( {t - 1} \right) + {r_i^{th}} - {r_i}\left( {t - 1} \right),\forall i,t
\end{equation}
\begin{equation}\label{eq:virtual_queue_Z}
{Z_i}\left( t \right) = {Z_i}\left( {t - 1} \right) + {\lambda _i}\left( {t - 1} \right) - {r_i}\left( {t - 1} \right),\forall i,t
\end{equation}
\begin{equation}\label{eq:virtual_queue_Y}
{Y_k}\left( t \right) = {Y_k}\left( {t - 1} \right) + p_k^{tot}\left( {t - 1} \right) - {{\tilde p}_k},\forall k,t
\end{equation}

In this way, to enforce the three groups of time-averaged constraints
(\ref{eq:transformed_problem}b), (\ref{eq:transformed_problem}c), and (\ref{eq:transformed_problem}d), the following stability requirement should be satisfied.
 \begin{equation}\label{eq:stability_requirement_Y}
 \mathop {\lim }\limits_{t \to \infty } {{E\left\{ \max \left\{ {f\left( t \right),0} \right\} \right\}} \mathord{\left/
 {\vphantom {{E\left\{ {{{[{Y_k}\left( t \right)]}^ + }} \right\}} t}} \right.
 \kern-\nulldelimiterspace} t} = 0
\end{equation}
where $f(t) \in \{X_i(t),Z_i(t),Y_k(t); \forall i,k\}$.

According to the Lyapunov optimization approach, the Lyapunov function $L\left( t \right)$, which is nonnegative and can be regarded as a scalar measure of constraint violation at time slot $t$, is usually defined as the sum of square of all virtual queues. 
For the convenience of calculation, $L\left( t \right)$ is defined as
$L\left( t \right) \buildrel \Delta \over = \frac{1}{2}\sum\nolimits_i {{{( {{{[{X_i}( t )]}^ + }} )}^2}}  + \frac{1}{2}\sum\nolimits_i {{{( {{{[{Z_i}( t )]}^ + }} )}^2}} 
 + \frac{1}{2}\sum\nolimits_{k \in {\mathcal K}} {{{( {{{[{Y_k}( t )]}^ + }} )}^2}}$. 

Correspondingly, the expression of Lyapunov drift-plus-penalty can be written as  
$\Delta \left( t \right) - V(a\left( t \right) - {\rho _1}\sum\limits_i {{d_i}\left( t \right)}  - {\rho _2}\sum\limits_{k \in {\mathcal K}} {p_k^{tot}\left( t \right)} )$,
where $\Delta \left( t \right) = L\left( {t + 1} \right) - L\left( t \right)$ is a Lyapunov drift,  
$- (a\left( t \right) - {\rho _1}\sum\limits_i {{d_i}\left( t \right)}  - {\rho _2}\sum\limits_{k \in {\mathcal K}} {p_k^{tot}\left( t \right)} )$
is a penalty function, and $V$ is a non-negative coefficient, weighing a trade-off between the constraint violation and optimality. Thus, $V$ can be chosen properly to ensure that the time average of the penalty function is arbitrarily close to the optimum. 
According to (\ref{eq:virtual_queue_X}), (\ref{eq:virtual_queue_Z}), and (\ref{eq:virtual_queue_Y}), we can obtain the following expressions
\begin{equation}\label{eq:virtual_queue_X_square}
\begin{array}{l}
\frac{1}{2}{( {{{[{X_i}( {t + 1} )]}^ + }} )^2} = \frac{1}{2}{( {{{[{X_i}( t )]}^ + }} )^2} +\\
 {[{X_i}( t )]^ + }( {{r_i^{th}} - {r_i}( t )} ) + \frac{1}{2}{( {{r_i^{th}} - {r_i}( t )} )^2}
\end{array}
\end{equation}
\begin{equation}\label{eq:virtual_queue_Z_square}
\begin{array}{l}
\frac{1}{2}{( {{{[{Z_i}\left( {t + 1} \right)]}^ + }} )^2} = \frac{1}{2}{( {{{[{Z_i}\left( t \right)]}^ + }} )^2} +\\
 {[{Z_i}\left( t \right)]^ + }\left( {{\lambda _i}\left( t \right) - {r_i}\left( t \right)} \right) + \frac{1}{2}{\left( {{\lambda _i}\left( t \right) - {r_i}\left( t \right)} \right)^2}
\end{array}
\end{equation}
\begin{equation}\label{eq:virtual_queue_Y_square}
\begin{array}{l}
\frac{1}{2}{( {{{[{Y_k}\left( {t + 1} \right)]}^ + }} )^2} = \frac{1}{2}{( {{{[{Y_k}\left( t \right)]}^ + }} )^2} +\\
 {[{Y_k}\left( t \right)]^ + }\left( {{p_k}\left( t \right) + p_k^c - {{\tilde p}_k}} \right) + \frac{1}{2}{\left( {{p_k}\left( t \right) + p_k^c - {{\tilde p}_k}} \right)^2}
\end{array}
\end{equation}

With (\ref{eq:virtual_queue_X_square})-(\ref{eq:virtual_queue_Y_square}) and $\Delta \left( t \right) = L\left( {t + 1} \right) - L\left( t \right)$, the upper bound of the drift-plus-penalty function at time slot $t$ can be given by
\begin{equation}\label{eq:upper_bound_drift_plus_penalty}
\begin{array}{l}
\Delta \left( t \right) - V(a\left( t \right) - {\rho _1}\sum\limits_i {{d_i}\left( t \right)}  - {\rho _2}\sum\limits_{k \in {\mathcal K}} {p_k^{tot}\left( t \right)} )\\
 \le \sum\limits_i {{{\left( {r_i^{\max }} \right)}^2}}  + \sum\limits_{k \in {\mathcal K}} {\frac{{{{\left( {{{\hat p}_k}} \right)}^2}}}{2}}  + \sum\limits_i {{{[{X_i}\left( t \right)]}^ + }r_i^{th}} - \\
 \sum\limits_{k \in {\mathcal K}} {{{[{Y_k}\left( t \right)]}^ + }\left( {{{\tilde p}_k} - p_k^c} \right)}  + V{\rho _2}\sum\limits_{k \in {\mathcal K}} {p_k^c}  - V\phi \left( {\bm {\lambda} \left( t \right)} \right)+ \\
 \sum\limits_i {{{[{Z_i}\left( t \right)]}^ + }{\lambda _i}\left( t \right)}  + \sum\limits_{k \in {\mathcal K}} {\left\{ {V{\rho _2} + {{[{Y_k}\left( t \right)]}^ + }} \right\}{p_k}\left( t \right)}-\\
 \sum\limits_i {\left\{ {{{[{X_i}\left( t \right)]}^ + } + {{[{Z_i}\left( t \right)]}^ + }} \right\}{r_i}\left( t \right)}  + V{\rho _1}\sum\limits_i {{d_i}\left( t \right)} 
\end{array}
\end{equation}

As a consequence, the minimization of the drift-plus-penalty item can be approximated by minimizing its upper bound, i.e., the right-hand-side (RHS) expression of (\ref{eq:upper_bound_drift_plus_penalty}). Then, the optimization problem can be greedily solved by minimizing the upper bound of the drift-plus-penalty function at each time slot. From (\ref{eq:upper_bound_drift_plus_penalty}), we can also observe that its RHS expression can be decomposed into three types of independent items, including constant items, auxiliary variable-related items, and resource optimization-related items (exactly, the UAVs' transmit power ${\mathcal P}(t)$, the UAVs' trajectories ${\mathcal Q}(t)$, and the serving UAV selection ${\mathcal C}(t)$).

In summary, the Lyapunov-based optimization framework of mitigating (\ref{eq:transformed_problem}) can be decomposed into the following repeated optimization sub-problems of a two-layered structure. 
\begin{itemize}
\item \textbf{Auxiliary-variable-layer optimization:}
Optimize (\ref{eq:Auxiliary_variable_optimization}) to obtain the optimal ${\lambda _i}\left( t \right)$ for $\forall i \in {\mathcal I}$.
\begin{subequations}\label{eq:Auxiliary_variable_optimization}
\begin{alignat}{2}
&\mathop {\rm Minimize}\limits_{\bm{\lambda} \left( t \right)}  - V\phi \left(\bm {\lambda} \left( t \right) \right) + \sum\limits_i {{{[{Z_i}\left( t \right)]}^ + }} {\lambda _i}\left( t \right)\\
&{\rm s.t.} \text{ } 0 \le {\lambda _i}\left( t \right) \le r_i^{\max },\forall i \in {\mathcal I},t
\end{alignat}
\end{subequations}

\item \textbf{Resource-layer optimization:}
The serving UAV selection, the UAVs' transmit power, and the UAVs' trajectories will be obtained by solving the following multi-objective optimization sub-problem.
\begin{subequations}\label{eq:association_power_trajectories_optimization}
\begin{alignat}{2}
&\mathop {\rm Minimize}\limits_{{\mathcal C}\left( t \right),{\mathcal P}\left( t \right),{\mathcal Q}\left( t \right)} \sum\limits_{k \in {\mathcal K}} {\left\{ {V{\rho _2} + {{[{Y_k}\left( t \right)]}^ + }} \right\}{p_k}\left( t \right)} -\notag\\
& \sum\limits_i {\left\{ {{{[{X_i}\left( t \right)]}^ + } + {{[{Z_i}\left( t \right)]}^ + }} \right\}{r_i}\left( t \right)} + V{\rho _1}\sum\limits_i {{d_i}\left( t \right)} \\
&{\rm s.t.} \text{ } \rm{constraint \text{ }(\ref{eq:transformed_problem}e)\text{ }is\text{ }satisfied.}
\end{alignat}
\end{subequations}
\end{itemize}
\section{Problem solution and Algorithm Design} 
\subsection{Solution to the Auxiliary-Variable-Layer Sub-problem}
As the auxiliary function $\phi \left( \bm{\lambda} \left( t \right) \right)$ is the total of all individual logarithmic functions, we can then divide this sub-problem into $M$ individually optimized sub-problems, each of which can be formulated as
\begin{subequations}\label{eq:Auxiliary_variable_sub_problem}
\begin{alignat}{2}
&\mathop {\rm Minimize}\limits_{{\lambda _i}\left( t \right)}  - V\alpha {\log _2}\beta (1 + \frac{B{{\lambda _i}\left( t \right)}}{{{R_i}}})  + {[{Z_i}\left( t \right)]^ + }{\lambda _i}\left( t \right)\\
&{\rm s.t.} \text{ } 0 \le {\lambda _i}\left( t \right) \le r_i^{\max },\forall i \in {\mathcal I},t
\end{alignat}
\end{subequations}

(\ref{eq:Auxiliary_variable_sub_problem}a) is a convex function of ${\lambda _i}\left( t \right)$, and its closed-form solution can be obtained by calculating the derivative. Let $\frac{{\partial f}}{{\partial {\lambda _i}\left( t \right)}} = {[{Z_i}\left( t \right)]^ + } - \frac{{V\alpha }}{{\beta \left( {\frac{{{R_i}}}{B} + {\lambda _i}\left( t \right)} \right)\ln 2}} = 0$, we can obtain
\begin{equation}\label{eq:Auxiliary_variable_solution}
{\lambda _i}\left( t \right) = \left\{ {\begin{array}{*{20}{c}}
{r_i^{\max },}&{{{[{Z_i}\left( t \right)]}^ + } = 0}\\
{\min \left\{ {{{\left[ {\frac{{V\alpha \ln^{-1} 2}}{{{{[{Z_i}\left( t \right)]}^ + }\beta }} - \frac{{{R_i}}}{B}} \right]}^ + },r_i^{\max }} \right\},}&{\rm otherwise}
\end{array}} \right.
\end{equation}

\subsection{Solution to the Resource-Layer Sub-problem}
It can be observed that (\ref{eq:association_power_trajectories_optimization}) includes some logarithmic-quadratic terms and non-convex constraints. It is also a multi-objective optimization problem involving both integer and continuous decision variables. 
As a result, (\ref{eq:association_power_trajectories_optimization}) is difficult to be solved directly. To this end, an iterative optimization scheme is adopted to solve (\ref{eq:association_power_trajectories_optimization}) in this paper.

\subsubsection{Solution to serving UAV selection sub-problem} 
Given UAVs' trajectories ${\mathcal Q}(t)$ and UAVs' transmit power ${\mathcal P}(t)$, the serving UAV selection ${\mathcal C}(t)$ at time slot $t$ can be optimized by solving the following sub-problem. 
\begin{subequations}\label{eq:association_transform_problem}
\begin{alignat}{2}
&\mathop {\rm Maximize}\limits_{{\rm{\mathcal C}}\left( t \right)} \sum\limits_i {\left\{ {{{[{X_i}\left( t \right)]}^ + } + {{[{Z_i}\left( t \right)]}^ + }} \right\}} \times \notag \allowdisplaybreaks[4] \\
&\sum\limits_{k \in {\mathcal K}} {{c_{ik}}\left( t \right){{\log }_2}\left( {1 + sin{r_{ik}}\left( t \right)} \right)} - V{\rho _1} \times \notag \allowdisplaybreaks[4] \\
& \sum\limits_i {(\sum\limits_{k \in {\mathcal K}} {\frac{{L{c_{ik}}\left( t \right)}}{{B{{\log }_2}\left( {1 + sin{r_{ik}}\left( t \right)} \right)}}}  + (1 - \sum\limits_{k \in {\mathcal K}} {{c_{ik}}\left( t \right)} )\delta t)} \allowdisplaybreaks[4] \\
&{\rm s.t.} \text{ }  \rm{ constraints \text{ } (\ref{eq:tan_theta}),\text{ }(\ref{eq:association_value}),\text{ }(\ref{eq:original_problem}f)\text{ } are \text{ } satisfied.}
\end{alignat}
\end{subequations}

It can be confirmed that (\ref{eq:association_transform_problem}) is an integer linear programming problem, and some optimization tools such as MOSEK can be employed
to alleviate this sub-problem effectively.

\subsubsection{Solution to the UAVs' transmit power control sub-problem} For any given serving UAV selection ${\mathcal C}(t)$, UAVs' trajectories ${\mathcal Q}(t)$, and UAVs' transmit power ${\mathcal P}(t-1)$ at the previous time slot $t - 1$, the UAVs' transmit power ${\mathcal P}(t)$ can be optimized by solving the following sub-problem.
\begin{subequations}\label{eq:transmission_power_problem}
\begin{alignat}{2}
 &\mathop {\rm Minimize}\limits_{{\mathcal P}\left( t \right)} \sum\limits_{k \in {\mathcal K}} {\left\{ {V{\rho _2} + {{[{Y_k}\left( t \right)]}^ + }} \right\}{p_k}\left( t \right)} + \notag\\
&  V{\rho _1}\sum\limits_i {\frac{L}{B{{r_i}\left( t \right)}}}  - \sum\limits_i {\left\{ {{{[{X_i}\left( t \right)]}^ + } + {{[{Z_i}\left( t \right)]}^ + }} \right\}{r_i}\left( t \right)}\\
&{\rm s.t.} \text{ } \rm{constraints \text{ } (\ref{eq:original_problem}d), \text{ }(\ref{eq:original_problem}e) \text{ } are \text{ } satisfied.}
\end{alignat}
\end{subequations}

To simplify the expression of the objective function, we introduce the slack variable ${{\eta _i}\left( t \right)}$, and let ${\eta _i}\left( t \right) \le {r_i}\left( t \right)$. Moreover, we can observe that (\ref{eq:transmission_power_problem}a) includes a latency-related item with ${{r_i}\left( t \right)}$ being the denominator of the fraction, which greatly hinders the theoretical analysis of (\ref{eq:transmission_power_problem}). To tackle this issue, we introduce the slack variable ${\xi _i}\left( t \right)$ and let $\frac{L}{{B{\xi _i}\left( t \right)}} \le {\eta _i}\left( t \right)$. Specifically, we can transform the latency-related sub-problem in (\ref{eq:transmission_power_problem}a) into the following optimization problem.
\begin{subequations}\label{eq:latency_related_transform_problem}
\begin{alignat}{2}
&\mathop {\rm Minimize}\limits_{{\mathcal P}\left( t \right),{\xi _i}\left( t \right),{\eta _i}\left( t \right)} \text{ } \sum\limits_i {{\xi _i}\left( t \right)} \\
&{\rm s.t.} \text{ } {\xi _i}\left( t \right){\eta _i}\left( t \right) \ge \frac{L}{B},\forall i,t\\
&{\eta _i}\left( t \right) \le {r_i}\left( t \right),\forall i,t\\
&{\eta _i}\left( t \right) \ge 0,{\xi _i}\left( t \right) \ge 0,\forall i,t
\end{alignat}
\end{subequations}

According to  (\ref{eq:achievable_rate_subscriber_i}) and (\ref{eq:latency_related_transform_problem}), the UAVs' transmit power optimization sub-problem (\ref{eq:transmission_power_problem}) can be reformulated as
\begin{subequations}\label{eq:transmission_power_transform_problem}
\begin{alignat}{2}
&\mathop {\rm Maximize}\limits_{{\mathcal P}\left( t \right),{\xi _i}\left( t \right),{\eta _i}\left( t \right)}  - \sum\limits_{k \in {\mathcal K}} {\left\{ {V{\rho _2} + {{[{Y_k}\left( t \right)]}^ + }} \right\}{p_k}\left( t \right)}  + \notag\\
&\sum\limits_i {\left\{ {{{[{X_i}\left( t \right)]}^ + } + {{[{Z_i}\left( t \right)]}^ + }} \right\}{\eta _i}\left( t \right)}  - V{\rho _1}\sum\limits_i {{\xi _i}\left( t \right)} \\
&{\rm s.t.} \text{ } \sum\limits_{k \in {\mathcal K}} {{c_{ik}}\left( t \right){{\log }_2} (1 + \frac{{{p_k}\left( t \right){h_{ik}}\left( t \right)}}{{{\sigma ^2}{\rm{ + }}\sum\limits_{j \in {\mathcal K}\backslash \left\{ k \right\}} {{p_j}\left( t \right){h_{ij}}\left( t \right)} }} )}\notag\\ 
&\ge {\eta _i}\left( t \right),\forall i,t\\
& \rm{\text{ }constraints \text{ }(\ref{eq:transmission_power_problem}b),\text{ } (\ref{eq:latency_related_transform_problem}b),\text{ }  (\ref{eq:latency_related_transform_problem}d)\text{ } are \text{ } satisfied.}
\end{alignat}
\end{subequations}

The constraint (\ref{eq:transmission_power_transform_problem}b), however, is non-convex, and thus (\ref{eq:transmission_power_transform_problem}) is a non-convex optimization sub-problem. To solve this problem effectively, we need to approximate the non-convex constraint as a convex constraint \cite{scutari2017parallel}. 
The following Proposition presents the approximation results.
\begin{propo}\label{propo_1}
\rm Given a local point $P_j^{(r)}(t)$, $\forall j,t$, one can approximately transform the non-convex constraint (\ref{eq:transmission_power_transform_problem}b) into the following convex one
\begin{equation}\label{eq:transmission_power_constraint_simplify}
\begin{array}{l}
\sum\limits_{k \in {\mathcal K}} {{c_{ik}}\left( t \right)( {{\hat \Lambda }_{i}}\left( t \right) - F_{ik}^{\left( r \right)}\left( t \right) )} - \\
  \sum\limits_{k \in {\mathcal K}} {( {c_{ik}}\left( t \right)\sum\limits_{j \in {\mathcal K}\backslash \left\{ k \right\}} {G_{ik}^{\left( r \right)}\left( t \right)( {p_j}\left( t \right) - p_j^{\left( r \right)}\left( t \right) )} )} \\
 \ge {\eta _i}\left( t \right),\forall i,t
\end{array}
\end{equation}
{\rm where, ${\hat \Lambda _{i}}\left( t \right) = {\log _2}( {\sigma ^2} + \sum\limits_{j \in {\mathcal K}} {{p_j}\left( t \right){h_{ij}}\left( t \right)} )$, $F_{ik}^{\left( r \right)}\left( t \right) = {\log _2}( {\sigma ^2}{\rm{ + }}\sum\limits_{j \in {\mathcal K}\backslash \left\{ k \right\}} {p_j^{\left( r \right)}\left( t \right){h_{ij}}\left( t \right)}  )$, and $G_{ik}^{\left( r \right)}\left( t \right) = \frac{{{h_{ij}}\left( t \right)}}{{( {\sigma ^2}{\rm{ + }}\sum\limits_{j \in {\mathcal K}\backslash \left\{ k \right\}} {p_j^{\left( r \right)}\left( t \right){h_{ij}}\left( t \right)} )\ln 2}} = \frac{{{h_{ij}}\left( t \right)}}{{{2^{F_{ik}^{\left( r \right)}\left( t \right)}}\ln 2}}$.}
\end{propo}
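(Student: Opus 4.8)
The plan is to exploit the difference-of-concave (DC) structure hidden in the achievable-rate term and to linearize only the \emph{subtracted} concave part around the local point $P_j^{(r)}(t)$; this produces a concave lower bound on the left-hand side of (\ref{eq:transmission_power_transform_problem}b) and hence a convex inner approximation of the non-convex feasible region.

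First I would rewrite the rate term. Since $1+sin{r_{ik}}(t)$ equals $(\sigma^2 + \sum_{j\in{\mathcal K}} p_j(t)h_{ij}(t))$ divided by $(\sigma^2 + \sum_{j\in{\mathcal K}\backslash\{k\}} p_j(t)h_{ij}(t))$, one obtains $\log_2(1+sin{r_{ik}}(t)) = {\hat \Lambda}_i(t) - \Lambda_{ik}(t)$, where ${\hat \Lambda}_i(t)$ is as in the statement and $\Lambda_{ik}(t) = \log_2(\sigma^2 + \sum_{j\in{\mathcal K}\backslash\{k\}} p_j(t)h_{ij}(t))$. Both ${\hat \Lambda}_i(t)$ and $\Lambda_{ik}(t)$ are logarithms of affine functions of ${\mathcal P}(t)$ and therefore concave, so the left-hand side of (\ref{eq:transmission_power_transform_problem}b) is a nonnegative-weighted sum of differences of concave functions --- which is exactly the source of non-convexity.

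Second I would Taylor-expand the subtracted concave term. Concavity of $\Lambda_{ik}(\cdot)$ implies that its first-order expansion at $P_j^{(r)}(t)$ is a global overestimator: $\Lambda_{ik}(t) \le F_{ik}^{(r)}(t) + \sum_{j\in{\mathcal K}\backslash\{k\}} G_{ik}^{(r)}(t)(p_j(t)-p_j^{(r)}(t))$, where $F_{ik}^{(r)}(t)$ is the value of $\Lambda_{ik}$ at the expansion point and $G_{ik}^{(r)}(t)$ is $\partial\Lambda_{ik}/\partial p_j(t)$ evaluated there; carrying out the derivative reproduces the expressions for $F_{ik}^{(r)}(t)$ and $G_{ik}^{(r)}(t)$ in the statement (in particular the $j=k$ term vanishes because $\Lambda_{ik}$ does not involve $p_k(t)$, and $2^{F_{ik}^{(r)}(t)} = \sigma^2 + \sum_{j\in{\mathcal K}\backslash\{k\}} p_j^{(r)}(t)h_{ij}(t)$). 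Substituting this bound, multiplying by $c_{ik}(t)\ge 0$, and summing over $k\in{\mathcal K}$ shows that $\sum_{k\in{\mathcal K}} c_{ik}(t)\log_2(1+sin{r_{ik}}(t))$ is bounded below by the left-hand side of (\ref{eq:transmission_power_constraint_simplify}). Hence every ${\mathcal P}(t)$ satisfying (\ref{eq:transmission_power_constraint_simplify}) also satisfies (\ref{eq:transmission_power_transform_problem}b), so (\ref{eq:transmission_power_constraint_simplify}) is a conservative (inner) restriction of the original constraint. Convexity then follows: with $c_{ik}(t)$ fixed in the power sub-problem, $\sum_{k\in{\mathcal K}} c_{ik}(t){\hat \Lambda}_i(t)$ reduces --- by the association constraint (\ref{eq:association_value}) --- to either ${\hat \Lambda}_i(t)$ or $0$, both concave in ${\mathcal P}(t)$, while all remaining terms are affine in ${\mathcal P}(t)$ and $\eta_i(t)$, so the superlevel set defined by (\ref{eq:transmission_power_constraint_simplify}) is convex. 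I would also record that the bound holds with equality at ${\mathcal P}(t)={\mathcal P}^{(r)}(t)$, which is what keeps the enclosing iterative (successive-convex-approximation) scheme monotone.

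The main obstacle is bookkeeping rather than any deep argument: one must keep the direction of the Taylor inequality straight --- because $\Lambda_{ik}$ enters with a minus sign, a global \emph{over}estimator of $\Lambda_{ik}$ yields a global \emph{under}estimator of the rate, exactly the direction that keeps (\ref{eq:transmission_power_transform_problem}b) satisfied --- and one must correctly drop the self-interference index $j=k$ from the gradient so that only the cross terms $j\neq k$ survive. It is also worth checking that the inner approximation is nonempty, which holds because ${\mathcal P}^{(r)}(t)$ is itself feasible for (\ref{eq:transmission_power_constraint_simplify}) whenever it is feasible for (\ref{eq:transmission_power_transform_problem}b). Everything else is a routine gradient evaluation and convexity check.
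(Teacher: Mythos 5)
Your proposal is correct and follows essentially the same route as the paper's proof: both identify the difference-of-concave structure of $\log_2(1+sinr_{ik}(t))$ and linearize the interference term $\log_2(\sigma^2+\sum_{j\neq k}p_j(t)h_{ij}(t))$ at the local point via a first-order Taylor expansion (the paper phrases this as a lower bound on the convex $-\Lambda_{ik}$, you as an upper bound on the concave $\Lambda_{ik}$ — the same inequality). Your explicit checks of the inequality direction, the convexity of the resulting constraint, and tightness at ${\mathcal P}^{(r)}(t)$ are all sound and merely make explicit what the paper delegates to the cited SCA framework.
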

\begin{proof}
    Please refer to Appendix A.
\end{proof}

Based on {the relaxation transformation results and the approximation} results in Proposition \ref{propo_1}, 
{(\ref{eq:transmission_power_problem})} can be approximately transformed into a convex sub-problem. To facilitate the optimization of the approximated convex sub-problem using some convex optimization tools such as MOSEK, we further transform it into a conic problem in the following Lemma.

\begin{lemma}\label{lemma:lemma_transmission_power_optimization}
{\rm 
Based on the above approximation results, {(\ref{eq:transmission_power_problem})} can be approximately transformed into the following conic problem by introducing a family of auxiliary variables.
\begin{subequations}\label{eq:transmission_power_slack_problem_mosek}
\begin{alignat}{2}
&\mathop {\rm Maximize}\limits_{{\mathcal P}\left( t \right),{\xi _i}\left( t \right),{\eta _i}\left( t \right)}  - \sum\limits_{k \in {\mathcal K}} {\left\{ {V{\rho _2} + {{[{Y_k}\left( t \right)]}^ + }} \right\}{p_k}\left( t \right)}  + \notag\\
&\sum\limits_i {\left\{ {{{[{X_i}\left( t \right)]}^ + } + {{[{Z_i}\left( t \right)]}^ + }} \right\}{\eta _i}\left( t \right)}  - V{\rho _1}\sum\limits_i {{\xi _i}\left( t \right)} \\
& \rm{subject\text{ } to}: \notag \\
& \rm{linear\text{ }constraints}: 
(\ref{eq:transmission_power_problem}b),\text{ }
(\ref{eq:latency_related_transform_problem}d),\text{ }
(\ref{eq:transmission_power_mosek}a),\text{ }
(\ref{eq:transmission_power_mosek}b),\text{ } 
(\ref{eq:transmission_power_mosek}c) \\
& \rm{exponential\text{ } cone\text{ }constraint}:(\ref{eq:transmission_power_mosek_b_exponential_cone})\\
& \rm{rotated\text{ } quadratic\text{ } cone\text{ }constraint}: (\ref{eq:transmission_power_mosek_d_rotated_cone})
\end{alignat}
\end{subequations}

{Further, the opposite value of the maximum value of (\ref{eq:transmission_power_slack_problem_mosek}a) is the upper bound of the optimal objective value of (\ref{eq:transmission_power_problem}).}
}
\end{lemma}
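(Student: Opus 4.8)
The plan is to take the relaxed, linearized form of the power sub-problem, observe that every remaining non-linearity is conic-representable, and then lift each one with a fresh auxiliary variable. I would start from problem (\ref{eq:transmission_power_transform_problem}), which is the slack reformulation of (\ref{eq:transmission_power_problem}) obtained through $\eta_i(t)\le r_i(t)$ and $L/(B\xi_i(t))\le\eta_i(t)$, and replace its single non-convex constraint (\ref{eq:transmission_power_transform_problem}b) by the convex surrogate (\ref{eq:transmission_power_constraint_simplify}) supplied by Proposition \ref{propo_1}. After this step the objective (\ref{eq:transmission_power_transform_problem}a) is already linear in $\bigl({\mathcal P}(t),\{\xi_i(t)\},\{\eta_i(t)\}\bigr)$, and only two obstacles to a MOSEK-ready conic model remain: the concave log-term ${\hat\Lambda}_i(t)=\log_2\bigl(\sigma^2+\sum_{j\in{\mathcal K}}p_j(t)h_{ij}(t)\bigr)$ sitting inside (\ref{eq:transmission_power_constraint_simplify}), and the hyperbolic constraint $\xi_i(t)\eta_i(t)\ge L/B$ from (\ref{eq:latency_related_transform_problem}b).

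For the log-term I would introduce, for each $i$ and $t$, an auxiliary variable $u_i(t)$ with the epigraph constraint $u_i(t)\le\log_2\bigl(\sigma^2+\sum_{j\in{\mathcal K}}p_j(t)h_{ij}(t)\bigr)$, rewrite it as $2^{u_i(t)}\le\sigma^2+\sum_{j\in{\mathcal K}}p_j(t)h_{ij}(t)$, and recognize this as membership of $\bigl(\sigma^2+\sum_{j}p_j(t)h_{ij}(t),\,1,\,u_i(t)\ln 2\bigr)$ in the standard exponential cone; substituting ${\hat\Lambda}_i(t)\to u_i(t)$ in (\ref{eq:transmission_power_constraint_simplify}) then leaves only linear terms there, since $c_{ik}(t)$, $F_{ik}^{(r)}(t)$, $G_{ik}^{(r)}(t)$ and $p_j^{(r)}(t)$ are all constants in this sub-problem. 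For the hyperbolic constraint I would use $\{(\xi,\eta):\xi\eta\ge L/B,\ \xi\ge0,\ \eta\ge0\}=\{(\xi,\eta):2\xi\eta\ge(\sqrt{2L/B})^{2}\}$, that is, $\bigl(\xi_i(t),\eta_i(t),\sqrt{2L/B}\bigr)$ lies in the rotated quadratic cone. The residual constraints — the instantaneous power bounds (\ref{eq:original_problem}d) and (\ref{eq:original_problem}e), the non-negativity (\ref{eq:latency_related_transform_problem}d), and the linearized form of (\ref{eq:transmission_power_constraint_simplify}) — are all linear. Collecting these pieces yields exactly problem (\ref{eq:transmission_power_slack_problem_mosek}) with the announced linear / exponential-cone / rotated-quadratic-cone structure.

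What remains, and what I expect to be the delicate step, is to establish that the negative of the maximum of (\ref{eq:transmission_power_slack_problem_mosek}a) upper-bounds the optimal objective value of (\ref{eq:transmission_power_problem}), because this requires tracking the direction of each transformation. The slack reformulation from (\ref{eq:transmission_power_problem}) to (\ref{eq:transmission_power_transform_problem}) is exact — at any optimum one can raise $\eta_i(t)$ to $r_i(t)$ and lower $\xi_i(t)$ to $L/(B\eta_i(t))$ without changing feasibility or objective — so the two have the same optimal value. The exponential-cone lift is likewise lossless: $u_i(t)$ is absent from the objective and enters (\ref{eq:transmission_power_constraint_simplify}) only through the monotone term $\sum_k c_{ik}(t)u_i(t)$ on the feasibility-easing side, so any optimizer of (\ref{eq:transmission_power_slack_problem_mosek}) stays feasible after replacing $u_i(t)$ by $\log_2\bigl(\sigma^2+\sum_j p_j(t)h_{ij}(t)\bigr)$, and conversely any optimizer of the approximated (\ref{eq:transmission_power_transform_problem}) may set $u_i(t)$ to that value. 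Finally, Proposition \ref{propo_1} builds (\ref{eq:transmission_power_constraint_simplify}) by majorizing the subtracted concave log by its first-order Taylor polynomial, so every point feasible for (\ref{eq:transmission_power_constraint_simplify}) is feasible for (\ref{eq:transmission_power_transform_problem}b): the approximation is an inner one. Hence, for the maximization, the value of (\ref{eq:transmission_power_slack_problem_mosek}) is no larger than that of (\ref{eq:transmission_power_transform_problem}), which equals the negative of the optimal value of (\ref{eq:transmission_power_problem}); negating gives the claimed upper bound. The main obstacle is not any single computation but keeping this ``inner approximation of a maximization $\Rightarrow$ smaller value $\Rightarrow$ upper bound after negation'' chain consistent while simultaneously verifying that both auxiliary-variable lifts are genuinely tight.
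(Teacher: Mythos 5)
Your proposal is correct and follows essentially the same route as the paper's Appendix~B: inner-approximate constraint (\ref{eq:transmission_power_transform_problem}b) via Proposition~\ref{propo_1}, recast the resulting log inequality as an exponential-cone membership with auxiliary variables, recast $\xi_i(t)\eta_i(t)\ge L/B$ as a rotated quadratic cone, and deduce the upper bound from the feasible region of (\ref{eq:transmission_power_slack_problem_mosek}) being a subset of that of (\ref{eq:transmission_power_problem}). The only differences are cosmetic (you lift the log term with an epigraph variable $u_i(t)$ where the paper folds the entire linearized exponent into its auxiliary $Z_i^3$) plus some added rigor on the tightness of the slack substitutions that the paper leaves implicit.
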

\begin{proof}\renewcommand{\qedsymbol}{}
{Please refer to Appendix B.}
\end{proof}

\subsubsection{Solution to the UAV trajectory optimization sub-problem} Given serving UAV selection ${\mathcal C}(t)$, UAVs’ transmit power ${\mathcal P}(t)$, and UAVs’ trajectories ${\mathcal Q}(t-1)$ at the previous time slot $t - 1$, the UAVs’ trajectories ${\mathcal Q}(t)$ can be optimized by mitigating the following problem
\begin{subequations}\label{eq:trajectories_problem}
\begin{alignat}{2}
&\mathop {\rm Maximize}\limits_{{\mathcal Q}\left( t \right)} \sum\limits_i {\left\{ {{{[{X_i}\left( t \right)]}^ + } + {{[{Z_i}\left( t \right)]}^ + }} \right\}{r_i}\left( t \right)} - \notag \\
& V{\rho _1}\sum\limits_i {\frac{L}{B{{r_i}\left( t \right)}}} \\
&{\rm s.t.} \text{ }  \rm{\text{ }constraints \text{ } (\ref{eq:tan_theta}),\text{ }(\ref{eq:minimum_safety_distance}),\text{ }(\ref{eq:maximum_flight_speed})\text{ } are \text{ } satisfied.}
\end{alignat}
\end{subequations}

{Similar to the derivation in subsection 4.2.2,} after introducing the slack variables ${{\eta }_{i}}\left( t \right)$ and ${{\xi }_{i}}\left( t \right)$, the optimization problem (\ref{eq:trajectories_problem}) can be reformulated as
\begin{subequations}\label{eq:trajectories_slack_problem}
\begin{alignat}{2}
&\mathop {\rm Maximize}\limits_{{\mathcal Q}\left( t \right),{\eta _i}\left( t \right),{\xi _i}\left( t \right)} \sum\limits_i {\left\{ {{{[{X_i}\left( t \right)]}^ + } + {{[{Z_i}\left( t \right)]}^ + }} \right\}{\eta _i}\left( t \right)} - \notag\\
 & V{\rho _1}\sum\limits_i {{\xi _i}\left( t \right)} \\
&{\rm s.t.} \text{ } \sum\limits_{k \in {\mathcal K}} {{c_{ik}}\left( t \right){{\log }_2}(1 + \frac{{{p_k}\left( t \right){h_{ik}}\left( t \right)}}{{{\sigma ^2}{\rm{ + }}\sum\limits_{j \in {\mathcal K}\backslash \left\{ k \right\}} {{p_j}\left( t \right){h_{ij}}\left( t \right)} }} )} \notag\\
&\ge {\eta _i}\left( t \right),\forall i,t\\
&\rm{constraints \text{ }(\ref{eq:latency_related_transform_problem}b),\text{ }(\ref{eq:latency_related_transform_problem}d),\text{ } (\ref{eq:trajectories_problem}b) \text{ }are \text{ }satisfied.}
\end{alignat}
\end{subequations}

(\ref{eq:trajectories_slack_problem}) is not convex due to the existence of non-convex constraints (\ref{eq:minimum_safety_distance}) and (\ref{eq:trajectories_slack_problem}b). Therefore, it is quite difficult to obtain its optimal solution. 

For constraint (\ref{eq:minimum_safety_distance}), it is non-convex. Similar to the analysis and derivation in Proposition \ref{propo_1}, the first-order Taylor expansion can be performed to calculate the lower bound of its left-hand-side (LHS) term.
\begin{equation}\label{eq:minimum_safety_distance_sca}
\begin{array}{l}
{\left\| {\bm{q_k}\left( t \right) - \bm{q_j} \left( t \right)} \right\|^2} \ge  - {\left\| {\bm{q_k}^{\left( r \right)}\left( t \right) - \bm{q_j}^{\left( r \right)}\left( t \right)} \right\|^2}\\
 + 2{\left( {\bm{q_k}^{\left( r \right)}\left( t \right) - \bm{q_j}^{\left( r \right)}\left( t \right)} \right)^{\rm{T}}}\left( {\bm{q_k}\left( t \right) - \bm{q_j} \left( t \right)} \right)
\end{array}
\end{equation}
where $\bm{q_k}^{\left( r \right)}\left( t \right)$ and $\bm{q_j}^{\left( r \right)}\left( t \right)$ denote the 2D horizontal location of the $k$-th and $j$-th UAV at the $r$-iteration of the approximation method explored in Proposition \ref{propo_1}, respectively.

For the non-convex constraint (\ref{eq:trajectories_slack_problem}b), it is much more complex than (\ref{eq:minimum_safety_distance}). The following Proposition presents its approximated constraints.
\begin{propo}\label{propo_2}
{\rm {Given a local point ${\bm q}_j^{(r)}(t)$, $\forall j,t$, by introducing a slack variable $B_{ij}(t) \le ||{{\bm q_j}(t)-{\bm s_i}(t)}||^2$, we can approximately transform the non-convex constraint (\ref{eq:trajectories_slack_problem}b) into the following convex constraints}
\begin{equation}\label{eq:trajectories_problem_constraint_final}
\begin{array}{l}
\sum\limits_{k \in{\mathcal K}} {{c_{ik}}\left( t \right)} (D_i^{\left( r \right)}\left( t \right) - \sum\limits_{j \in {\mathcal K}} {E_{ij}^{\left( r \right)}\left( t \right)({{\left\| {\bm{q_j}\left( t \right) - \bm{s_i}\left( t \right)} \right\|}^2} - } \\
{\left\| {\bm{q_j}^{\left( r \right)}\left( t \right) - \bm{s_i}\left( t \right)} \right\|^2})) + \sum\limits_{k \in {\mathcal K}} {{c_{ik}}\left( t \right){{\mathord{\tilde 
\Lambda } }_{ik}}\left( t \right)}  \ge {\eta _i}\left( t \right),\forall i,t
\end{array}
\end{equation}
\begin{equation}\label{eq:slack_variable_sca}
\begin{array}{l}
{\left\| {\bm{q_j}^{\left( r \right)}\left( t \right) - \bm{s_i}\left( t \right)} \right\|^2}
 + 2{( \bm{q_j}^{\left( r \right)}\left( t \right) - \bm{s_i}\left( t \right) )^{\rm{T}}} \times \\ \left( {\bm{q_j}\left( t \right) - \bm{s_i}\left( t \right)} \right) \ge B_{ij}(t)
\end{array}
\end{equation}
where $D_i^{\left( r \right)}\left( t \right){\rm{ = }}{\log _2}( {\sigma ^2} + \sum\limits_{j \in {\mathcal K}} {\frac{{{p_j}\left( t \right){\omega _{ij}}}}{{{H^2} + {{\left\| {\bm{q_j}^{\left( r \right)}\left( t \right) - \bm{s_i}\left( t \right)} \right\|}^2}}}} )$, $E_{ij}^{\left( r \right)}\left( t \right){\rm{ = }}\frac{{{p_j}\left( t \right){\omega _{ij}}}}{{{{\left( {{H^2} + {{\left\| {\bm{q_j}^{\left( r \right)}\left( t \right) - \bm{s_i}\left( t \right)} \right\|}^2}} \right)}^2}D_i^{\left( r \right)}\left( t \right)\ln 2}}$, and ${\mathord{\tilde \Lambda } _{ik}}\left( t \right) =  - {\log _2}( {\sigma ^2} + \sum\limits_{j \in {\mathcal K}\backslash \left\{ k \right\}} {\frac{{{p_j}\left( t \right){\omega _{ij}}}}{{{H^2} + B_{ij}(t)}}}  )$.} 
\end{propo}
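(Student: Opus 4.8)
The plan is to carry over the split used in Proposition~\ref{propo_1} and convexify the two resulting pieces separately, taking care of signs so that what is left is a genuine \emph{lower} bound on the left-hand side of (\ref{eq:trajectories_slack_problem}b) (hence a conservative, i.e. feasibility-preserving, restriction). Using $h_{ij}(t)=\omega_{ij}/(H^2+\|\bm{q_j}(t)-\bm{s_i}(t)\|^2)$ from (\ref{eq:h_ik_t_new}), I would first write each rate term as
\begin{equation}
\begin{aligned}
\log_2\!\big(1+sin{r_{ik}}(t)\big)
&= \log_2\!\Big(\sigma^2+\sum\nolimits_{j\in\mathcal K}\tfrac{p_j(t)\omega_{ij}}{H^2+\|\bm{q_j}(t)-\bm{s_i}(t)\|^2}\Big)\\
&\quad -\log_2\!\Big(\sigma^2+\sum\nolimits_{j\in\mathcal K\setminus\{k\}}\tfrac{p_j(t)\omega_{ij}}{H^2+\|\bm{q_j}(t)-\bm{s_i}(t)\|^2}\Big),
\end{aligned}
\end{equation}
and then handle the ``signal-plus-interference-plus-noise'' logarithm and the ``interference-plus-noise'' logarithm by different means.

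For the first logarithm, I would regard it as a function of the auxiliary quantities $u_j:=\|\bm{q_j}(t)-\bm{s_i}(t)\|^2$. Its argument is a positive constant plus a positive combination of the reciprocals $1/(H^2+u_j)=\exp(-\ln(H^2+u_j))$, so it is the logarithm of a sum of exponentials of convex functions and is therefore convex in $(u_1,\dots,u_N)$ (a direct Hessian computation confirms this). Since a convex function lies above its tangent, replacing it by its first-order Taylor expansion about the current iterate $\bm{q_j}^{(r)}(t)$ is a valid lower bound; this is precisely $D_i^{(r)}(t)-\sum_{j}E_{ij}^{(r)}(t)\big(\|\bm{q_j}(t)-\bm{s_i}(t)\|^2-\|\bm{q_j}^{(r)}(t)-\bm{s_i}(t)\|^2\big)$, with the nonnegative coefficients $E_{ij}^{(r)}(t)$ being the negatives of the partial derivatives in $u_j$ evaluated at the iterate. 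Because the slopes $-E_{ij}^{(r)}(t)$ are negative and $\bm{q_j}\mapsto\|\bm{q_j}-\bm{s_i}\|^2$ is convex, this linearized piece is concave in the trajectory variables.

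The subtracted logarithm is the real obstacle: the trajectories enter it only through the reciprocals $1/(H^2+\|\bm{q_j}-\bm{s_i}\|^2)$, and a direct Taylor step there would couple it awkwardly with the first logarithm and would give an upper, not a lower, bound on the concave map $-\log_2(\cdot)$. The device is to introduce, for each interferer $j\neq k$, a slack variable $B_{ij}(t)$ with $B_{ij}(t)\le\|\bm{q_j}(t)-\bm{s_i}(t)\|^2$ and to replace the subtracted term by $\tilde\Lambda_{ik}(t)=-\log_2(\sigma^2+\sum_{j\ne k}p_j(t)\omega_{ij}/(H^2+B_{ij}(t)))$. Two facts then need checking. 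First, the substitution is conservative: $B_{ij}(t)\le\|\bm{q_j}(t)-\bm{s_i}(t)\|^2$ gives $1/(H^2+B_{ij}(t))\ge 1/(H^2+\|\bm{q_j}(t)-\bm{s_i}(t)\|^2)$, so the argument of the logarithm in $\tilde\Lambda_{ik}(t)$ is at least the true interference-plus-noise and $\tilde\Lambda_{ik}(t)$ is no larger than the true $-\log_2(\cdot)$. Second, $\tilde\Lambda_{ik}(t)$ is jointly concave in the $B_{ij}(t)$'s: by the same ``logarithm of a sum of exponentials of convex functions'' argument applied now to $B_{ij}\mapsto 1/(H^2+B_{ij})$, the inner $\log_2$ is convex, so its negative is concave. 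It then only remains to convexify the still-nonconvex coupling $B_{ij}(t)\le\|\bm{q_j}(t)-\bm{s_i}(t)\|^2$, which I would do by lower-bounding its convex right-hand side by the tangent plane at $\bm{q_j}^{(r)}(t)$, yielding the affine constraint (\ref{eq:slack_variable_sca}) and keeping the implication toward the original coupling.

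Assembling the pieces, for each $i$ the fixed weights $c_{ik}(t)\in\{0,1\}$ multiply the sum of the (concave) linearized first piece and the (concave) $\tilde\Lambda_{ik}(t)$, so the left-hand side of (\ref{eq:trajectories_problem_constraint_final}) is concave in $(\bm{q_j}(t),B_{ij}(t))$ and a constraint of the form ``concave $\ge\eta_i(t)$'' is convex; together with the linear (\ref{eq:slack_variable_sca}) this gives the claimed convex reformulation. Because (\ref{eq:trajectories_problem_constraint_final})--(\ref{eq:slack_variable_sca}) are built solely from lower bounds, they imply (\ref{eq:trajectories_slack_problem}b), so feasibility is never overstated; and at the expansion point the bounds are tight (the Taylor expansions coincide with the original functions and $B_{ij}(t)=\|\bm{q_j}^{(r)}(t)-\bm{s_i}(t)\|^2$ is feasible), which keeps the restricted region nonempty and lets the usual successive-convex-approximation reasoning of \cite{scutari2017parallel} go through. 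The step I expect to be most delicate is exactly the design of the $B_{ij}(t)$ slack, since it has to simultaneously make the interference logarithm concave, sit on the safe side of the inequality, and leave behind only the single simple nonconvexity $B_{ij}(t)\le\|\bm{q_j}(t)-\bm{s_i}(t)\|^2$ that one further Taylor step can absorb.
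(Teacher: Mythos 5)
Your proposal is correct and follows essentially the same route as the paper's Appendix C: split the rate into the two logarithms, lower-bound the signal-plus-interference logarithm by its first-order Taylor expansion in the squared distances (valid by its convexity in those variables), introduce the slack $B_{ij}(t)$ to concavify the interference logarithm, and linearize the remaining coupling $B_{ij}(t)\le\|\bm{q_j}(t)-\bm{s_i}(t)\|^2$ by the tangent at $\bm{q_j}^{(r)}(t)$. You in fact supply more justification than the paper does, which merely labels the two pieces ``convex'' and ``concave''; your log-sum-exp convexity argument and the check that the $B_{ij}(t)$ substitution is feasibility-preserving are both correct.
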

\begin{proof}\renewcommand{\qedsymbol}{}
{Please refer to Appendix C.}
\end{proof}

Based on the above derivation, we can approximately transform the non-convex UAV trajectory optimization sub-problem (\ref{eq:trajectories_problem}) into a convex one. We further transform the approximate sub-problem into a conic problem in the following Lemma, which can be effectively solved by calling MOSEK.
\begin{lemma}\label{lemma:lemma_trajectories_optimization}
{\rm Based on the above approximation results, (\ref{eq:trajectories_problem}) can be approximately transformed into the following conic problem by introducing a family of slack variables.
\begin{subequations}\label{eq:trajectories_slack_problem_mosek}
\begin{alignat}{2}
&\mathop {\rm Maximize}\limits_{{\mathcal Q}\left( t \right),{\eta _i}\left( t \right),{\xi _i}\left( t \right),B_{ij}(t)} \sum\limits_i {\left\{ {{{[{X_i}\left( t \right)]}^ + } + {{[{Z_i}\left( t \right)]}^ + }} \right\}{\eta _i}\left( t \right)} - \notag\\
 & V{\rho _1}\sum\limits_i {{\xi _i}\left( t \right)} \\
& \rm{subject\text{ } to}: \notag \\
& \rm{linear\text{ }constraints}: (\ref{eq:slack_variable_sca}), \text{ }(\ref{eq:lemma_trajectory_slack_2_transform_cone}a),\text{ } 
(\ref{eq:lemma_trajectory_slack_3}),\text{ }
(\ref{eq:trajectories_problem_constraint_final_linear}),\text{ } 
(\ref{eq:minimum_safety_distance_sca_linear})\\ 
& \rm{quadratic \text{ } cone\text{ }constraints}:
(\ref{eq:lemma_tan_theta_cone}),\text{ }
(\ref{eq:lemma_maximum_flight_speed_cone})\\
& \rm{rotated\text{ } quadratic\text{ } cone\text{ }constraints}: 
(\ref{eq:transmission_power_mosek_d_rotated_cone}),\text{ }
(\ref{eq:lemma_trajectory_slack_1_transform_cone})\\
& \rm{exponential\text{ } cone\text{ }constraints}:(\ref{eq:lemma_trajectory_slack_2_transform_cone}b),\text{ } (\ref{eq:lemma_trajectory_slack_3_cone})
\end{alignat}
\end{subequations}

{Further, the maximum value of (\ref{eq:trajectories_slack_problem_mosek}a) is the lower bound of the optimal objective value of (\ref{eq:trajectories_problem}).}
}
\end{lemma}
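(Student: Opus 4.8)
The plan is to mirror, step by step, the treatment of the transmit-power sub-problem in Lemma~\ref{lemma:lemma_transmission_power_optimization}, in three stages: (i) a slack-variable reformulation that removes the fractional latency term from (\ref{eq:trajectories_problem}); (ii) a successive-convex-approximation of the two remaining non-convex constraints, namely the safety-distance constraint (\ref{eq:minimum_safety_distance}) and the rate constraint (\ref{eq:trajectories_slack_problem}b), by invoking the Taylor minorant (\ref{eq:minimum_safety_distance_sca}) and Proposition~\ref{propo_2}; and (iii) a lossless conic rewriting of every residual nonlinearity so that MOSEK applies, followed by the lower-bound argument.

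First I would start from (\ref{eq:trajectories_problem}) and, exactly as in subsection~4.2.2, introduce $\eta_i(t)$ with $\eta_i(t)\le r_i(t)$ and $\xi_i(t)$ with $\xi_i(t)\eta_i(t)\ge L/B$, $\eta_i(t),\xi_i(t)\ge 0$, obtaining the problem (\ref{eq:trajectories_slack_problem}). This reformulation is equivalent to (\ref{eq:trajectories_problem}): at any optimum both added inequalities are active, since the objective is increasing in $\eta_i(t)$ and decreasing in $\xi_i(t)$, so $\xi_i(t)=\tfrac{L}{B\,r_i(t)}$ and the two objectives coincide. The hyperbolic relation $\xi_i(t)\eta_i(t)\ge L/B$ is then written as the same rotated quadratic cone used in Lemma~\ref{lemma:lemma_transmission_power_optimization}. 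Next, the safety-distance constraint (\ref{eq:minimum_safety_distance}) has a convex left-hand side, so its first-order Taylor expansion (\ref{eq:minimum_safety_distance_sca}) is a global underestimator, and replacing the left-hand side by this affine minorant yields a tighter \emph{linear} constraint; while for the non-convex rate constraint (\ref{eq:trajectories_slack_problem}b) I would invoke Proposition~\ref{propo_2} directly, which supplies the convex surrogate (\ref{eq:trajectories_problem_constraint_final}) together with the auxiliary linear constraint (\ref{eq:slack_variable_sca}) on the slack $B_{ij}(t)$.

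The remaining work is to peel off the nonlinear pieces of (\ref{eq:trajectories_problem_constraint_final}) with one more family of slack variables. Each per-term fraction $p_j(t)\omega_{ij}/(H^2+B_{ij}(t))$ inside $\tilde\Lambda_{ik}(t)$, and analogously the terms inside $D_i^{(r)}(t)$, is represented by an upper-bounding slack that satisfies a hyperbolic (rotated-quadratic-cone) constraint; each $-\log_2(\cdot)$ is then bounded by a further slack via an exponential-cone constraint, exactly in the pattern already tabulated in the statement of the Lemma. The geometric constraints carry over cleanly: the LoS constraint (\ref{eq:tan_theta}) and the flight-distance constraint (\ref{eq:maximum_flight_speed}) are Euclidean-norm bounds, hence second-order (quadratic) cone constraints, and the Taylor-linearised versions of (\ref{eq:minimum_safety_distance}), (\ref{eq:trajectories_problem_constraint_final}), and (\ref{eq:slack_variable_sca}) are linear. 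Collecting everything yields the conic program (\ref{eq:trajectories_slack_problem_mosek}).

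Finally, the bound. Every approximation step above replaces a constraint by a \emph{tighter} one: the Taylor minorant of the convex map $\|\bm{q_k}(t)-\bm{q_j}(t)\|^2$ underestimates it, Proposition~\ref{propo_2} is by construction an inner convex approximation of (\ref{eq:trajectories_slack_problem}b), and the slack/cone rewrites are equivalences. Hence the feasible set of (\ref{eq:trajectories_slack_problem_mosek}) is contained in that of (\ref{eq:trajectories_slack_problem}), which is equivalent to (\ref{eq:trajectories_problem}); since all three are maximization problems, the optimal value of (\ref{eq:trajectories_slack_problem_mosek}) cannot exceed that of (\ref{eq:trajectories_problem}), i.e., it is a lower bound, as claimed. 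The main obstacle is the bookkeeping in step (iii): one must introduce the nested slacks so that the composite term $\sum_{j\in{\mathcal K}\backslash\{k\}} p_j(t)\omega_{ij}/(H^2+B_{ij}(t))$ in $\tilde\Lambda_{ik}(t)$ is jointly captured by rotated-quadratic-cone constraints (for the fractions) and an exponential-cone constraint (for the $\log_2$), and check that each new inequality is oriented so as to \emph{shrink} the feasible region — in particular that (\ref{eq:slack_variable_sca}) forces $B_{ij}(t)$ downward, which is precisely the direction making (\ref{eq:trajectories_problem_constraint_final}) an inner approximation. Once the orientation of every slack is verified, the lower-bound conclusion is immediate.
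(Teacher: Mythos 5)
Your proposal follows essentially the same route as the paper's Appendix D: the slack variables $\eta_i(t),\xi_i(t)$ with the rotated-quadratic-cone latency constraint (\ref{eq:transmission_power_mosek_d_rotated_cone}), the SCA surrogates from Proposition \ref{propo_2} and (\ref{eq:minimum_safety_distance_sca}), a further layer of slacks to express the residual norms, fractions, and logarithms as quadratic/rotated-quadratic/exponential cones, and the feasible-region-inclusion argument giving the lower bound. The only slips are cosmetic: $D_i^{(r)}(t)$ is a constant in this sub-problem (it is evaluated at the local point $\bm{q_j}^{(r)}\left( t \right)$ with $p_j\left( t \right)$ fixed) and needs no slack, and the paper actually handles the fractions $p_j\left( t \right)\omega_{ij}/(H^2+B_{ij}(t))$ via the exponential substitutions $e^{m_{ij}\left( t \right)}$ and a log-sum-exp exponential-cone decomposition, reserving the rotated quadratic cone for $\left\| \bm{q_j}\left( t \right)-\bm{s_i}\left( t \right)\right\|^2\le\zeta_{ij}\left( t \right)$, rather than using hyperbolic cones for the fractions as you suggest.
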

\begin{proof}\renewcommand{\qedsymbol}{}
{Please refer to Appendix D.}
\end{proof}
\subsection{Algorithm Design}
Based on the above theoretical analysis and derivation, we can summarize the main steps of solving (\ref{eq:association_power_trajectories_optimization}) in the following algorithm. Besides, the following Lemma presents the convergence of the algorithm. 
\begin{algorithm}
\caption{multi-UAV network resource-layer iterative optimization algorithm}
\label{alg:alg1}
\begin{algorithmic}[1]
\STATE \textbf{Initialization:} Randomly initialize ${\mathcal P}^{(0)}(t)$ and ${\mathcal Q}^{(0)}(t)$, let $r = 0$.
\REPEAT
\STATE Given ${\mathcal P}^{(r)}(t)$ and ${\mathcal Q}^{(r)}(t)$, solve (\ref{eq:association_transform_problem}) to obtain the solution ${{\mathcal C}^{(r+1)}(t)}$.
\STATE Given ${\mathcal C}^{(r+1)}(t)$, ${\mathcal P}^{(r)}(t)$, and ${\mathcal Q}^{(r)}(t)$, solve (\ref{eq:transmission_power_slack_problem_mosek}) to achieve the solution ${{\mathcal P}^{(r+1)}(t)}$.
\STATE Given ${\mathcal C}^{(r+1)}(t)$, ${\mathcal P}^{(r+1)}(t)$, and ${\mathcal Q}^{(r)}(t)$, solve (\ref{eq:trajectories_slack_problem_mosek}) to obtain the solution ${{\mathcal Q}^{(r+1)}(t)}$.
\STATE Update $r = r + 1$
\UNTIL {Convergence or reach the maximum number of iteration $r_{max}$.}
\end{algorithmic}
\end{algorithm}

{
\begin{lemma}\label{lemma:lemma_algorithm_convergence}
{\rm The iterative optimization Algorithm \ref{alg:alg1} is convergent.}
\end{lemma}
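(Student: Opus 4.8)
The plan is to show that the sequence of objective values generated by Algorithm~\ref{alg:alg1} is monotonic and bounded, hence convergent. Since (\ref{eq:association_power_trajectories_optimization}) is a minimization, I will work with the equivalent maximization form: let $U(\mathcal{C}(t),\mathcal{P}(t),\mathcal{Q}(t))$ denote the negative of the objective in (\ref{eq:association_power_trajectories_optimization}a) (equivalently the objective used in (\ref{eq:association_transform_problem}), (\ref{eq:transmission_power_slack_problem_mosek}a), (\ref{eq:trajectories_slack_problem_mosek}a) with $\eta_i(t)$ playing the role of $r_i(t)$ and $\xi_i(t)$ that of $L/(B r_i(t))$ at feasible points where those slack constraints are tight). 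Note that in each of Steps~3--5 only the block being updated varies while the other blocks are held fixed, so each sub-problem maximizes $U$ over one block up to terms that do not involve that block. It therefore suffices to argue that no step decreases $U$, and that $U$ is bounded above over the iterations at the given time slot $t$; the auxiliary-variable layer (\ref{eq:Auxiliary_variable_optimization}) is solved once in closed form by (\ref{eq:Auxiliary_variable_solution}) and plays no role in the iteration.

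First I would handle Step~3. Given $\mathcal{P}^{(r)}(t),\mathcal{Q}^{(r)}(t)$, sub-problem (\ref{eq:association_transform_problem}) is an integer linear program solved to global optimality, and $\mathcal{C}^{(r)}(t)$ is feasible for it, so $U(\mathcal{C}^{(r+1)}(t),\mathcal{P}^{(r)}(t),\mathcal{Q}^{(r)}(t)) \ge U(\mathcal{C}^{(r)}(t),\mathcal{P}^{(r)}(t),\mathcal{Q}^{(r)}(t))$. Steps~4 and~5 are the crux, and I would establish the two standard successive-convex-approximation properties for the surrogates (\ref{eq:transmission_power_slack_problem_mosek}) (built from Proposition~\ref{propo_1} and Lemma~\ref{lemma:lemma_transmission_power_optimization}) and (\ref{eq:trajectories_slack_problem_mosek}) (built from (\ref{eq:minimum_safety_distance_sca}), Proposition~\ref{propo_2} and Lemma~\ref{lemma:lemma_trajectories_optimization}): (i) \emph{inner approximation} --- every point feasible for the surrogate is feasible for the corresponding original sub-problem (\ref{eq:transmission_power_problem}), resp. (\ref{eq:trajectories_problem}), because each replaced expression in Proposition~\ref{propo_1}, in (\ref{eq:minimum_safety_distance_sca}), and in Proposition~\ref{propo_2} is a first-order Taylor bound of the correct orientation (a global under-estimator of a concave left-hand side, resp. of $\|\bm q_k(t)-\bm q_j(t)\|^2$), so the surrogate constraint is no looser than the true one; and (ii) \emph{tightness at the expansion point} --- at $\mathcal{P}^{(r)}(t)$ (resp. $\mathcal{Q}^{(r)}(t)$) each Taylor bound holds with equality, so the previous iterate together with the choices $\eta_i(t)=r_i(t)$ and the minimal compatible $\xi_i(t)$, $B_{ij}(t)$ is feasible for the surrogate and attains $U$ evaluated at the previous iterate. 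Because the objective is not approximated (only constraints are), solving the surrogate to optimality gives $U(\mathcal{C}^{(r+1)}(t),\mathcal{P}^{(r+1)}(t),\mathcal{Q}^{(r)}(t)) \ge U(\mathcal{C}^{(r+1)}(t),\mathcal{P}^{(r)}(t),\mathcal{Q}^{(r)}(t))$ and similarly after Step~5. Chaining the three inequalities yields $U^{(r+1)} \ge U^{(r)}$ for every $r$.

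Finally, for boundedness I would note that at a fixed time slot $t$ the virtual-queue weights $[X_i(t)]^+,[Z_i(t)]^+,[Y_k(t)]^+$ are constants, that $\eta_i(t)\le r_i(t)\le r_i^{\max}$ (the Shannon rate with transmit power capped by $\hat{p}_k$ through (\ref{eq:original_problem}d) and finite bandwidth), and that $\xi_i(t)\ge 0$, $p_k(t)\ge p_k^{\min}\ge 0$; hence $U \le \sum_i\big([X_i(t)]^+ + [Z_i(t)]^+\big)r_i^{\max}$, a finite constant. A monotone non-decreasing sequence bounded above converges, so the objective values of Algorithm~\ref{alg:alg1} converge, which proves the Lemma (equivalently, the minimized objective of (\ref{eq:association_power_trajectories_optimization}) is non-increasing and bounded below). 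The main obstacle is item (i)--(ii): one must check carefully that every Taylor-based replacement in Propositions~\ref{propo_1} and~\ref{propo_2} and in (\ref{eq:minimum_safety_distance_sca}) is a one-sided bound of the orientation that makes the surrogate an inner approximation and is exact at the linearization point, and that the added slack variables $\eta_i(t),\xi_i(t),B_{ij}(t)$ can always be set so the previous iterate stays surrogate-feasible with unchanged objective; once these ingredients are secured, monotonicity, boundedness, and the conclusion follow at once.
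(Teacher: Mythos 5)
Your proposal is correct and follows essentially the same route as the paper's Appendix E: a chain of per-block inequalities showing the objective of (\ref{eq:association_power_trajectories_optimization}) is monotone across Steps 3--5, combined with boundedness of the objective at a fixed time slot. You are in fact more careful than the paper on the key point — the paper simply asserts that solving the surrogates (\ref{eq:transmission_power_slack_problem_mosek}) and (\ref{eq:trajectories_slack_problem_mosek}) yields descent, whereas you correctly identify that this requires the inner-approximation and tightness-at-the-expansion-point properties of the Taylor bounds in Propositions \ref{propo_1} and \ref{propo_2} and in (\ref{eq:minimum_safety_distance_sca}).
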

\begin{proof}\renewcommand{\qedsymbol}{}
{Please refer to Appendix E.}
\end{proof}
}
Recall that (\ref{eq:original_problem}) can be decomposed into two repeated optimization sub-problems. We obtained the closed-form solution of the auxiliary-variable-layer sub-problem. The resource-layer sub-problem can be solved by calling Algorithm \ref{alg:alg1}. We can then summarize the steps of solving (\ref{eq:original_problem}) in Algorithm \ref{alg:alg2}.
\begin{algorithm}
\caption{Energy-efficient Multi-UAV network Optimization, EMUO}
\label{alg:alg2}
\begin{algorithmic}[1]
\STATE \textbf{Initialization:} Randomly initialize $X_i(1)$, $Z_i(1)$, and $Y_k(1)$ to positive values for any subscriber $i \in {\mathcal I}$ and UAV $k \in {\mathcal K}$. 
\FOR {each time slot $t = 1, 2, \ldots, T$}
\STATE Observe the virtual queues $X_i(t)$, $Z_i(t)$, and $Y_k(t)$.
\STATE Compute ${\lambda _i}\left( t \right)$ using (\ref{eq:Auxiliary_variable_solution}) for any subscriber $i$.
\STATE Find the serving UAV selection set ${\mathcal C}(t)$, UAV trajectories ${\mathcal Q}(t)$, and UAV transmit power ${\mathcal P}(t)$ by calling Algorithm \ref{alg:alg1}.
\STATE Calculate ${r_i}\left( t \right)$ for any subscriber $i \in {\mathcal I}$ using (\ref{eq:achievable_rate_subscriber_i}).
\STATE Calculate ${d_i}\left( t \right)$ for any subscriber $i \in {\mathcal I}$  using (\ref{eq:latency_subscriber_i}).
\STATE Calculate $p_j^{tot}(t)$ for each UAV $k \in {\mathcal K}$ using (\ref{eq:total_power_at_t}).
\STATE Update $X_i(t+1)$, $Z_i(t+1)$, and $Y_k(t+1)$ for any subscriber $i \in {\mathcal I}$ and UAV $k \in {\mathcal K}$ using (\ref{eq:virtual_queue_X}), (\ref{eq:virtual_queue_Z}), and (\ref{eq:virtual_queue_Y}), respectively.
\ENDFOR
\end{algorithmic}
\end{algorithm}


The computational complexity of Algorithm \ref{alg:alg2} has two main contributors at each time slot, i.e., the auxiliary-variable-layer optimization and the resource-layer optimization. The computational complexity of solving the auxiliary-variable-layer optimization sub-problem is $O(N)$. For the resource-layer optimization, it can be further decomposed into three sub-problems, including serving UAV selection optimization, UAV transmit power optimization, and UAV trajectory optimization sub-problems. The computational complexity of solving the linear integer serving UAV selection optimization sub-problem by a branch-and-bound method is $\min \{O((1+M)^N), O((1+N)^M)\}$. Both UAV transmit power optimization and trajectory optimization sub-problems are approximately transformed into convex programming; thus, the computational complexities of solving the approximate UAV transmit power optimization and trajectory optimization sub-problems by an interior method are $O((N+2M)^{3.5})$ and $O((N+2M+NM)^{3.5})$, respectively \cite{ye2011interior}. Besides, an iterative optimization scheme is explored to solve the resource-layer optimization sub-problem, and hence, the computational complexity of solving the resource-layer optimization sub-problem is $O(r_{\max}(\min \{O((1+M)^N), O((1+N)^M)\}+ O((N+2M)^{3.5})+ O((N+2M+NM)^{3.5})))$ in the worst-case. 

\section{Simulation and Result Analysis}
\subsection{Comparison Algorithms and Parameter Setting}
In this section, we conduct simulations to verify the effectiveness of the proposed algorithm, and compare it with the following five benchmark algorithms:
    1) Nearest Neighbor ASsociation (NNAS) algorithm: It implements the UAV trajectory and transmit power optimization method in \cite{Zhan2021Joint}. Meanwhile, each UAV only establishes a communication link (if available) with its nearest ground subscriber in the NNAS algorithm. 
    2) Stationary UAV DEployment (SUDE) algorithm: UAVs hover steadily over the randomly generated locations, as in \cite{Chen2020Optimal}. Meanwhile, the serving UAV selection and UAV transmit power are optimized using the schemes designed in the proposed algorithm. 
    3) Stationary UAV with maximum transmit power (SUMTP) algorithm: In SUMTP, UAVs hover steadily over a random selection of locations and deliver video streams with the maximum transmit power. Besides, it adopts the same serving UAV selection scheme as the proposed algorithm. 
    4) Circular UAV TRajectory (CUTR) algorithm: {As in \cite{Yang2022Fresh}}, each UAV flies in a circular trajectory with a speed of two m/s. The distance between any two neighboring UAVs is $1/4N$ ($N$ is the number of deployed UAVs) km at the initial time. Besides, CUTR adopts the similar serving UAV selection and UAV transmit power optimization schemes as the proposed algorithm. 
    5) Circular UAV trajectory with the maximum transmit power (CUMTP) algorithm: The difference between CUMTP and CUT is that CUMTP adopts the scheme of the maximum transmit power.

We consider a mountainous suburb area $\mathbb R$ of size $500 \times 500$ m$^2$, where ground subscribers walk randomly in the area. 
The radio frequency propagation parameters are: carrier frequency {{$f_c = 4.9$} GHz}, light speed {$c = 3.0\times 10^8$ m/s}, {noise power {$\sigma^2 = -174$ dBm}}, total bandwidth {{$B = 100$} MHz},
far field reference distance {$D_0 = 1$ m}, antenna transmitting gains {$g_{ik}^{T} = 1$}, and antenna receiving gains {$g_{ik}^{R} = 1$}.
The values of parameters related to UAVs are set: the maximum instantaneous total power {${{\hat p}_k} = 500$ mW}, the maximum time-averaged total power {${{\tilde p}_k} = 450$ mW}, circuit power {$p_k^c = 20$ mW}, the minimum safety distance {$d_{min} = 50$ m}, the maximum flight distance in one time slot {$s_{max} = 250$ m}, elevation angle threshold $\theta  = {77^{ \circ }}$, and fixed flight altitude {$H = 500$ m}. 
Several parameters related to video transmission are: the length of transmitted video data in a time slot {$L = 10$ Mb}, a turntable game in \cite{Yang2022Fresh} is called to set the required playback bitrate of subscriber $i$ with $R_i \in \{0.0316, 0.0154\}$ bps/Hz, let the minimum time-averaged achievable bitrate ${r_i^{th}} =R_i$, and the duration of a time slot ${\delta t}=2$ s.
More system parameters are listed as below: the video streaming utility model parameters {$\alpha = 1$} and {$\beta = 1$}; {$V = 10$}, {${\rho _1} = 15$}, {${\rho _2} = 0.05$},  {$T = 200$}, and the maximum number of optimization iteration {$r_{max} = 60$}. 

\subsection{Performance Evaluation}
In this subsection, we design extensive simulations to comprehensively verify the performance of the proposed algorithm, including the stability verification, the QoE performance, and the energy efficiency of the algorithm. 
To weaken the impact of randomly initialized parameters (e.g., UAV transmit power and locations) on the performance of algorithms, we run all algorithms for ten times, and then their average values are compared.
\subsubsection{{Results of stability and UAVs' trajectories}}
In this simulation, we test the stability of proposed algorithm. The stability refers to the stability of the introduced virtual queues and is defined as ${S_X} = {\max _i}{[{X_i}\left( t \right)]^ + }/t$, ${S_Z} = {\max _i}{[{Z_i}\left( t \right)]^ + }/t$, and ${S_Y} = {\max _k}{[{Y_k}\left( t \right)]^ + }/t$.
\begin{figure}[!t]
\centering
\includegraphics[width=3.2 in, height = 2.0 in]{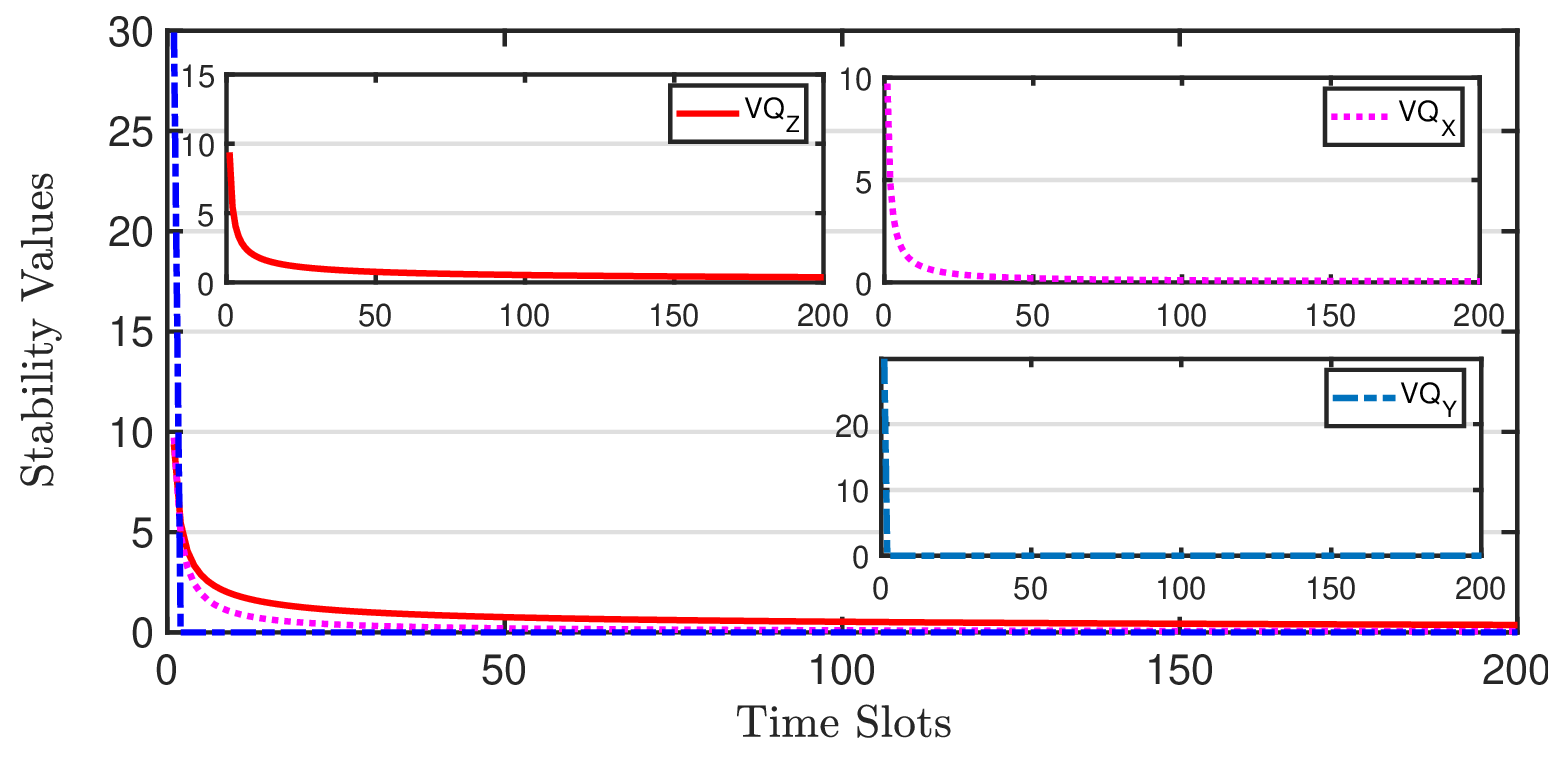}
\caption{Stability trends of virtual queues vs. time slots.}
\label{fig_virtual_queue_stability}
\end{figure}
In Fig. \ref{fig_virtual_queue_stability}, we plot the stability trends of the introduced virtual queues.
From this figure it can be observed that the obtained stability values of virtual queues are upper bounded over the whole period and tend to zero as time slot $t$ increases. 
According to the definition of mean-rate stability, we can say that the virtual queues are mean-rate stable, and then the time average-related constraint (\ref{eq:stability_requirement_Y}) can be satisfied. It also indicates that the frame freezing can be alleviated.
\begin{figure}[!t]
\centering
\includegraphics[width=3.2 in, height = 2.0 in]{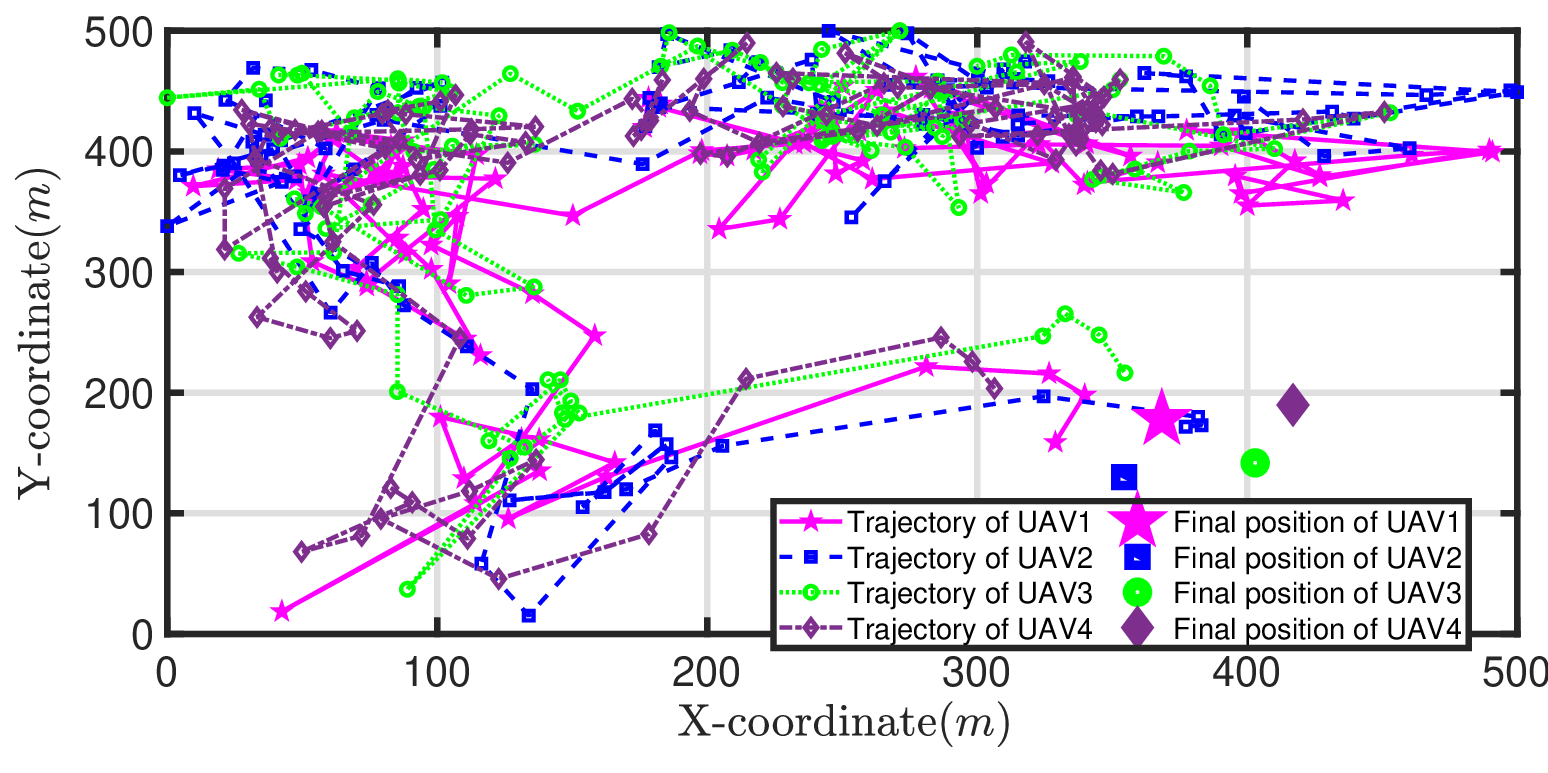}
\caption{{An illustration of trajectories and final positions of four UAVs.}}
\label{fig_trajectory_2D}
\end{figure}

{Further, Fig. \ref{fig_trajectory_2D} shows the trajectories of four UAVs in the first 100 time slots and their final positions.}

\subsubsection{Results of QoE}
The QoE performance of the proposed algorithm is verified by comparing it with other five benchmarks. 
As the frame freezing constraint will not be violated, we mathematically define the QoE as $QoE = NP - {\rho _1} * TL$ in this simulation, where the network profit $NP = \phi \left( {\bar r\left( T \right)} \right) = \alpha \sum\limits_{i = 1}^M {{{\log }_2}\beta (1 + \frac{{B{{\bar r}_i}\left( T \right)}}{{{R_i}}} )}$ and $TL$ represents the total latency. Recall the definition of the objective function, the latency experienced by an unassociated subscriber in a time slot is $\delta t$, the total latency can then be calculated by 
{$TL = \sum\limits_{i = 1}^M {\left\{ {\frac{1}{T}\sum\limits_{t = 1}^T {\left[ {\sum\limits_{k \in {\mathcal K}} {\frac{{L{c_{ik}}\left( t \right)}}{{B{{\log }_2}\left( {1 + sin{r_{ik}}\left( t \right)} \right)}}}  + (1 - \sum\limits_{k \in {\mathcal K}} {{c_{ik}}(t)} )\delta t} \right]} } \right\}}$}. 

\textbf{Network Profit:}
In Fig. \ref{fig_network_profit}, we plot the tendency of the achieved network profit versus the number of UAVs and the number of subscribers.
From this figure we have the following observations: 
1) The proposed EMUO algorithm achieves a high network profit, and it is hard to conclude the tendency of the achieved network profit when varying the number of UAVs. On one hand, more UAVs indicate that more subscribers can simultaneously receive video streams, and then greater network profit might be achieved. On the other hand, the signal interference becomes stronger and the corresponding achievable bitrate of each subscriber is reduced when increasing the number of UAVs; thus, a smaller network profit might also be obtained. Mainly owing to the joint trajectory optimization and power optimization, EMUO achieves greater network profit with the increase of the number of subscribers. 
2) For NNAS, it is unable to provide services for subscribers fairly and achieves the smallest network profit. Further, its achieved network profit decreases with an increasing number of UAVs due to stronger signal interference. Yet, NNAS has the potential to achieve higher profit when there are more subscribers in the considered area. 
3) For CUTR, its obtained network profit will increase with $N$ when $N \le 4$. Nevertheless, when $N > 4$, strong signal interference leads to the decrease in the achievable bitrates of subscribers and thus reduces the profit. 
Diverse from CUTR, CUMTP utilizes the maximum transmit power to deliver video streams. However, constrained by the LoS propagation condition, the coverage range of a UAV cannot be extended by choosing the maximum transmit power. Meanwhile, CUMTP will cause strong signal interference, and thus low network profit is obtained. 
4) Similar to the two circular UAV trajectory-based algorithms, SUDE achieves higher network profit than SUMTP. 
The comparison results indicate that maximizing the transmit power of UAVs cannot effectively increase the network profit, and it is essential to optimize the UAV transmit power. 
\begin{figure}[!tb]
\centering
  \subfigure[Network profit vs. number of UAVs]{
    \label{fig:subfig:a} 
    \includegraphics[width=3.2 in, height = 2.0 in]{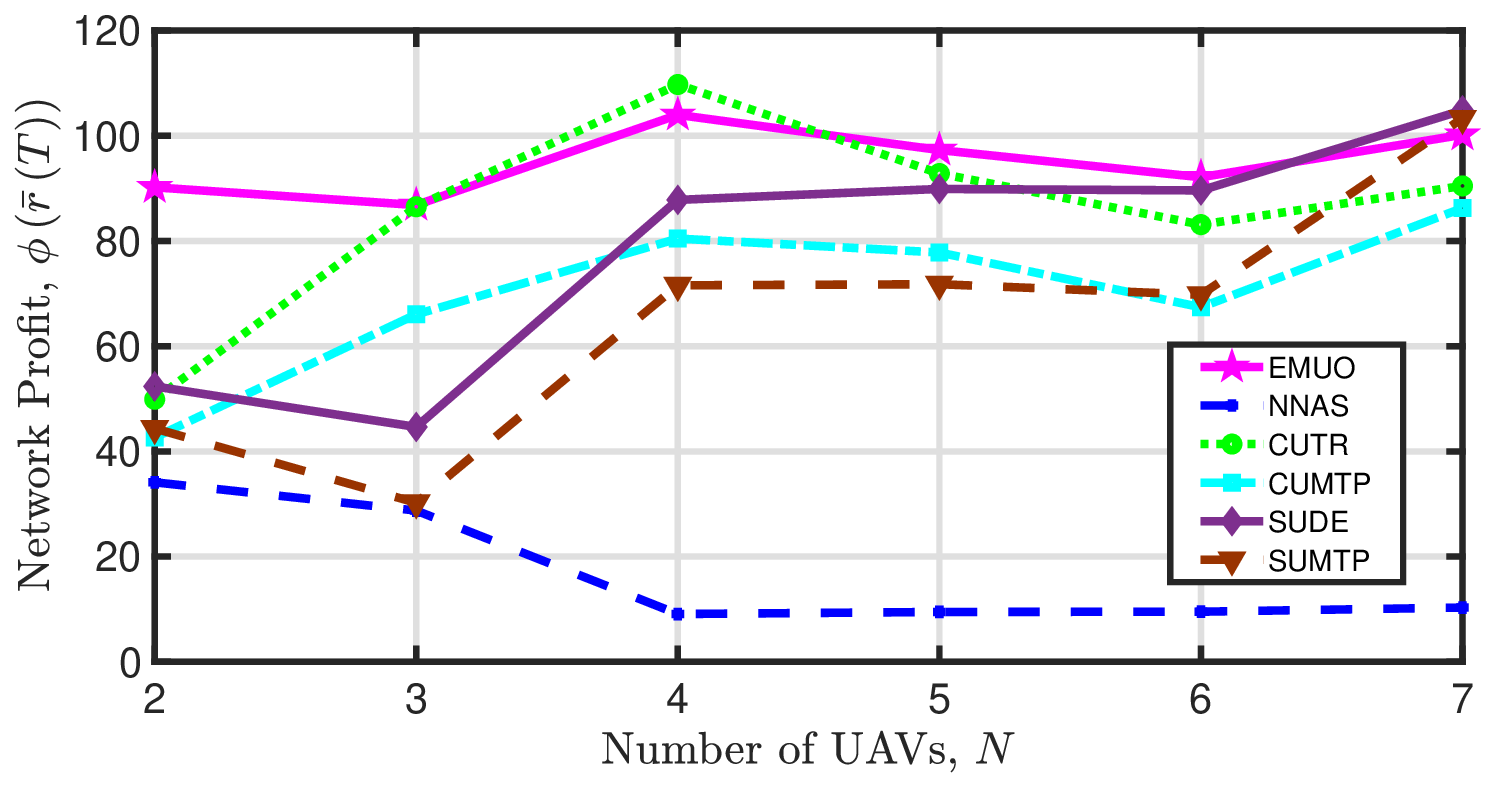}}
  \hspace{9pt}
    \subfigure[Network profit vs. number of subscribers]{
    \label{fig:subfig:b} 
    \includegraphics[width=3.2 in, height = 2.0 in]{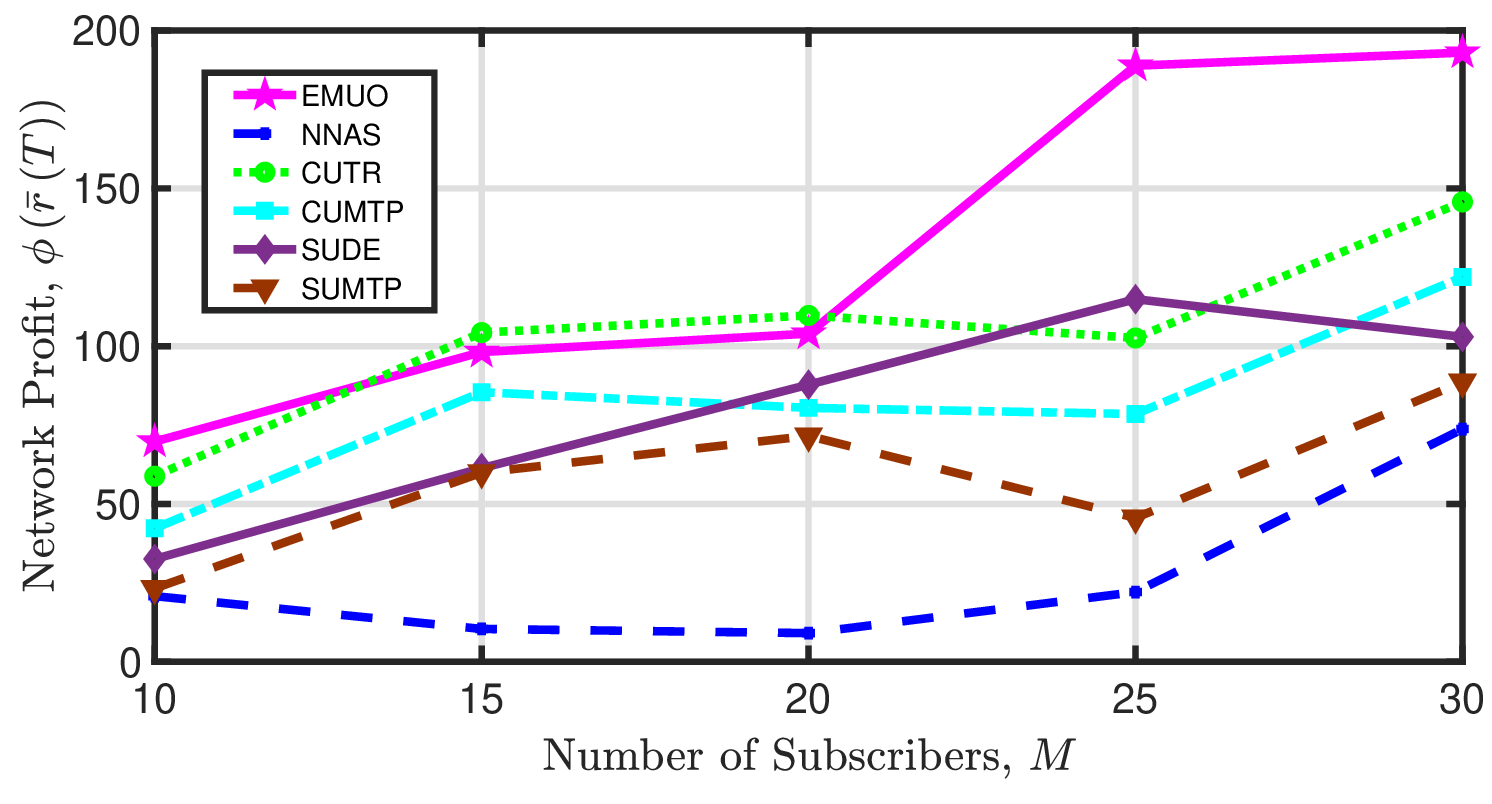}}
    \caption{Network profit of all comparison algorithms.}
\label{fig_network_profit}
\end{figure}

\textbf{Total latency:}
The tendency of the total latency of all subscribers versus various numbers of UAVs is illustrated in Fig. \ref{fig_total_latency}. 
From this figure we can observe that: 
1) The proposed EMUO algorithm achieves the lowest total latency. 
2) For all comparison algorithms except NNAS, more UAVs will lead to the reduction in total latency. This is because more subscribers can simultaneously receive video streams in a time slot when deploying more UAVs. For NNAS, it undermines the fairness of video transmission, and many subscribers have to wait for a long time to receive video streams from UAVs. 
3) When there are more subscribers, the achieved total latency of all comparison algorithms will increase as subscribers have to wait to be served. 
4) The achieved latency of CUTR and SUDE cannot be significantly reduced when increasing the number of UAVs. Although more subscribers can be simultaneously served, signal interference grows stronger and thus achievable bitrates of subscribers decrease. 
5) The achieved latency of EMUO can be reduced by at least 2.97\% compared with other algorithms. This indicates that the adverse impact of signal interference can be effectively alleviated through joint serving UAV selection optimization, UAV trajectory optimization, and UAV transmit power optimization. 
\begin{figure}[!t]
\centering
  \subfigure[Total latency vs. number of UAVs]{
    \label{fig:subfig:a} 
    \includegraphics[width=3.2 in, height = 2.0 in]{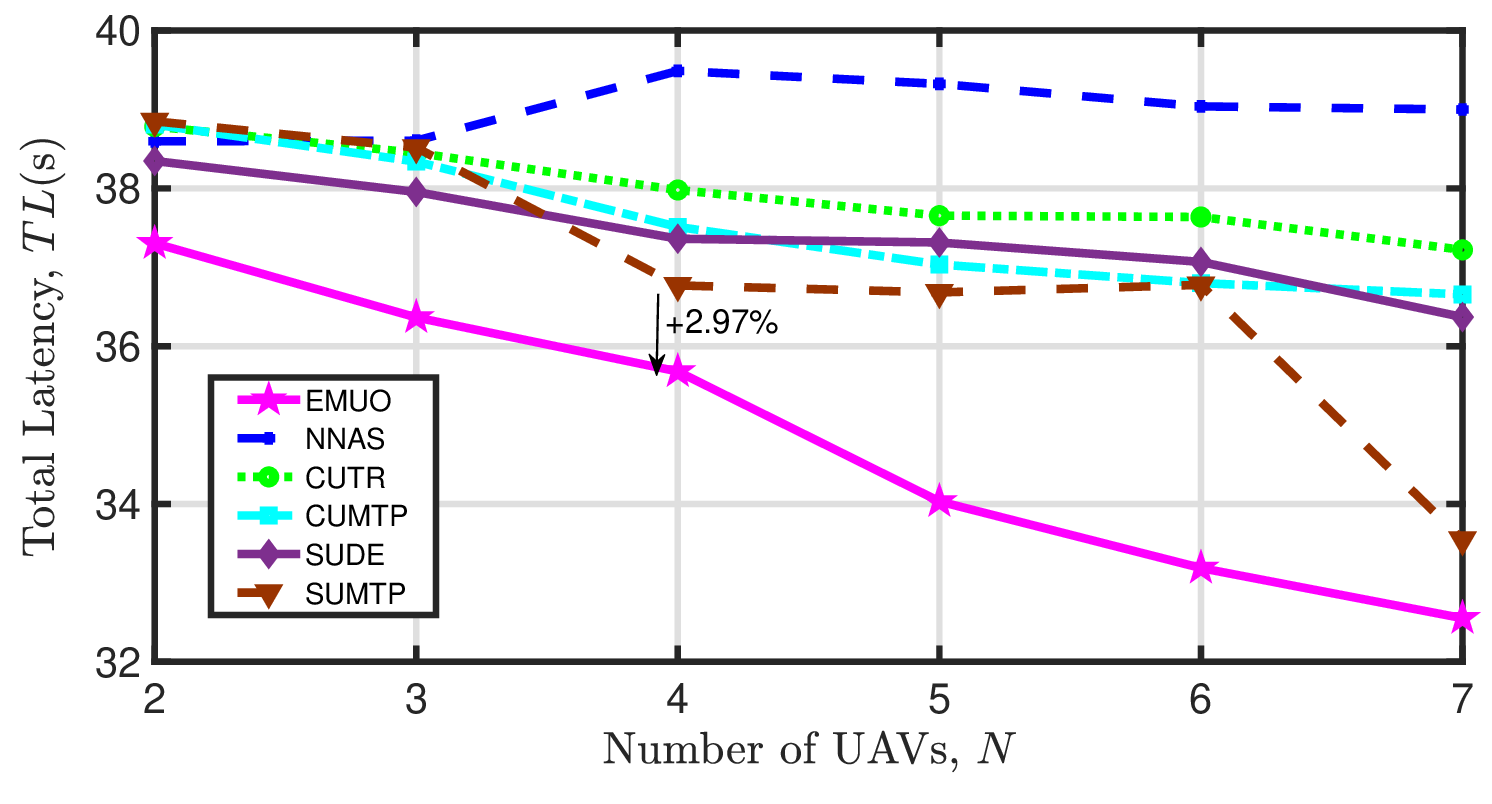}}
  \hspace{9pt}
    \subfigure[Total latency vs. number of subscribers]{
    \label{fig:subfig:b} 
    \includegraphics[width=3.2 in, height = 2.0 in]{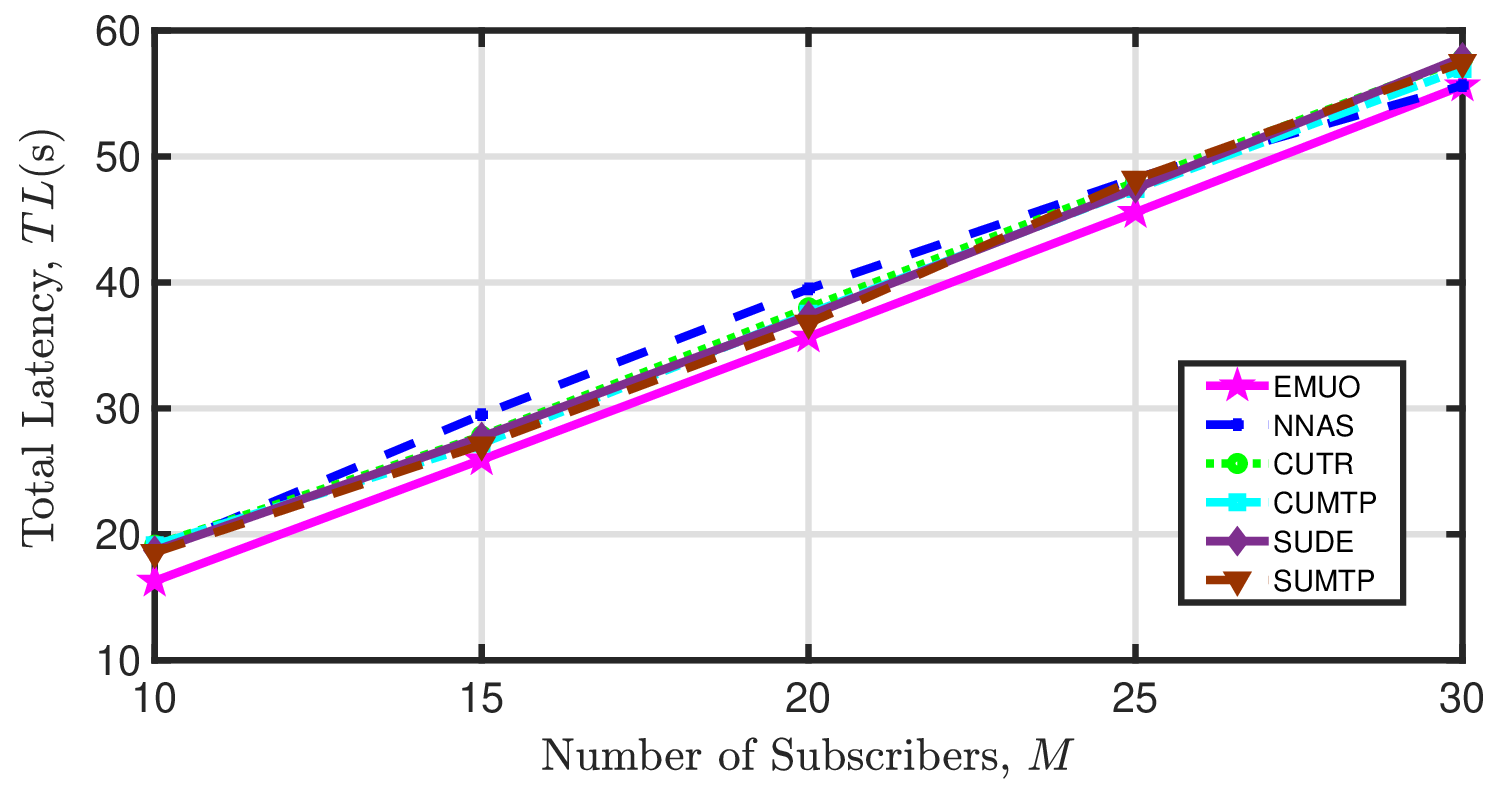}}
    \caption{Total latency of all comparison algorithms.}
\label{fig_total_latency}
\end{figure}

\textbf{QoE:} Besides, we depict the achieved QoE of all comparison algorithms when varying the number of UAVs and subscribers in Fig. \ref{fig_QoE}. 
The following observations can be obtained from this figure: 
1) The proposed EMUO algorithm outperforms the other comparison algorithms and achieves the largest QoE. When increasing the number of UAVs, the achieved QoE of EMUO increases mainly due to the rapid decrease in the total latency. 
2) For NNAS, as it obtains the smallest network profit and higher latency, its achieved QoE is the smallest. The large performance difference between NNAS and EMUO verifies the significance of performing serving UAV selection optimization. 
3) The obtained QoE of SUDE and SUMTP decrease with an increasing number of subscribers mainly due to the rapid increase in the generated latency. Additionally, for the two stationary UAV-based algorithms, UAVs are randomly and uniformly deployed and hover over the considered area. In this case, although the deployment locations of UAVs remain unchanged, more subscribers can be covered and served when the number of UAVs increases; and thus, higher QoE is obtained. 
4) The obtained QoE of both CUTR and CUMTP decrease with an increasing number of subscribers. For CUTR, its achieved QoE, however, shows an oscillation caused by special UAV trajectories and irregular signal interference when increasing the number of UAVs. 
Besides, when the number of UAVs exceeds four, the obtained QoE of CUMTP cannot be effectively improved. Recalling the results in Fig. \ref{fig_network_profit}(a), we can say that circular UAV trajectory-based algorithms cannot obtain high and stable QoE when the number of UAVs is large. 
\begin{figure}[!t]
\centering
  \subfigure[QoE vs. number of UAVs]{
    \label{fig:subfig:a} 
    \includegraphics[width=3.2 in, height = 2.0 in]{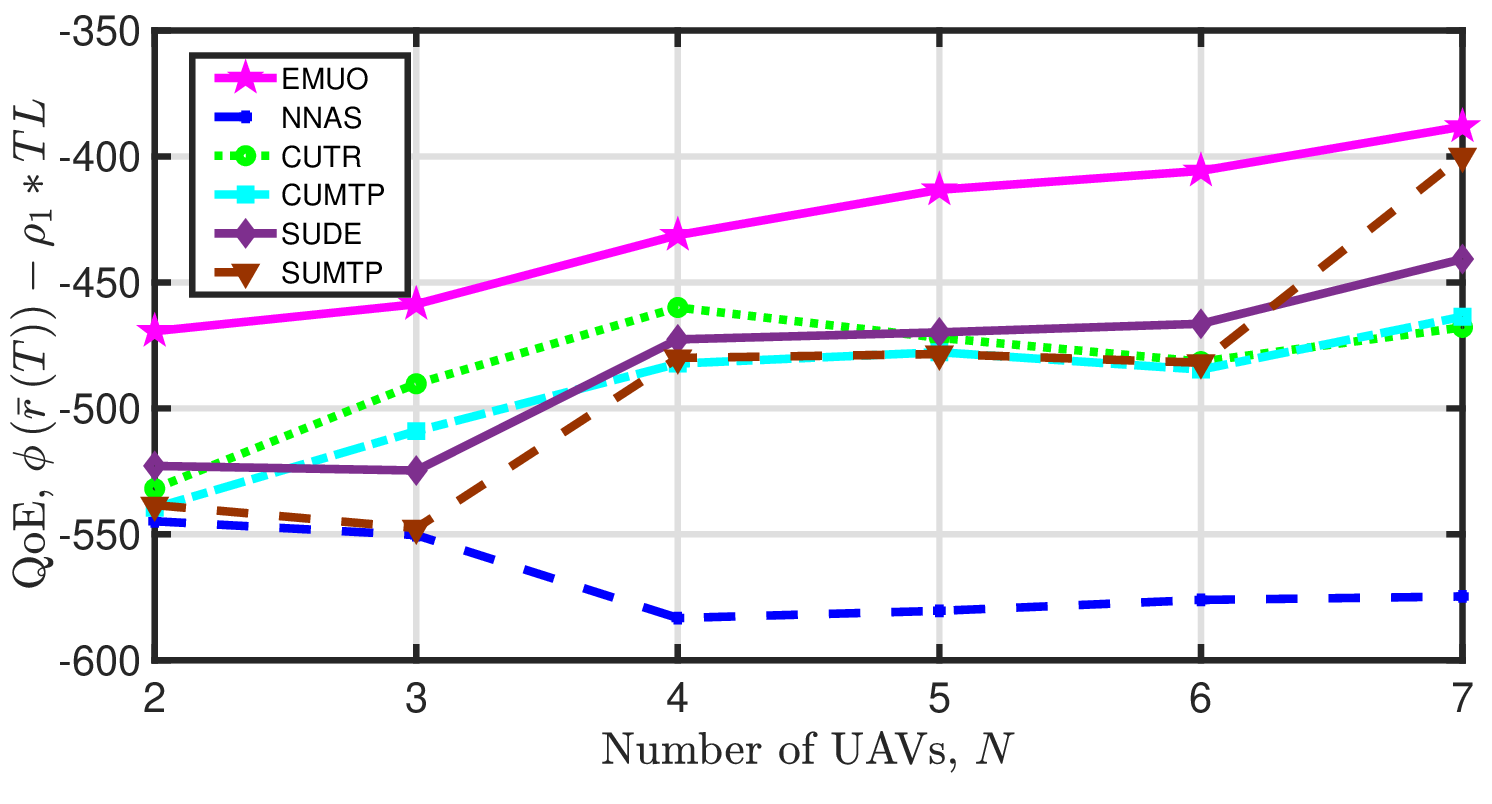}}
   \hspace{9pt}
    \subfigure[QoE vs. number of subscribers]{
    \label{fig:subfig:b} 
    \includegraphics[width=3.2 in, height = 2.0 in]{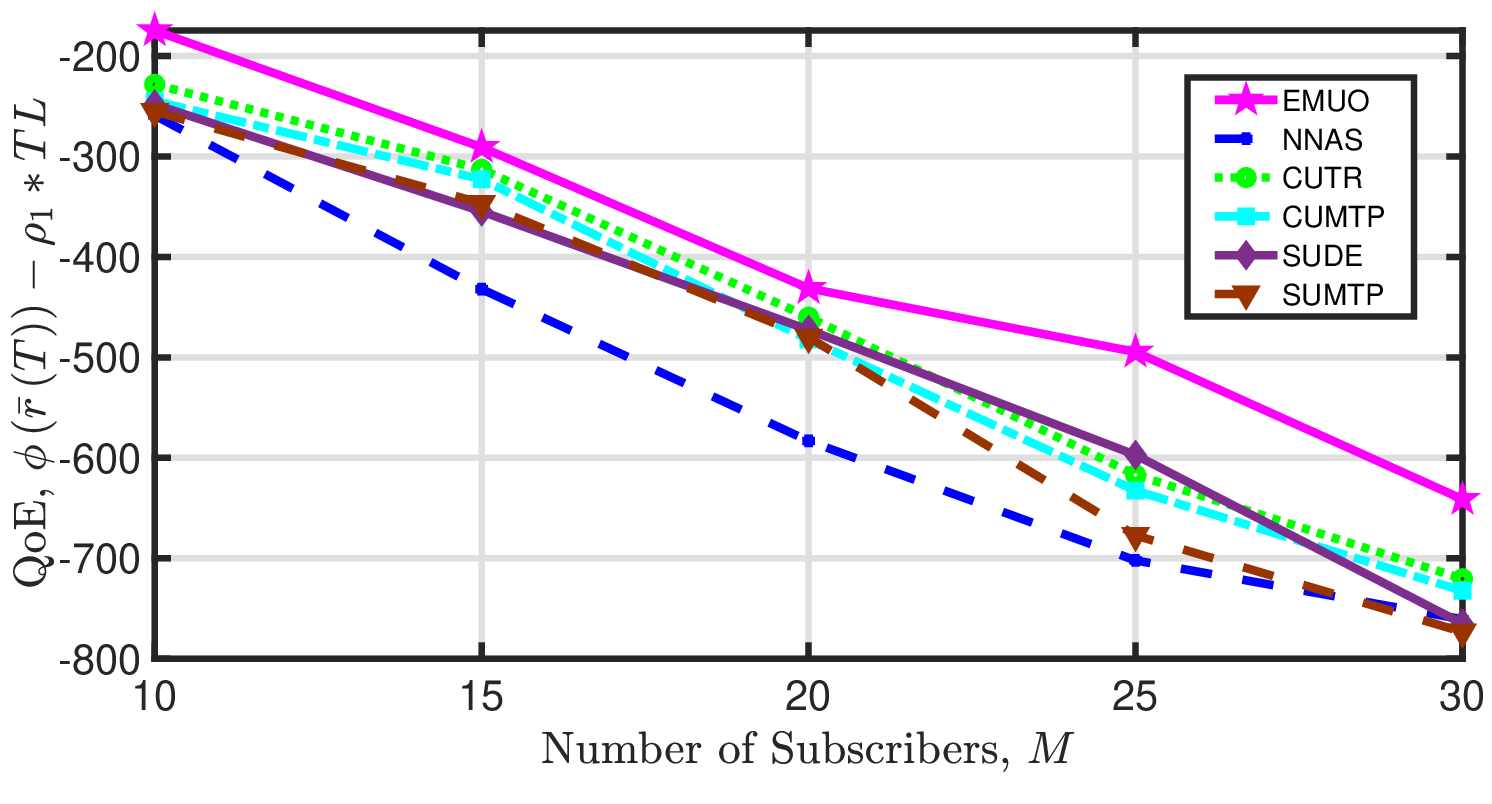}}
    \caption{QoE of all comparison algorithms.}
\label{fig_QoE}
\end{figure}

\subsubsection{Results of energy efficiency}
Similarly, we compare the proposed algorithm with all other comparison algorithms to evaluate its performance in terms of energy efficiency, which is defined as $EE = QoE - \rho_2 * TP$, where $TP =\sum\nolimits_{k \in {\mathcal K}} {\bar p_k^{\rm tot}\left( T \right)} $ represents the total power consumption of all UAVs.

First, we plot the total power consumption of the multi-UAV network by implementing different algorithms in Fig. \ref{fig_total_power_consumption}. 
The following observations can be obtained from this figure:
1) The two maximum-power-based algorithms, i.e., CUMTP and SUMTP, will consume the largest total power. 
2) When $N \ge 3$, the power consumption of all comparison algorithms will increase with an increasing number of UAVs as each UAV will consume circuit and transmit power to serve subscribers. 
3) Both signal interference and channel fading will affect the transmit power consumption. When $N \ge 3$, NNAS consumes the lowest transmit power. This fact indicates that power consumption due to anti-channel fading dominates the transmit power consumption. From the perspective of optimizing power consumption, the above fact further justifies the necessity of enabling LoS propagation for video transmission. 
4) The total power consumption of the two trajectory-optimized algorithms, i.e., EMUO and NNAS, decrease with an increasing $M$ when $M < 30$. This is because the increase in the number of subscribers leads to the reduction in the distances among UAVs and subscribers and then the weakening of channel fading. When there are more subscribers (e.g., $M=30$), NNAS suggests to transmit videos using great power to improve subscribers’ achievable bitrates and reduce latency, as shown in Fig. \ref{fig_total_latency}(b). 
\begin{figure}[!t]
\centering
  \subfigure[Total power consumption vs. number of UAVs]{
    \label{fig:subfig:a} 
    \includegraphics[width=3.2 in, height = 2.0 in]{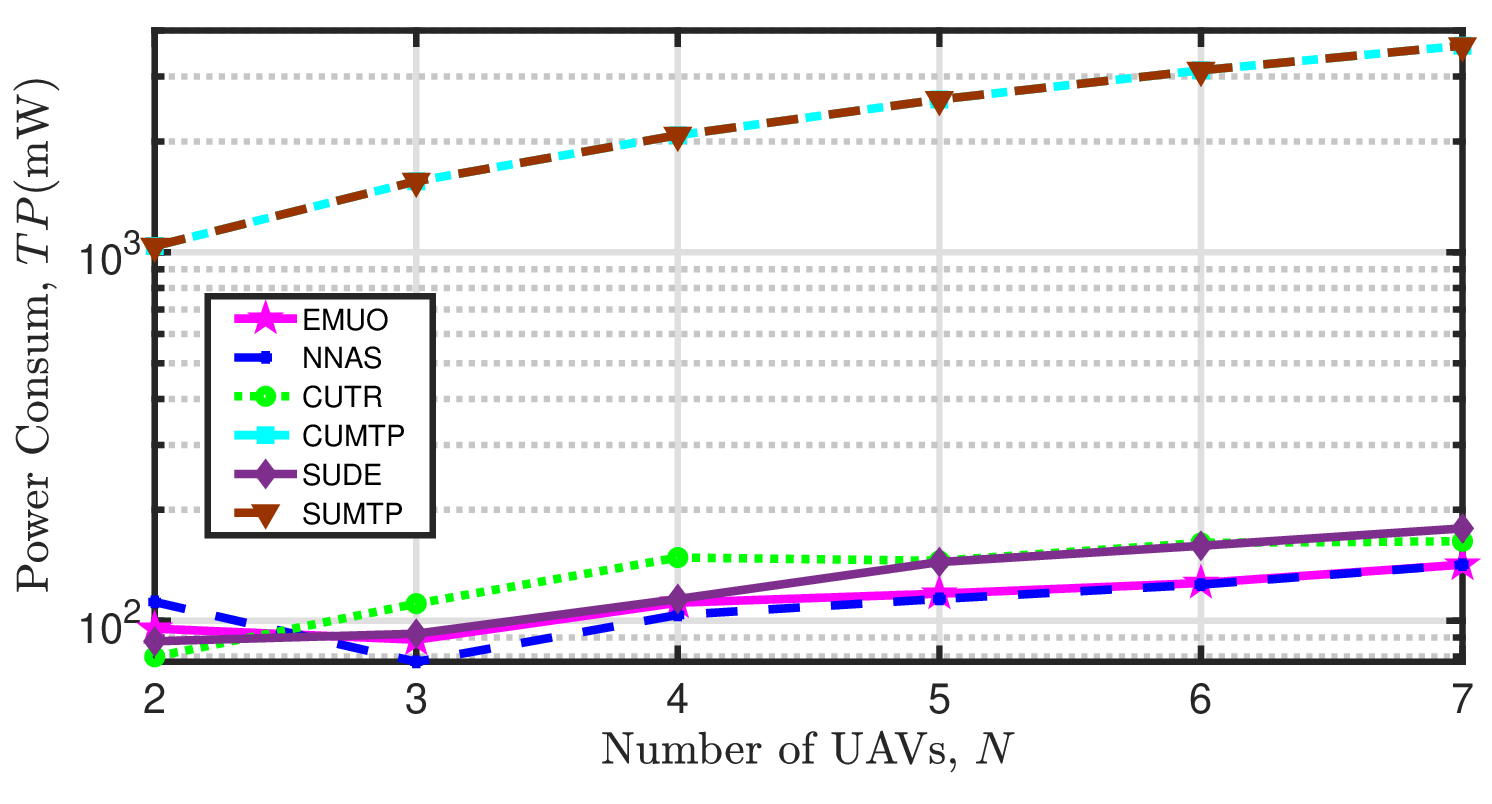}}
  \hspace{9pt}
    \subfigure[Total power consumption vs. number of subscribers]{
    \label{fig:subfig:b} 
    \includegraphics[width=3.2 in, height = 2.0 in]{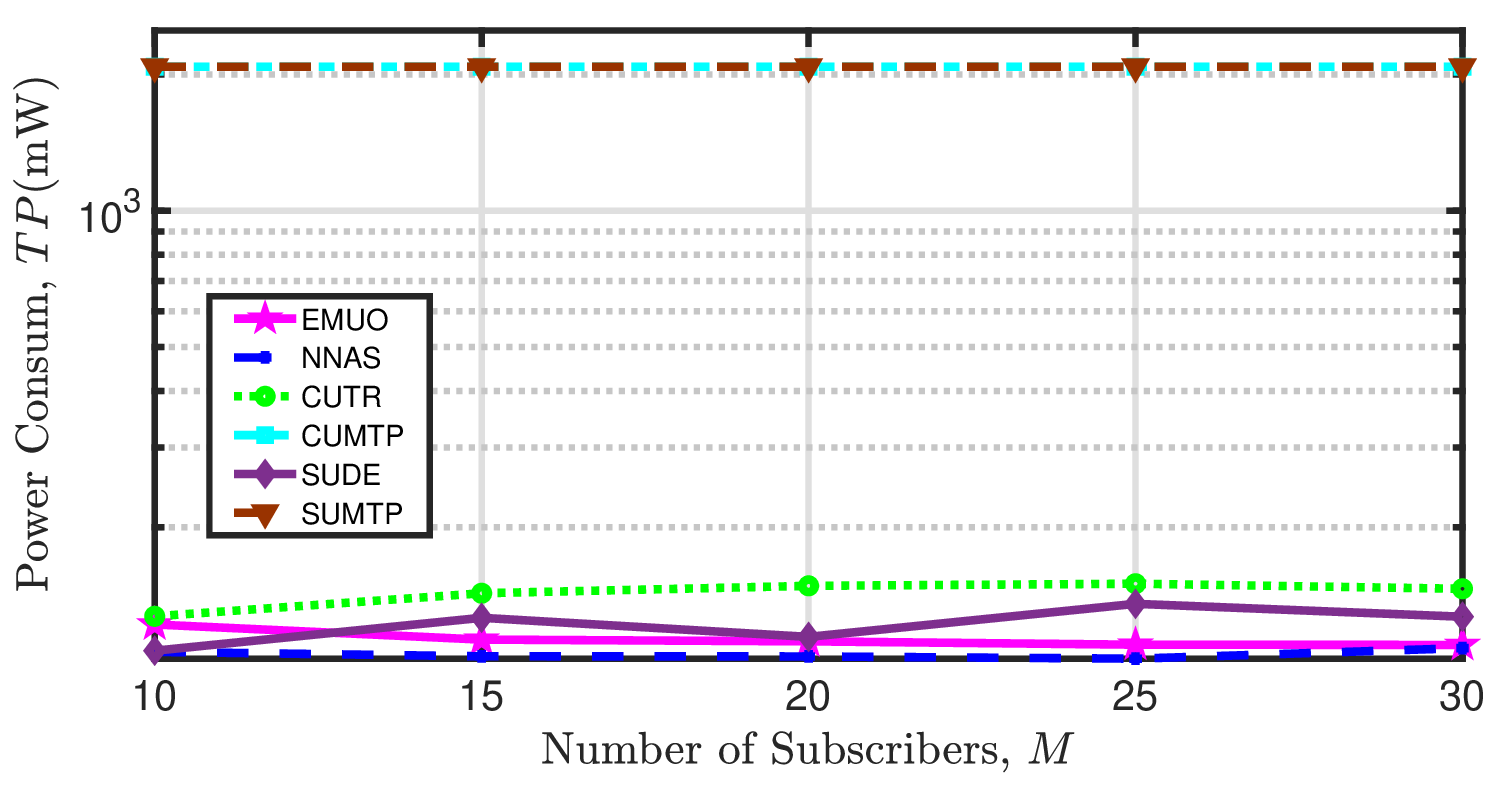}}
    \caption{Total power consumption of all comparison algorithms.}
\label{fig_total_power_consumption}
\end{figure}

Second, the influence of the number of UAVs and the number of subscribers on the achieved energy efficiency of all comparison algorithms is depicted in Fig. \ref{fig_Energy_efficiency}. 
From this figure, we can have the following observations: 
1) Given any UAV number, the  energy efficiency achieved by all comparison algorithms decrease with an increasing number of subscribers mainly owing to the rapid reduction in their achieved QoE. 
2) When the number of UAVs equals two and the number of subscribers is fifteen, the achieved energy efficiency of EMUO is slightly smaller than that of CUTR. 
3) Except for the above case, EMUO is the most energy-efficient algorithm. For instance, the energy efficiency of the multi-UAV network can be improved by up to $66.75\%$ by running EMUO. 
4) Except for the two maximum-power-based algorithms, NNAS obtains the smallest energy efficiency. It is interesting to observe that CUTR is energy-efficient than SUDE when $N \le 3$ and $M \le 15$, and the performance of the two comparison algorithms is reversed when $N \ge 4$ and $M \ge 20$. In this case, its achieved energy efficiency can be greater than that of the proposed EMUO. Therefore, the above results recommend the following UAV trajectory design scheme: considering the simple UAV movement control, it may be a good choice to deploy UAVs with circular trajectories when the number of UAVs is small and subscribers are sparsely distributed in the considered area.
\begin{figure}[!t]
\centering
\includegraphics[width=3.2 in, height = 1.6 in]{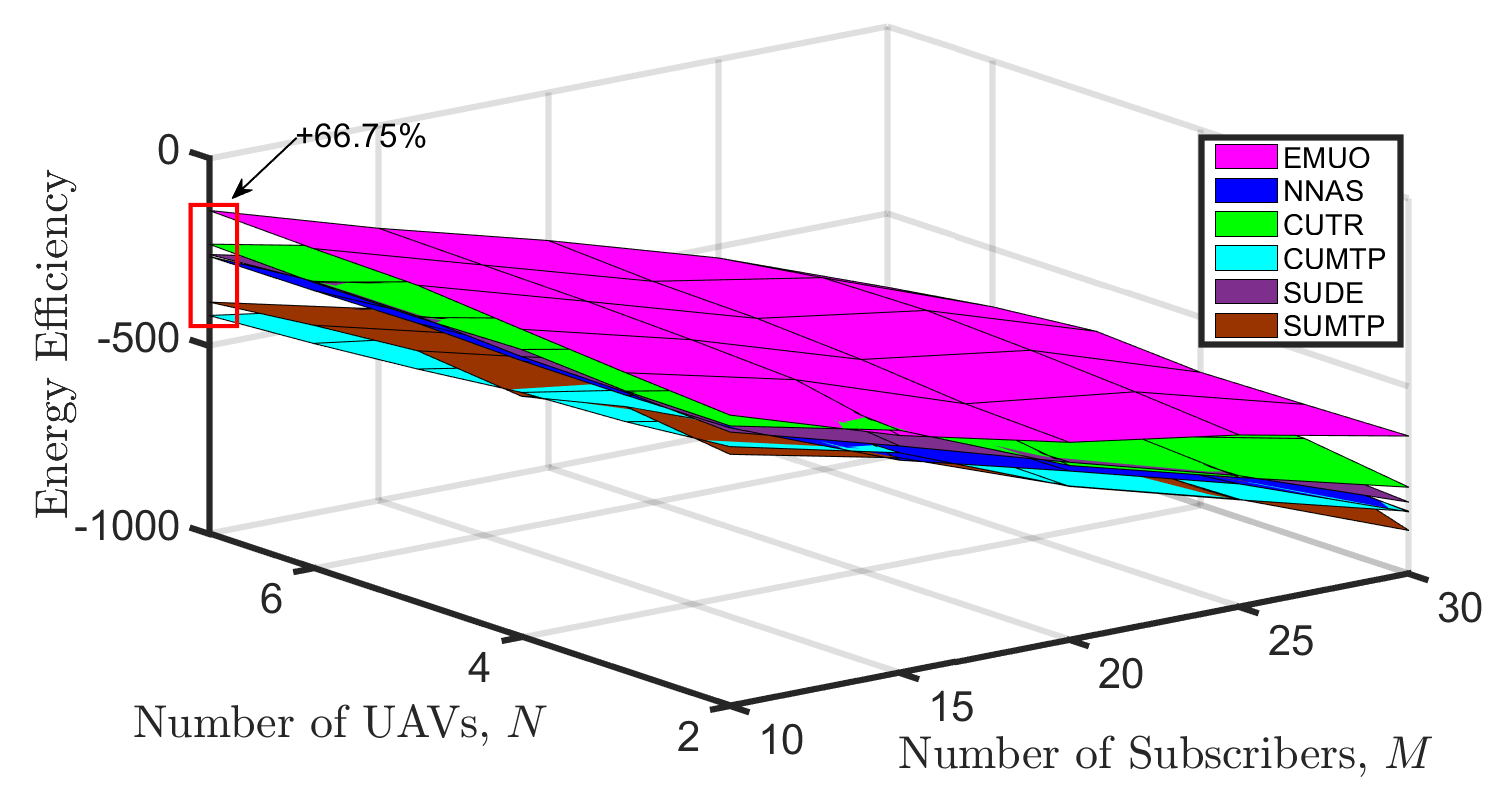}
\caption{Energy efficiency vs. the number of subscribers and UAVs.}
\label{fig_Energy_efficiency}
\end{figure}
{
\subsubsection{Results of fairness}
Finally, we verify the fairness of the proposed algorithm by comparing all algorithms. To this end, the popular Jain’s fairness index is taken as the evaluation indicator, which can be calculated by ${{RF = {{\left( {\sum\limits_{i = 1}^M {\frac{{{{\bar r}_i}\left( T \right)}}{{{R_i}}}} } \right)}^2}} \mathord{\left/ {\vphantom {{RF = {{\left( {\sum\limits_{i = 1}^M {\frac{{{{\bar r}_i}\left( T \right)}}{{{R_i}}}} } \right)}^2}} {M{{\sum\limits_{i = 1}^M {\left( {\frac{{{{\bar r}_i}\left( T \right)}}{{{R_i}}}} \right)} }^2}}}} \right. \kern-\nulldelimiterspace} {M{{\sum\limits_{i = 1}^M {\left( {\frac{{{{\bar r}_i}\left( T \right)}}{{{R_i}}}} \right)} }^2}}}$.
\begin{figure}[!t]
\centering
\includegraphics[width=3.2 in, height = 1.6 in]{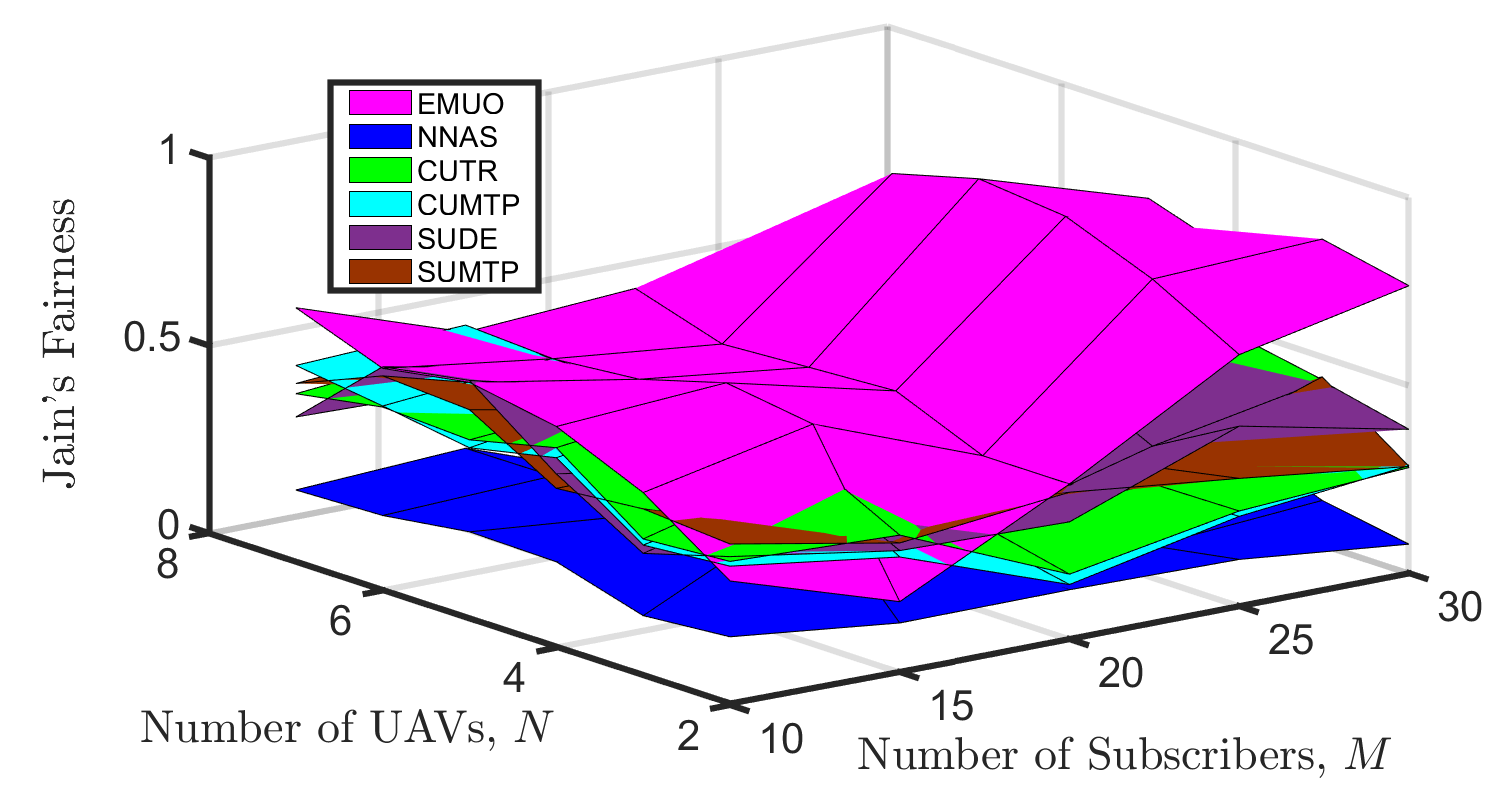}
\caption{{Jain's fairness vs. the number of subscribers and UAVs.}}
\label{fig_fairness}
\end{figure}
The achieved fairness indexes of all comparison algorithms when varying the number of UAVs and the number of subscribers are shown in Fig. \ref{fig_fairness}. 
The following observations can be obtained from this figure:
1) Compared with other algorithms, the multi-UAV network can provide fairer video transmission services for subscribers by running EMUO when $N >3$ and $M >15$. When $N \le 3$ and $M \le 15$, EMUO is overwhelmed by CUTR regarding service fairness. 
2) As explained above, the option of maximizing the UAV transmit power will not result in higher service fairness. 
3) Likewise, the achieved Jain’s fairness index of CUTR is larger than that of SUDE when $N \le 3$ and $M \le 15$, and SUDE can achieve a higher fairness value than CUTR when $N \ge 4$ and $M \ge 20$. 4) NNAS undermines the fairness brought by the UAV trajectory optimization and obtains the smallest fairness value. 
}

In summary, when there are few UAVs and subscribers are sparsely distributed, deploying UAVs with circular trajectories can be an acceptable proposal to achieve energy-efficient video transmission. On the contrary, when there are more UAVs or subscribers, a joint UAV trajectory, transmit power, and serving UAV selection optimization scheme for video transmission will be more energy-efficient.


\section{Conclusion}
In this paper, we investigated the issue of energy-efficient multi-UAV network optimization for subscribers' QoE-driven video transmission. In particular, considering the time-varying characteristic of the multi-UAV network and the differentiated requirements of subscribers, we first designed a novel QoE model simultaneously considering the video bitrate, the latency, and the frame freezing. 
Next, we formulated the UAV video transmission problem as a sequential-decision problem to maximize the QoE of subscribers and minimize the total power consumption of the multi-UAV network through the joint optimization and control of UAV transmit power, serving UAV selection, and UAV trajectories. 
Further, we developed an energy-efficient multi-UAV network optimization algorithm with provable performance guarantees to solve the challenging problem by introducing the key ideas of Lyapunov optimization, iterative optimization, and Taylor expansions. 
Finally, extensive simulation results showed that the proposed algorithm can effectively improve the QoE of subscribers and is 66.75\% more energy-efficient than benchmarks.
Although we create a QoE model incorporating the video bitrate, the latency, and the
frame freezing, the modeling of subscriber’s playback buffer state is not investigated in this paper. How to integrate the buffer state into the QoE model is challenging and deserves further research in the near future.


%

\bibliographystyle{IEEEtran}
\bibliography{Globecom_RAN_slicing}

\newpage

\appendices
\section{Proof of Proposition \ref{propo_1}}
Firstly, we can rewrite the LHS expression of (\ref{eq:transmission_power_transform_problem}b) as
\begin{equation}\label{eq:transmission_power_constraint_problem}
	\begin{array}{l}
		\sum\limits_{k \in {\mathcal K}} {{c_{ik}}\left( t \right){{\log }_2}\left( {\frac{{{\sigma ^2} + {p_k}\left( t \right){h_{ik}}\left( t \right){\rm{ + }}\sum\limits_{j \in {\mathcal K}\backslash \left\{ k \right\}} {{p_j}\left( t \right){h_{ij}}\left( t \right)} }}{{{\sigma ^2}{\rm{ + }}\sum\limits_{j \in {\mathcal K}\backslash \left\{ k \right\}} {{p_j}\left( t \right){h_{ij}}\left( t \right)} }}} \right)}  \\
		= \sum\limits_{k \in {\mathcal K}} {{c_{ik}}\left( t \right)\left\{ {\underbrace {{{\log }_2}\left( {{\sigma ^2} + \sum\limits_{j \in {\mathcal K}} {{p_j}\left( t \right){h_{ij}}\left( t \right)} } \right)}_{\rm{concave \text{ } function}}} \right.} \\
		\left. {\underbrace { - {{\log }_2}\left( {{\sigma ^2}{\rm{ + }}\sum\limits_{j \in {\mathcal K}\backslash \left\{ k \right\}} {{p_j}\left( t \right){h_{ij}}\left( t \right)} } \right)}_{\rm{convex \text{ } function}}} \right\}.
	\end{array}
\end{equation}

(\ref{eq:transmission_power_constraint_problem}) is the sum of a concave function and a convex function and is non-concave. We then explore a successive convex approximate (SCA) method to handle the non-convexity of the corresponding constraint (\ref{eq:transmission_power_transform_problem}b). The core idea of SCA is to obtain a local approximate solution of a non-convex original problem by iteratively solving a series of convex optimization problems  with different initial points. Since these local convex problems have the similar geometric characteristics as the non-convex problem, they can be regarded as the approximation of the non-convex problem at the given local points. 

Next, we discuss how to obtain the approximation function of non-concave function. Based on the following assumption, we can obtain the approximation function. 
\begin{assumpt}\label{assumption_1}
	\rm A function $\tilde f:{\mathcal X} \times {\mathcal Y} \to {\mathbb R}$ is assumed to be the approximation
	function of the non-concave function $f(x)$ ($x \in {\mathcal X}$), when the following conditions hold \cite{scutari2017parallel}
	\begin{itemize}
		\item $\tilde f(\cdot, \cdot)$ is continuous in ${\mathcal X} \times {\mathcal Y}$.
		\item $\tilde f(\cdot, x^{(r)})$ is concave in ${\mathcal X}$ for all $x^{(r)} \in {\mathcal Y}$.
		\item Function value consistency: $\tilde f(x, x) = f(x)$ for all $x \in {\mathcal X}$.
		\item Gradient consistency: $\frac{{\partial \tilde f(x,{x^{(r)}})}}{{\partial x}}{|_{x = {x^{(r)}}}} = \nabla f(x){|_{x = {x^{(r)}}}}$ for all $x^{(r)} \in {\mathcal Y}$.
		\item Lower bound: $f(x) \ge \tilde f(x, x^{(r)})$ for all $x \in {\mathcal X}$, $x^{(r)} \in {\mathcal Y}$.
	\end{itemize}
\end{assumpt}

In other words, we can approximate an original function with its lower-bound that has the same first-order behavior at each iteration. This is what the SCA method does.

Based on the principle of SCA method, let ${\mathord{\buildrel{\lower3pt\hbox{$\scriptscriptstyle\smile$}} 
		\over \Lambda } _{ik}}\left( t \right) =  - {\log _2}\left( {{\sigma ^2}{\rm{ + }}\sum\limits_{j \in {\mathcal K}\backslash \left\{ k \right\}} {{p_j}\left( t \right){h_{ij}}\left( t \right)} } \right)$, and for any local point $p_j^{\left( r \right)}\left( t \right)$, which denotes the transmit power of the $j$-th UAV at the $r$-th iteration when exploring the SCA method, we can obtain the lower bound of ${\mathord{\buildrel{\lower3pt\hbox{$\scriptscriptstyle\smile$}} 
		\over \Lambda } _{ik}}\left( t \right)$  by the first-order Taylor expansion, i.e., 
\begin{equation}\label{eq:transmission_power_sca}
	\begin{array}{l}
		{{\mathord{\buildrel{\lower3pt\hbox{$\scriptscriptstyle\smile$}} 
					\over \Lambda } }_{ik}}\left( t \right) \ge  - {\log _2}\left( {{\sigma ^2}{\rm{ + }}\sum\limits_{j \in {\mathcal K}\backslash \left\{ k \right\}} {p_j^{\left( r \right)}\left( t \right){h_{ij}}\left( t \right)} } \right)  \\
		- \sum\limits_{j \in {\mathcal K}\backslash \left\{ k \right\}} {\frac{{{h_{ij}}\left( t \right)}}{{\left( {{\sigma ^2}{\rm{ + }}\sum\limits_{j \in {\mathcal K}\backslash \left\{ k \right\}} {p_j^{\left( r \right)}\left( t \right){h_{ij}}\left( t \right)} } \right)\ln 2}}\left( {{p_j}\left( t \right) - p_j^{\left( r \right)}\left( t \right)} \right)} \\
		=  - F_{ik}^{\left( r \right)}\left( t \right) - \sum\limits_{j \in {\mathcal K}\backslash \left\{ k \right\}} {G_{ik}^{\left( r \right)}\left( t \right)\left( {{p_j}\left( t \right) - p_j^{\left( r \right)}\left( t \right)} \right)} 
	\end{array}
\end{equation}

Consequently, the constraint (\ref{eq:transmission_power_transform_problem}b) can be {approximated} as (\ref{eq:transmission_power_constraint_simplify}).
This completes the proof.

\section{Proof of Lemma \ref{lemma:lemma_transmission_power_optimization}}
Without loss of generality, we assume that the video access service of subscriber $i$ is relayed by a UAV, (\ref{eq:transmission_power_constraint_simplify}) can then be simplified as 
\begin{equation}\label{eq:transmission_power_mosek_b}
	\begin{array}{l}
		{{\hat \Lambda }_{i}}\left( t \right) - F_{ik}^{\left( r \right)}\left( t \right) - \sum\limits_{j \in {\mathcal K}\backslash \left\{ k \right\}} {G_{ik}^{\left( r \right)}\left( t \right)\left( {{p_j}\left( t \right) - p_j^{\left( r \right)}\left( t \right)} \right)} \\
		\ge {\eta _i}\left( t \right),\forall i, t
	\end{array}
\end{equation}

Considering that ${\hat \Lambda _{i}}\left( t \right) = {\log _2}\left( {{\sigma ^2} + \sum\limits_{j \in {\mathcal K}} {{p_j}\left( t \right){h_{ij}}\left( t \right)} } \right)$ is a complex logarithmic function, we then introduce a family of auxiliary variables to transform (\ref{eq:transmission_power_mosek_b})
into convex cones. In particular, (\ref{eq:transmission_power_mosek_b}) can be rewritten as
\begin{equation}\label{eq:transmission_power_mosek_b_exponent}
	\begin{array}{l}
		{\sigma ^2} + \sum\limits_{j \in {\mathcal K}} {{p_j}\left( t \right){h_{ij}}\left( t \right)}  \ge \\
		{e^{\left( {{\eta _i}\left( t \right) + F_{ik}^{\left( r \right)}\left( t \right) + \sum\limits_{j \in {\mathcal K}\backslash \left\{ k \right\}} {G_{ik}^{\left( r \right)}\left( t \right)\left( {{p_j}\left( t \right) - p_j^{\left( r \right)}\left( t \right)} \right)} } \right)\ln 2}}
	\end{array}
\end{equation}

To simplify this expression, we introduce three auxiliary variables $Z_i^1$, $Z_i^2$, $Z_i^3$ ($\forall i \in {\mathcal{I}}$), and let 
\begin{subequations}\label{eq:transmission_power_mosek}
	\begin{alignat}{2}
		&Z_i^1 = {\sigma ^2} + \sum\limits_{j \in {\mathcal K}} {{p_j}\left( t \right){h_{ij}}\left( t \right)} \\
		&Z_i^2 = 1\\
		&Z_i^3 = \left( {{\eta _i}\left( t \right) + F_{ik}^{\left( r \right)}\left( t \right) + } \right.\notag\\
		&\left. {\sum\limits_{j \in {\mathcal K}\backslash \left\{ k \right\}} {G_{ik}^{\left( r \right)}\left( t \right)\left( {{p_j}\left( t \right) - p_j^{\left( r \right)}\left( t \right)} \right)} } \right)\ln 2
	\end{alignat}
\end{subequations}

It turns out that (\ref{eq:transmission_power_mosek}a)-(\ref{eq:transmission_power_mosek}c) are linear constraints. 
According to the standard form of an exponential cone ${K_{\exp }} = \left\{ {x \in {R^3}:{x_1} \ge {x_2}\exp ({{{x_3}} \mathord{\left/
			{\vphantom {{{x_3}} {{x_2}}}} \right.
			\kern-\nulldelimiterspace} {{x_2}}}),{x_1},{x_2} \ge 0} \right\}$, (\ref{eq:transmission_power_mosek_b_exponent}) can take a different form
\begin{equation}\label{eq:transmission_power_mosek_b_exponential_cone}
	\begin{array}{l}
		\left( {Z_i^1,Z_i^2,Z_i^3} \right) \in {K_{\exp }}
	\end{array}
\end{equation}

Similarly, we can rewrite (\ref{eq:latency_related_transform_problem}b) into the following expression of a standard rotated quadratic cone 
\begin{equation}\label{eq:transmission_power_mosek_d_rotated_cone}
	\begin{array}{l}
		\left( {{\eta _i}\left( t \right),{\xi _i}\left( t \right),\sqrt {{{2L} \mathord{\left/
						{\vphantom {{2L} B}} \right.
						\kern-\nulldelimiterspace} B}} } \right) \in Q_r^3,\forall i
	\end{array}
\end{equation}
where, $Q_r^n = \left\{ {x \in {R^n}:2{x_1}{x_2} \ge \sum\limits_{j = 3}^n {x_j^2} ,{x_1} \ge 0,{x_2} \ge 0} \right\}$.

Based on {the approximation of (\ref{eq:transmission_power_transform_problem}b) as (\ref{eq:transmission_power_constraint_simplify})} and the above derivation, we can approximately transform {(\ref{eq:transmission_power_problem})} into the standard conic problem (\ref{eq:transmission_power_slack_problem_mosek}). 

{Further, the approximation result in Proposition \ref{propo_1} indicates that the feasible region of (\ref{eq:transmission_power_slack_problem_mosek}) is a subset of that of (\ref{eq:transmission_power_problem}). Therefore, the opposite value of the maximum value of (\ref{eq:transmission_power_slack_problem_mosek}a) is the upper bound of the optimal objective value of (\ref{eq:transmission_power_problem}).}
This completes the proof. 

\section{Proof of Proposition \ref{propo_2}}
By substituting (\ref{eq:h_ik_t_new}) into the constraint (\ref{eq:trajectories_slack_problem}b), we obtain
\begin{equation}\label{eq:trajectories_problem_constraint_transform}
	\begin{array}{l}
		\sum\limits_{k \in {\mathcal K}} {{c_{ik}}\left( t \right){{\log }_2}\left( {\frac{{{\sigma ^2} + {p_k}\left( t \right){h_{ik}}\left( t \right){\rm{ + }}\sum\limits_{j \in {\mathcal K}\backslash \left\{ k \right\}} {{p_j}\left( t \right){h_{ij}}\left( t \right)} }}{{{\sigma ^2}{\rm{ + }}\sum\limits_{j \in {\mathcal K}\backslash \left\{ k \right\}} {{p_j}\left( t \right){h_{ij}}\left( t \right)} }}} \right)} \\
		= \sum\limits_{k \in {\mathcal K}} {{c_{ik}}\left( t \right){{\log }_2}\left( {{\sigma ^2} + \sum\limits_{j \in {\mathcal K}} {\frac{{{p_j}\left( t \right){\omega _{ij}}}}{{{H^2} + {{\left\| {\bm{q_j} \left( t \right) - \bm{s_i}\left( t \right)} \right\|}^2}}}} } \right)} \\
		- \sum\limits_{k \in {\mathcal K}} {{c_{ik}}\left( t \right){{\log }_2}\left( {{\sigma ^2} + \sum\limits_{j \in {\mathcal K}\backslash \left\{ k \right\}} {\frac{{{p_j}\left( t \right){\omega _{ij}}}}{{{H^2} + {{\left\| {\bm{q_j} \left( t \right) - \bm{s_i}\left( t \right)} \right\|}^2}}}} } \right)} \\
		\ge {\eta _i}\left( t \right),\forall i,t
	\end{array}
\end{equation}

However, (\ref{eq:trajectories_problem_constraint_transform}) is not a convex constraint w.r.t ${\bm{q_j} \left( t \right)}$. Fortunately, ${\mathord{\buildrel{\lower3pt\hbox{$\scriptscriptstyle\smile$}} 
		\over \Lambda } _{ik}}\left( t \right) =  - {\log _2}\left( {{\sigma ^2} + \sum\limits_{j \in {\mathcal K}\backslash \left\{ k \right\}} {\frac{{{p_j}\left( t \right){\omega _{ij}}}}{{{H^2} + {{\left\| {\bm{q_j} \left( t \right) - \bm{s_i}\left( t \right)} \right\|}^2}}}} } \right)$ is concave w.r.t ${\left\| {\bm{q_j} \left( t \right) - \bm{s_i}\left( t \right)} \right\|^2}$. 
Therefore, a slack variable ${B_{ij}}\left( t \right) \le {\left\| {\bm{q_j} \left( t \right) - \bm{s_i}\left( t \right)} \right\|^2},\forall i,j \ne k,t$, can be introduced to alleviate the problem.
The LHS expression of (\ref{eq:trajectories_problem_constraint_transform}) can be rewritten as
\begin{equation}\label{eq:trajectories_problem_constraint_transform_slack}
	\begin{array}{*{20}{l}}
		{\underbrace {\sum\limits_{k \in {\cal K}} {{c_{ik}}\left( t \right){{\log }_2}\left( {{\sigma ^2} + \sum\limits_{j \in {\cal K}} {\frac{{{p_j}\left( t \right){\omega _{ij}}}}{{{H^2} + {{\left\| {{\bm q_j}\left( t \right) - {\rm{ }}{\bm s_i}\left( t \right)} \right\|}^2}}}} } \right)} }_{{\rm{convex \text{ } function}}}}\\
		{\underbrace { - \sum\limits_{k \in {\cal K}} {{c_{ik}}\left( t \right){{\log }_2}\left( {{\sigma ^2} + {\sum _{j \in {\cal K}\backslash \left\{ k \right\}}}\frac{{{p_j}\left( t \right){\omega _{ij}}}}{{{H^2} + {B_{ij}}\left( t \right)}}} \right)} }_{{\rm{concave \text{ } function}}}}
	\end{array}
\end{equation}

Observe that the above expression is a sum of a convex function and a concave function. 
To tackle the non-convex constraint, we explore the SCA method. 
Let ${\hat \Lambda _{i}}\left( t \right) = {\log _2}\left( {{\sigma ^2} + \sum\limits_{j \in {\mathcal K}} {\frac{{{p_j}\left( t \right){\omega _{ij}}}}{{{H^2} + {{\left\| {\bm{q_j} \left( t \right) - \bm{s_i}\left( t \right)} \right\|}^2}}}} } \right)$.
For any given local point $\bm{q_j}^{\left( r \right)}\left( t \right)$, which denotes the 2D horizontal location of the $j$-th UAV at the $r$-th iteration of the SCA method, we can obtain the lower bound of ${\hat \Lambda _{i}}\left( t \right)$ via conducting the first-order Taylor expansion, i.e., 
\begin{equation}\label{eq:trajectories_problem_constraint_sca}
	\begin{array}{l}
		{{\hat \Lambda }_{i}}\left( t \right) \ge {\log _2}\left( {{\sigma ^2} + \sum\limits_{j \in {\mathcal K}} {\frac{{{p_j}\left( t \right){\omega _{ij}}}}{{{H^2} + {{\left\| {q_j^{\left( r \right)}\left( t \right) - \bm{s_i}\left( t \right)} \right\|}^2}}}} } \right)\\
		- \sum\limits_{j \in {\mathcal K}} {\frac{{\frac{{{p_j}\left( t \right){\omega _{ij}}}}{{{{\left( {{H^2} + {{\left\| {\bm{q_j}^{\left( r \right)}\left( t \right) - \bm{s_i}\left( t \right)} \right\|}^2}} \right)}^2}}}\left( {{{\left\| {\bm{q_j} \left( t \right) - \bm{s_i}\left( t \right)} \right\|}^2} - {{\left\| {\bm{q_j}^{\left( r \right)}\left( t \right) - \bm{s_i}\left( t \right)} \right\|}^2}} \right)}}{{\left( {{\sigma ^2} + \sum\limits_{j \in {\mathcal K}} {\frac{{{p_j}\left( t \right){\omega _{ij}}}}{{{H^2} + {{\left\| {\bm{q_j}^{\left( r \right)}\left( t \right) - \bm{s_i}\left( t \right)} \right\|}^2}}}} } \right)\ln 2}}} \\
		= D_i^{\left( r \right)}\left( t \right) - \sum\limits_{j \in {\mathcal K}} {E_{ij}^{\left( r \right)}\left( t \right)\left( {{{\left\| {\bm{q_j}\left( t \right) - \bm{s_i}\left( t \right)} \right\|}^2}} \right.} \\
		\left. { - {{\left\| {{\bm q_j}^{\left( r \right)}\left( t \right) - \bm{s_i}\left( t \right)} \right\|}^2}} \right) 
	\end{array}
\end{equation}

Hence, the constraint (\ref{eq:trajectories_slack_problem}b) can be approximated as (\ref{eq:trajectories_problem_constraint_final}), which is a convex constraint.

Additionally, for the introduced non-convex inequality ${B_{ij}}\left( t \right) \le {\left\| {\bm{q_j} \left( t \right) - \bm{s_i}\left( t \right)} \right\|^2}$ , we can calculate the lower bound of its RHS term by 
\begin{equation}\label{eq:slack_variable_sca_proof}
	\begin{array}{l}
		{\left\| {\bm{q_j}\left( t \right) - \bm{s_i}\left( t \right)} \right\|^2} \ge {\left\| {\bm{q_j}^{\left( r \right)}\left( t \right) - \bm{s_i}\left( t \right)} \right\|^2} + \\
		2{\left( {\bm{q_j}^{\left( r \right)}\left( t \right) - \bm{s_i}\left( t \right)} \right)^{\rm{T}}}\left( {\bm{q_j}\left( t \right) - \bm{s_i}\left( t \right)} \right)
	\end{array}
\end{equation}

With (\ref{eq:slack_variable_sca_proof}), we can obtain (\ref{eq:slack_variable_sca}), and it can be observed that (\ref{eq:slack_variable_sca}) is a linear constraint. This completes the proof.

\section{Proof of Lemma \ref{lemma:lemma_trajectories_optimization}}
In this appendix, we discuss how to transform some complex constraints into standard convex cones by introducing some slack variables and prove the equivalence of the transformation.
For the complex constraint (\ref{eq:trajectories_problem_constraint_final}), we introduce a variable ${\zeta _{ij}}\left( t \right)$ to slack the Euclidean norm in it and let
\begin{equation}\label{eq:lemma_trajectory_slack_1}
	\begin{array}{l}
		{\left\| {\bm{q_j}\left( t \right) - \bm{s_i}\left( t \right)} \right\|^2} = {\left( {{x_j^{\left( u \right)}}\left( t \right) - {x_i^{\left( s \right)}}\left( t \right)} \right)^2} \\
		+ {\left( {{y_j^{\left( u \right)}}\left( t \right) - {y_i^{\left( s \right)}}\left( t \right)} \right)^2}
		\le {\zeta _{ij}}\left( t \right),\forall i, j
	\end{array}
\end{equation}

By referring to the similar proof in Appendix B, (\ref{eq:lemma_trajectory_slack_1}) can be rewritten as the following rotated quadratic cone constraint. 
\begin{equation}\label{eq:lemma_trajectory_slack_1_transform_cone}
	\begin{array}{l}
		\left( {{\zeta _{ij}}\left( t \right),\frac{1}{2},{x_j^{\left( u \right)}}\left( t \right) - {x_i^{\left( s \right)}}\left( t \right),{y_j^{\left( u \right)}}\left( t \right) - {y_i^{\left( s \right)}}\left( t \right)} \right) \in Q_r^4,\forall i, j
	\end{array}
\end{equation}

Next, we should prove the equivalence of exploring the slack variable scheme. In particular, we should prove that (\ref{eq:trajectories_problem_constraint_final}) is equivalent to (\ref{eq:lemma_trajectory_slack_1}) and (\ref{eq:new}).
\begin{equation}\label{eq:new}
	\begin{array}{l}
		\sum\limits_{k \in {\mathcal K}} {{c_{ik}}\left( t \right)} \left( {D_i^{\left( r \right)}\left( t \right) - \sum\limits_{j \in {\mathcal K}} {E_{ij}^{\left( r \right)}\left( t \right)\left( {{\zeta _{ij}}\left( t \right) - } \right.} } \right.\\
		\left. {\left. {{{\left\| {\bm{q_j}^{\left( r \right)}\left( t \right) - \bm{s_i}\left( t \right)} \right\|}^2}} \right)} \right) + \sum\limits_{k \in {\mathcal K}} {{c_{ik}}\left( t \right){{\mathord{\tilde 
						\Lambda } }_{ik}}\left( t \right)}  \ge {\eta _i}\left( t \right),\forall i,t
	\end{array}
\end{equation}

For (\ref{eq:trajectories_problem_constraint_final}), if the constraint (\ref{eq:lemma_trajectory_slack_1}) is active, it is not hard to know that the feasible regions generated by (\ref{eq:lemma_trajectory_slack_1}) and (\ref{eq:new}) are the same as (\ref{eq:trajectories_problem_constraint_final}). 
On the contrary, when (\ref{eq:trajectories_problem}) is optimized under the constraints of (\ref{eq:lemma_trajectory_slack_1}) and (\ref{eq:new}), if there is a UAV $j$ or a subscriber $i$ such that (\ref{eq:lemma_trajectory_slack_1}) is non-active, we can always decrease $\zeta_{ij}(t)$ towards $||{\bm q}_j(t) - {\bm s}_i(t)||^2$ without changing the value of (\ref{eq:trajectories_problem}a) and violating (\ref{eq:new}). Therefore, (\ref{eq:trajectories_problem_constraint_final}) is equivalent to (\ref{eq:lemma_trajectory_slack_1}) and (\ref{eq:new}).

Considering that the term ${\mathord{\tilde
		\Lambda } _{ik}}\left( t \right)$ in (\ref{eq:trajectories_problem_constraint_final}) consists of a complex logarithmic function, we introduce a slack variable ${\varphi _{ik}}\left( t \right)$ to handle it and let
\begin{equation}\label{eq:lemma_trajectory_slack_2}
	\begin{array}{l}
		{\log _2}\left( {{\sigma ^2} + \sum\limits_{j \in {\mathcal K}\backslash \left\{ k \right\}} {\frac{{{p_j}\left( t \right){\omega _{ij}}}}{{{H^2} + {B_{ij}}\left( t \right)}}} } \right) \le {\varphi _{ik}}\left( t \right)
	\end{array}
\end{equation}

Further, by introducing the variables $v$ and ${{m_{ij}}\left( t \right)}$ 
and letting their exponential functions
${e^v} = {\sigma ^2}$ and ${e^{{m_{ij}}\left( t \right)}} = \frac{{{p_j}\left( t \right){\omega _{ij}}}}{{{H^2} + {B_{ij}}\left( t \right)}}$,
we can rewrite (\ref{eq:lemma_trajectory_slack_2}) as the following constraint
\begin{equation}\label{eq:lemma_trajectory_slack_2_transform}
	\begin{array}{l}
		{\varphi _{ik}}\left( t \right)\ln 2 \ge \ln \left( {{e^v} + \sum\limits_{j \in {\mathcal K}\backslash \left\{ k \right\}} {{e^{{m_{ij}}\left( t \right)}}} } \right)
	\end{array}
\end{equation}

As the following equivalent transformation holds,
\begin{equation}\label{eq:lemma_standard_exponential_cone}
	\begin{array}{l}
		t \ge \ln \left( {\sum\limits_{j = 1}^n {{e^{{x_j}}}} } \right) \Leftrightarrow \left\{ {\begin{array}{*{20}{c}}
				{\sum\limits_{j = 1}^n {{\mu _j} \le 1} }\\
				{\left( {{\mu _j},1,{x_j} - t} \right) \in {K_{\exp }},\forall j}
		\end{array}} \right.
	\end{array}
\end{equation}
we can transform (\ref{eq:lemma_trajectory_slack_2_transform}) into the following set of convex constraints 
\begin{subequations}\label{eq:lemma_trajectory_slack_2_transform_cone}
	\begin{alignat}{2}
		&\mu  + \sum\limits_{j \in {\mathcal K}\backslash \left\{ k \right\}} {{\mu _{ij}}\left( t \right) \le 1} ,\forall i\\
		&\left\{ {\begin{array}{*{20}{c}}
				{\left( {\mu ,1,v - {\varphi _{ik}}\left( t \right)\ln 2} \right) \in {K_{\exp }} ,\forall i}\\
				{\left( {{\mu _{ij}}\left( t \right),1,{m_{ij}}\left( t \right) - {\varphi _{ik}}\left( t \right)\ln 2} \right) \in {K_{\exp }} ,\forall i, j \ne k}
		\end{array}} \right.
	\end{alignat}
\end{subequations}

Similarly, by introducing the slack variable ${f_{ij}}\left( t \right)$ and letting
\begin{equation}\label{eq:lemma_trajectory_slack_3}
	\begin{array}{l}
		{f_{ij}}\left( t \right) \ge \frac{{{H^2} + {B_{ij}}\left( t \right)}}{{{p_j}\left( t \right){\omega _{ij}}}} = {e^{ - {m_{ij}}\left( t \right)}} 
	\end{array}
\end{equation}
we can obtain the following exponential cone constraint
\begin{equation}\label{eq:lemma_trajectory_slack_3_cone}
	\begin{array}{l}
		\left( {{f_{ij}}\left( t \right),1, - {m_{ij}}\left( t \right)} \right) \in {K_{\exp }} 
	\end{array}
\end{equation}

Similar, the equivalence of introducing the slack variable $f_{ij}(t)$ can be guaranteed.

Finally, with
(\ref{eq:lemma_trajectory_slack_1}) and (\ref{eq:lemma_trajectory_slack_2}), the complex  (\ref{eq:trajectories_problem_constraint_final}) can be transformed into the following linear constraint.
\begin{equation}\label{eq:trajectories_problem_constraint_final_linear}
	\begin{array}{l}
		\sum\limits_{k \in {\mathcal K}} {{c_{ik}}\left( t \right)} \left( {D_i^{\left( r \right)}\left( t \right) - \sum\limits_{j \in {\mathcal K}} {E_{ij}^{\left( r \right)}\left( t \right)\left( {{\zeta _{ij}}\left( t \right) - } \right.} } \right.\\
		\left. {\left. {{{\left\| {q_j^{\left( r \right)}\left( t \right) - \bm{s_i}\left( t \right)} \right\|}^2}} \right)} \right) - \sum\limits_{k \in {\mathcal K}} {{c_{ik}}\left( t \right){\varphi _{ik}}\left( t \right)}  \ge {\eta _i}\left( t \right),\forall i,t
	\end{array}
\end{equation}

According to
(\ref{eq:minimum_safety_distance}) and  (\ref{eq:minimum_safety_distance_sca}), we can obtain the following linear constraint
\begin{equation}\label{eq:minimum_safety_distance_sca_linear}
	\begin{array}{l}
		- {\left\| {\bm {q_k}^{\left( r \right)}\left( t \right) - \bm {q_j}^{\left( r \right)}\left( t \right)} \right\|^2} + 2{\left( {\bm {q_k}^{\left( r \right)}\left( t \right) - \bm {q_j}^{\left( r \right)}\left( t \right)} \right)^{\rm{T}}}\\
		\left( {\bm{q_k}\left( t \right) - \bm{q_j}\left( t \right)} \right) \ge d_{\min }^2,\forall k,k \ne j,t
	\end{array}
\end{equation}

Besides, by referring to the standard expression of a quadratic cone, we can rewrite (\ref{eq:tan_theta}) as
\begin{equation}\label{eq:lemma_tan_theta_cone}
	\begin{array}{l}
		\left( {H{{\tan }^{ - 1}}\theta ,{x_k^{\left( u \right)}}\left( t \right) - {x_i^{\left( s \right)}}\left( t \right),{y_k^{\left( u \right)}}\left( t \right) - {y_i^{\left( s \right)}}\left( t \right)} \right) \\
		\in {Q^3},\forall i,k,t
	\end{array}
\end{equation}
where, ${Q^n} = \left\{ {x \in {R^n}:{x_1} \ge \sqrt {\sum\limits_{j = 2}^n {x_j^2} } } \right\}$ is a quadratic cone of ${{R^n}}$.

Likewise, (\ref{eq:maximum_flight_speed}) can be rewritten as the following quadratic cone constraint  \begin{equation}\label{eq:lemma_maximum_flight_speed_cone}
	\begin{array}{l}
		\left( {{s_{\max }},{x_k^{\left( u \right)}}\left( t \right) - {x_k^{\left( u \right)}}\left( {t - 1} \right),{y_k^{\left( u \right)}}\left( t \right) - {y_k^{\left( u \right)}}\left( {t - 1} \right)} \right) \\
		\in {Q^3},\forall k,t 
	\end{array}
\end{equation}

Based on the above derivation, we can approximately transform (\ref{eq:trajectories_problem}) into the standard conic problem (\ref{eq:trajectories_slack_problem_mosek}). 

{Besides, the lower bounds obtained in (\ref{eq:trajectories_problem_constraint_sca}) and (\ref{eq:slack_variable_sca_proof}) indicate that the feasible region of (\ref{eq:trajectories_slack_problem_mosek}) is a subset of that of (\ref{eq:trajectories_problem}). Therefore, the maximum value of (\ref{eq:trajectories_slack_problem_mosek}a) is the lower bound of the optimal objective value of (\ref{eq:trajectories_problem}).}
This completes the proof. 

\section{Proof of Lemma \ref{lemma:lemma_algorithm_convergence}}
Given a local point (${{\mathcal C}^{(r)}}(t)$, ${{\mathcal P}^{(r)}}(t)$, ${{\mathcal Q}^{(r)}}(t)$) at the ${r}$-th iteration, and denote the corresponding value of (\ref{eq:association_power_trajectories_optimization}a) at this point as $\Phi ({{\mathcal C}^{(r)}}(t),{{\mathcal P}^{(r)}}(t),{{\mathcal Q}^{(r)}}(t))$. By solving (\ref{eq:association_transform_problem}) we can obtain a solution ${{\mathcal C}^{(r+1)}}(t)$ such that $\Phi ({{\mathcal C}^{(r+1)}}(t),{{\mathcal P}^{(r)}}(t),{{\mathcal Q}^{(r)}}(t)) \le \Phi ({{\mathcal C}^{(r)}}(t),{{\mathcal P}^{(r)}}(t),{{\mathcal Q}^{(r)}}(t))$.
Given the local point (${{\mathcal C}^{(r+1)}}(t)$, ${{\mathcal P}^{(r)}}(t)$, ${{\mathcal Q}^{(r)}}(t)$), we can obtain an updated solution ${{\mathcal P}^{(r+1)}}(t)$ by optimizing (\ref{eq:transmission_power_slack_problem_mosek}) and have $\Phi ({{\mathcal C}^{(r+1)}}(t),{{\mathcal P}^{(r+1)}}(t),{{\mathcal Q}^{(r)}}(t)) \le \Phi ({{\mathcal C}^{(r+1)}}(t),{{\mathcal P}^{(r)}}(t),{{\mathcal Q}^{(r)}}(t))$. Similarly, given the updated local point (${{\mathcal C}^{(r+1)}}(t)$, ${{\mathcal P}^{(r+1)}}(t)$, ${{\mathcal Q}^{(r)}}(t)$), we can obtain a new local point (${{\mathcal C}^{(r+1)}}(t)$, ${{\mathcal P}^{(r+1)}}(t)$, ${{\mathcal Q}^{(r+1)}}(t)$) by optimizing (\ref{eq:trajectories_slack_problem_mosek}) at the ${(r+1)}$-th iteration and have $\Phi ({{\mathcal C}^{(r+1)}}(t),{{\mathcal P}^{(r+1)}}(t),{{\mathcal Q}^{(r+1)}}(t)) \le \Phi ({{\mathcal C}^{(r+1)}}(t),{{\mathcal P}^{(r+1)}}(t),{{\mathcal Q}^{(r)}}(t))$.
To this end, we can conclude that $\Phi ({{\mathcal C}^{(r+1)}}(t),{{\mathcal P}^{(r+1)}}(t),{{\mathcal Q}^{(r+1)}}(t)) \le \Phi ({{\mathcal C}^{(r)}}(t),{{\mathcal P}^{(r)}}(t),{{\mathcal Q}^{(r)}}(t))$.
Further, it can be confirmed that (\ref{eq:association_power_trajectories_optimization}a) is low-bounded. Therefore, the iterative optimization Algorithm \ref{alg:alg1} is convergent. This completes the proof.



\ifCLASSOPTIONcaptionsoff
  \newpage
\fi



%

\end{document}